\Crefname{ALC@unique}{Line}{Lines} % <- Preamble
\setlist[enumerate]{leftmargin=.5in}
\setlist[itemize]{leftmargin=.5in}
\crefname{hypothesis}{Hypothesis}{Hypotheses}
\crefname{fact}{Fact}{Facts}
\title{Nonparametric Bayesian Calibration of Computer Models\thanks{
\funding{H.~Shi acknowledges the support of the Natural Sciences and Engineering Research Council of Canada. L.~Yang acknowledges the support of the U.S.~Department of Energy and the U.S.~National Science Foundation. J.~Chi acknowledges the support of the U.S.~National Science Foundation. D.~Bingham acknowledges the support of the Natural Sciences and Engineering Research Council of Canada. T.~Butler's work is supported by the U.S.~National Science Foundation under Grant No.~DMS-2208460. D.~Estep acknowledges the support of  the Canada Research Chairs Program,  the Natural Sciences and Engineering Research Council of Canada, Canada's Digital Technology Superclustre, the Dynamics Research Corporation, the Idaho National Laboratory, the U.S.~Department of Energy, the U.S.~National Institutes of Health, the U.S.~National Science Foundation, and Riverside Research. H.~Wang's work is supported by the U.S.~National Science Foundation. Any opinion, finding, and conclusions or recommendations expressed in this material are those of the authors and do not necessarily reflect the views of supporting institutions. }}}
\author{Haiyi Shi\thanks{Department of Statistics and Actuarial Science, Simon Fraser University 
  (\email{haiyi\_shi@sfu.ca}).}
\and Lei Yang\thanks{Department of Statistics, Colorado State University 
  (\email{yangleicq@gmail.com}.}
\and Jiarui Chi\thanks{Biostatistics and Programming, Sanofi
	(\email{jiarui.chi21@alumni.colostate.edu}).}
\and Derek Bingham\thanks{Department of Statistics and Actuarial Science, Simon Fraser University 
	(\email{dbingham@sfu.ca}).}
\and Troy Butler\thanks{Department of Mathematics and Statistics, University of Colorado at Denver 
	(\email{troy.butler@ucdenver.edu}).}
	\and Don Estep\thanks{Department of Statistics and Actuarial Science, Simon Fraser University 
		(\email{destep@sfu.ca}).}
\and Haonan Wang\thanks{Department of Statistics, Colorado State University 
	(\email{haonan.wang@gmail.com}).}
}
\begin{document}
	
	\maketitle
	
	% REQUIRED
	\begin{abstract}
		Combining field data and computer models is a crucial step for making inferences, predictions, and decisions for complex science and engineering systems. We formulate and analyze a nonparametric Bayesian methodology for calibrating the distribution of parameters in a computer model  using field observations.  Our results include establishing; a unique nonparametric Bayesian posterior corresponding to a chosen prior with an explicit formula for the posterior density; a maximum entropy property of the posterior  corresponding to the uniform prior; the almost everywhere continuity of the posterior density; and a comprehensive statistical analysis of an estimator based on importance sampling. They also include establishing the well-posedness of the nonparametric Bayesian solution of the calibration problem. We illustrate the results using several examples.
	\end{abstract}
	
	% REQUIRED
	\begin{keywords}
	calibration, computer model, conditional probability, disintegration of measures, estimation, importance sampling, inverse problem, nonparametric Bayesian calibration, random vector
	\end{keywords}
	
	% REQUIRED
	\begin{MSCcodes}
		62G05, 65C60, 62B99, 62P30, 62P35, 60D05, 60A10
	\end{MSCcodes}

%\section*{\hskip -.65cm {\large Part 1: Foundation}}

\section{Introduction}\label{Introd}

Reliance on \textbf{\textit{computer models}}, e.g., a numerical solver for a differential equation, to explore and design physical systems is central to many areas of science and engineering \cite{Sacks1989}. Physical processes and laws governing the behaviour of a  system are encoded into the computer model while the behaviour of any specific example of a system depends on \textbf{\textit{(physical) parameters}} in the computer model that describe physical characteristics. For instance, many materials respond to heat stress in similar ways, but the response of a particular object depends on characteristics such as composition, porosity, and thermal diffusivity. 

In many real-world settings, the parameters governing system behavior cannot be measured directly. Moreover, parameter values vary during repeated trials of an experiment, leading to variability in field observations. This gives rise to the \textbf{\textit{calibration problem}} of making inferences about the possible variation in parameters by combining a computer model with field observations. For example, the heat equation describes how an object responds to a heat source. Sampling the temperature of objects from a collection during an experiment in which the objects are heated corresponds to evaluating the solution of the heat model for some unknown thermal conductivities. The goal of calibration is to determine the possible  thermal conductivity values from temperature data.  Generally, calibration of computer models plays an important role in many applications. Correspondingly, the calibration literature is enormous. We discuss work most closely related to our approach in Sec.~\ref{relwork}.

\subsection{Stochastic models for parameters and Bayesian calibration}\label{StoModComp}
It is often natural to use a probability model for parameters in a computer model. One common situation is a parameter that is modeled as stochastic because it varies naturally. For example, when conducting stress tests on metal objects for quality control during production, the material properties of the objects  varies from trial to trial. Another common situation is the case where a parameter represents ``up-scaled'' behavior of system or quantity that varies in a complex fashion, at a rapid time scale, and/or small spatial scale. In the heat equation, thermal conductivity represents an upscaled representation of the behavior of the molecules in an object. Such upscaled behaviour can often be described by a probability model with high fidelity \cite{PT_book}. Another important application of a probability model is a situation in which parameters are uncertain or subject to uncertainty, e.g., due to model misspecification. See Example~\ref{droppingballs} where we attempt to estimate the standard acceleration gravity $g$ using falling times of objects.

In this setting, the observed data in a calibration problem results from performing multiple trials of the experiment modeled by the computer model. Stochastic variation in output data  results from random sampling from the distribution of the physical parameters that sets the conditions for each trial. Consequently, the natural inferential target for the calibration is a distribution on the parameters.

There is often considerable prior information about parameters, e.g., obtained from domain expertise, other experiments, sparse direct observations, and experimental design. This strongly motivates a Bayesian approach where the prior knowledge is described by a prior distribution and the inferential target is a posterior distribution of parameters that updates the prior by conditioning on the field observations. In particular, there is a large and rich literature on Bayesian calibration of computer models that uses parametric statistical models for the posterior.

However, there is often strong motivation to consider a nonparametric approach. For example, the range for the  parameters frequently spans values that lead to widely varying behavior in the system while the computer model is highly nonlinear. In this case, the distributions for the field observations and/or the computer model parameters generally have complex heterogeneous structure.  The use of parametric models to estimate the distribution on the  parameters is problematic in such situations \cite{Hjort_Holmes_2010}. 

\subsection{Overview of the results in this paper}\label{sec:Overview}

We  establish rigorous theoretical and estimation frameworks for nonparametric Bayesian  estimation of the distribution of calibration parameters of a computer model using field data. In lieu of Bayes' Theorem, the frameworks employ disintegration of measures combined with a detailed analysis of the structure of the sub-$\sigma$-algebra induced by the inverse of the computer model. The results include establishing a unique posterior distribution for a given prior with an explicit formula for the corresponding posterior density of calibration parameters, showing the posterior is well-behaved under practical regularity assumptions, showing that the uniform prior produces the posterior with maximum entropy, and constructing and analyzing robustly accurate and practical nonparametric estimators based on importance sampling and an accept-reject approach. 

The results are relevant to any methodology depending on conditioning a distribution on data, including parametric Bayesian calibration and classic descriptive Bayesian statistics. The results in this paper fill in theoretical gaps and validate unverified assumptions across a range of calibration methodologies.

\subsection{Related research}\label{relwork}

Calibration of computer models is an active area of research across many disciplines and there is an extensive research literature. There are various formulations employing a stochastic approach (with or without data), e.g., {Bayesian inverse problem}, {Bayesian melding}, {data-consistent inversion}, {distributional inference}, {inverse ensemble forecasting}, {geophysical inverse problems}, {model parameter estimation}, {populational inverse problem}, {populations of models}, {statistical calibration problem}, and {stochastic inverse problem}. We make some observations about research most closely related to the work in this paper.

\subsubsection*{Application-driven approaches}\label{BCrelwork}

There has been significant work on estimating a posterior density  for parameters in a computer model using experimental data in various fields. Some of the earliest work concentrated on geophysics applications \cite{Mosegard2002,Mosegaard2006}. The work \cite{rumbell2023novel} considers the solution of the SIP using rejection sampling, Markov chain Monte Carlo sampling, and generative adversarial networks.

A related problem is ensemble forecasting, which is a Monte-Carlo approach to predicting behavior of a dynamical system, e.g., as used in weather prediction \cite{Epstein1969,Lewis_2005,Gneiting}. Ensemble forecasting typically involves an inverse stage in which the distribution on initial conditions and/or parameters are estimated using observed data and a computer model by various means. Approaches akin to inverse ensemble forecasting are presented in \cite{marder2011multiple,britton2013experimentally,rumbell2023novel}.

\subsubsection*{Statistically-oriented approaches}\label{sec:StatSCP}

Statistical approaches to calibration have been used in a variety of applications. For example: quantifying variability in biological neurons \cite{marder2011multiple} and cardiac cellular electrophysiology \cite{britton2013experimentally}; estimating the distribution on reaction rates for an autocatalytic reaction \cite{BE13}; estimating the distributions of near shore bathymetry in \cite{BET+14} and Manning's $n$ parameter field quantifying bottom friction  in \cite{BGE+15,GBW+17}; estimating the distribution on the concentration of a contaminant at the source of groundwater pollution \cite{MBD+15}, characterizing the impurity concentration in graphite bricks used in nuclear reactors \cite{GrosskopfBingham20}; estimating the distribution of masses of stars in binary systems and initial orbital separation of the stars during  binary black hole mergers \cite{lin_bbh}; estimating the distribution of thermal diffusivity of protein and bone in cooking steaks \cite{SIPREV}; using muon tomography data for estimating the distribution of the possible location of a subsurface critical mineral deposit  \cite{YazBas2024} and the distribution on the possible geometry of an air gap in a block cave mine \cite{BironBell2025}; overcoming spurious inference in genome-wide association studies \cite{Janani2024}; and estimating the distribution of rate parameters in a computer model of a COVID-19 ``surge'' \cite{kimspaper}.

The parametric Bayesian  approach to model calibration has a rich literature in statistics, e.g.,  \cite{Kennedy_O_JRSSSB_2001,HigdonKennedy04, doi:10.1198,Wang01112009, tuo2015,  plumlee17, GrosskopfBingham20,lin_bbh,SungTuo24,YazBas2024,BironBell2025}. In a different direction, the idea of \textit{Bayesian melding} was proposed in \cite{PR2000}. 

Parametric  and nonparametric Bayesian calibration share a common goal of estimating a distribution for parameters so share a common theoretical foundation of regular conditional probability conditioned on a random variable \cite{Schervish,Rao2005,PT_book}. Additionally, parametric Bayesian calibration often employs a surrogate for the computer model called an emulator and models the  ``discrepancy'' between the calibrated  emulator and the mean of the physical system. Parametric Bayesian calibration employs Bayes’ rule \cite{Schervish}.  Not having access to Bayes' Theorem, nonparametric Bayesian calibration uses disintegration of measures \cite{chang1997conditioning,Schervish,Rao2005,PT_book}. The regular conditional probability resulting from conditioning on the sub-$\sigma-$algebra induced by the computer model $Q$ plays an explicit role in the construction of the posterior conditioned on the data.

Nonparametric Bayesian calibration shares a theoretical foundation with nonparametric Bayesian inference, which is a statistical approach to estimating a distribution for a dataset that replaces a parametric model that depends on a finite dimensional parameter by an infinite dimensional model in a space of suitably smooth probability distributions $\mathcal{F}$ \cite{Hjort_Holmes_2010, ghosalvaart}.

\subsubsection*{Mathematically-oriented approaches}\label{sec:MathSCP}

Mathematical approaches to calibration generally employ a stochastic optimization framework that involves the introduction of stochastic noise and regularization in some form. Some approaches are focused on estimating a {maximum a posteriori} estimate  \cite{Geman1984}.  Other approaches are based on minimizing a discrepancy term or loss functional \cite{Mosegard2002,Mosegaard2006}. Calibration is posed as a computer model constrained optimization problem with the goal of minimizing a distance between the push-forward measure of the computed solution and the distribution of the observed data subject to a constraint or regularization term \cite{li2025inverse,Li2024inverse,Vadeboncoeura2025}.  The problem of adding sufficient information to a system  to enable estimation of the (unknown) trial generating distribution is considered in \cite{addinfosip}. 

The problem of estimating the fixed value of an unknown physical parameter from noisy values of the computer model, known as the Bayesian inverse problem, is well-studied in engineering and mathematics, e.g., \cite{starktenorio, 0266-5611-30-11-110301, Stuart_Bayesian, tarantola}. This approach involves the introduction of noise,  regularization of the model, and a different approach to choosing prior than used in Bayesian statistics. The relationship between this work and the research described in this paper is explored in \cite{SIPREV}.  

%The calibration problem in which there is longitudinal data for a single trial is also well-studied, e.g., 4D-VAR, data assimilation,  ensemble Kalman filtering, Bayesian particle filtering, et cetera \cite{Anderson2,Evensen03,Annan05,  Sandu_2013, https://doi.org/10.1002/qj.2982, https://doi.org/10.1002/wcc.535,Fearn2018, https://doi.org/10.1002/qj.3551, Doucet2000,  Sanz23}. These approaches also involve the introduction of random noise and regularization. 

\subsubsection*{Prior research of the authors}

Earlier work by the authors uses a different form of disintegration and analyzes a different estimator under several unverified regularity assumptions \cite{BBE11,BES12,BET+14,ButlerUnpub}.  The current formulation of the disintegration, the analysis of generalized contours, the expression for the posterior density, and the proof of a.e.~continuity of the density under the assumption of an Ansatz prior are presented in \cite{LYang}. A form of the maximum entropy property of the posterior corresponding to a uniform prior is given in \cite{JChi}. The thesis \cite{LYang} also gives a basic  convergence result for the simple function estimator developed in \cite{ButlerUnpub,BJW18a}. This paper contains significant extensions and improvements of the analysis in \cite{LYang,JChi}, including treatment of a general prior, strengthening the a.e.~continuity and maximum entropy results, and conducting an extensive analysis of the statistical properties of the estimator.

Important properties of the posterior density as well as  convergence and accuracy properties of estimators based on random sampling are presented in \cite{BJW18a, BJW18b}. Analysis and application have continued \cite{ZM2023, rumbell2023novel, MSB+22, tran2021solving, BGW2020, FBB25, li2025inverse}. The problem of learning an optimal observation map from  noisy datasets is considered in \cite{BH20, MSB+22,RHB25}.  Computational aspects of calibration are investigated in \cite{ButlerUnpub,ButWilYen,BGM+17,BJW18a,condSIP,BBW24,li2025inverse,ZhuEstep2024,ZhuEstep2025}. 

\subsubsection*{The condition of computer model calibration}

The condition of the calibration problem is related to ill-conditioning of mathematical inverse problems. It was introduced in \cite{BGE+15} to analyze the impact of choosing different observation maps in a coastal hydrodynamics application.
In \cite{GBW+17}, it was used for the design of optimal buoy configurations in Bay St. Louis utilizing simulations of Hurricane Gustav (2008).  The definition and theory of condition was significantly expanded in \cite{condSIP}, while \cite{BJP+25} utilizes similar concepts to develop and analyze optimal experimental design criteria.

\subsection{Outline}\label{sec:Outline}

In Sec.~\ref{SIP_intro}, we formulate the statistical calibration problem for a computer model and establish a Bayesian posterior solution. We derive important properties of the posterior in Sec.~\ref{sec:fSCIProp}. In Sec.~\ref{sec:numsol}, we describe and analyze several estimators of the posterior based on reweighting of random samples.   In Sec.~\ref{sec:examples}, we present several examples that illustrate the theoretical development and methodology. We present a conclusion in Sec.~\ref{sec:conclusion}.  Proofs and additional materials are provided in the appendices.

\section{Formulation of the calibration problem}\label{SIP_intro}

We formulate the calibration problem in statistical terms and establish the solution using a Bayesian approach.

\subsection{The statistical calibration problem}\label{sec:Bayessol}

A \textbf{\textit{computer model}} is a map $q=Q(\lambda)$ with input parameters $\lambda$  in the \textbf{\textit{sample space}} $\Lambda$. Typically, $Q(\lambda) = q(Y(\lambda))$ is obtained by applying an \textbf{\textit{observation map}}, or \textbf{\textit{quantity of  interest}}, $q$ to the numerical solution $Y$ of a \textbf{\textit{process model}} $\mathcal{M}(Y,\lambda)=0$ that encapsulates physical properties, e.g.,  a differential equation expressing conservation of mass and energy\footnote{The error introduced from numerical solution of the model may be significant and can be estimated \cite{EstLarWil,EstepMalqvist1,EstepMalqvist2,chauburest,ButlerUnpub}.  Similarly, observational error or noise can be treated using an extension of the approach in this paper \cite{ButWilYen,constrSIP}. For simplicity of presentation, we forgo considering both of these issues.}. See Sec.~\ref{sec:examples} for examples. Generally, $Q$ is not $1-1$, i.e., multiple physical instances  $\lambda$  correspond to the same field observation. This reflects the general case in which experiments measure limited aspects of system behavior.   

Following the  model of a random experiment \cite{PT_book}, we assume that the field observations are measured under stochastic experimental conditions determined by a probability distribution $P^{\mathrm{t}}_\Lambda$ for the calibration parameters. Each sample from  $P^{\mathrm{t}}_\Lambda$ in $\Lambda$ determines the physical conditions $\lambda$ for the trial that results in a field observation $q=Q(\lambda)$. We call $P^{\mathrm{t}}_\Lambda$  the \textbf{\textit{trial generating distribution}}. See  Sec.~\ref{sec:examples}  for examples.

To be precise, we assume
\begin{assumption}\label{Assump:1}
	$(\Lambda,\mathcal{B}_\Lambda)$ is a measurable space, where $\Lambda \subset \mathbb{R}^n$, $n\geq 1$, is a bounded Borel measurable set and $\mathcal{B}_\Lambda$ is the Borel $\sigma-$algebra restricted to $\Lambda$. $P^{\mathrm{t}}_\Lambda$ is a probability measure on $(\Lambda,\mathcal{B}_\Lambda)$. 
	 $Q:(\Lambda,\mathcal{B}_\Lambda) \to (\mathcal{D},\mathcal{B}_\mathcal{D})$ is a random vector, where  $\mathcal{D} =Q(\Lambda)\subset \mathbb{R}^m$ for $1\leq m \leq n$ and $\mathcal{B}_\mathcal{D}$ is the restriction of the Borel $\sigma-$algebra to $\mathcal{D}$. 	
\end{assumption}

Consequently, a collection of field observations $\{q_i\}_{i=1}^K$ corresponds to a sample of ``hidden'' parameter values $\{\lambda_i\}_{i=1}^K\sim P^{\mathrm{t}}_{\Lambda}$, where $q_i = Q(\lambda_i)$ for $1 \leq i \leq K$.  The field observations $\{q_i\}_{i=1}^K$ are therefore distributed according to the  \textbf{\textit{push-forward}}  probability measure,
\[
P_\mathcal{D}(A)  = P^{\mathrm{t}}_\Lambda (Q^{-1}(A)), \quad  A \in \mathcal{B}_\mathcal{D}, 
\]
on $(\mathcal{D},\mathcal{B}_\mathcal{D})$, where the inverse is defined $
Q^{-1}(A) = \{\lambda \in \Lambda: Q(\lambda)\in A\}$ for $ A \in \mathcal{B}_{\mathcal{D}}$. We sometimes use the notation $P_\mathcal{D} = Q P^{\mathrm{t}}_{\Lambda}(A)$. The calibration problem arises in the typical situation where  $P^{\mathrm{t}}_\Lambda$ is unknown.
\begin{definition}[\textbf{Statistical Calibration Problem (SCP)}]\label{SCP}
Determine a probability distribution ${P}_\Lambda$ on $(\Lambda,\mathcal{B}_\Lambda)$ such that  any collection of observations $\{q_i\}_{k=1}^K$ from trials of an experiment modeled by $Q$ is a random sample from the push-forward probability distribution  $P_\mathcal{D} = Q{P}_\Lambda$.
\end{definition}
\noindent We emphasize that $P^{\mathrm{t}}_{\Lambda}$ determines $P_\mathcal{D}$ uniquely, but   the SCP generally has multiple solutions because $Q$ is generally not $1$--$1$ in applications. Below, we describe  the precise relationship between various solutions and how to select a unique solution from among the possibilities.

We develop the foundational theory for the solution of the SCP using an abstract version in which we ``observe''  $P_{\mathcal{D}}$. This circumvents the need to express results in terms of limits when stating  theoretical results. We return to consider data and the SCP when developing the estimation methodology in Sec.~\ref{sec:numsol}. The abstract version is,
\begin{definition}[\textbf{Abstract Statistical Calibration Problem (aSCP)}]\label{aSCP}
	Given the  distribution $P_{\mathcal{D}}$ on $(\mathcal{D},\mathcal{B}_\mathcal{D})$ for results of trials of the experiment modeled by $Q$, compute a  probability distribution ${P}_\Lambda$ on $(\Lambda,\mathcal{B}_\Lambda)$ such that $P_\mathcal{D} = Q{P}_\Lambda$.
\end{definition}
\noindent This abstract version is also called the \textbf{\textit{Stochastic Inverse Problem (SIP)}}.

\subsection{Disintegration of measures and conditional probability}\label{sec:disint}

The solution of the SCP is a posterior  distribution  conditioned on the field observations. Technically, we work in the framework of regular conditional probabilities conditioned on  the values of random variables. Instead of Bayes's Theorem, we employ \textbf{\textit{disintegration of measures}} \cite{Schervish,chang1997conditioning,Rao2005,PT_book}, which provides a systematic way to handle conditioning probability measures on data.

The following result is an immediate consequence of the standard theory of disintegration. While the result is rather impenetrable at first reading, it has profound practical implications for the use of conditional probability.
\begin{theorem}	\label{thm:dis2}
Assume that Assumption~\ref{Assump:1} holds and that $\Psi_\Lambda$ is a bounded measure on $(\Lambda,\mathcal{B}_\Lambda)$. Let  $\Psi_{\mathcal{D}}= Q\Psi_\Lambda $ denote the push-forward measure on $(\mathcal{D}, \mathcal{B}_\mathcal{D})$. There is a family of probability measures $\{\Psi_N(\,\cdot\,|q)\}_{q\in \mathcal{D}}$ on $(\Lambda, \mathcal{B}_\Lambda)$ uniquely defined for $\Psi_{\mathcal{D}}$-almost every $q\in \mathcal{D}$ such that
	\begin{equation}\label{disconc}
	\Psi_N(Q^{-1}(q)|q)=1 \quad \mathrm{ and } \quad \Psi_N(\Lambda \backslash Q^{-1}(q)|q)=0,
	\end{equation}
yielding the disintegration,
	\begin{equation}\label{TheDistin}
	\Psi_\Lambda(A)=\int_{Q(A)}\Psi_N(A|q)\, d\Psi_{\mathcal{D}}(q)= \int_{Q(A)}\int_{Q^{-1}(q)\cap A}\, d\Psi_N(\lambda|q)\, d\Psi_{\mathcal{D}}(q)
	\end{equation}
	for all $A\in\mathcal{B}_\Lambda$. 
\end{theorem}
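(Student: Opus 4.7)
The plan is to deduce this theorem directly from a standard disintegration theorem for measures on well-behaved measurable spaces, with the main work being to verify that the abstract conditional measures constructed by the classical theorem in fact concentrate on the fibers $Q^{-1}(q)$, yielding \eqref{disconc}.

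First I would check the regularity hypotheses under which the classical disintegration theorem (as formulated in, e.g., Bogachev or Kallenberg) delivers a regular conditional probability. Assumption~\ref{Assump:1} gives that $\Lambda$ and $\mathcal{D}$ are Borel subsets of Euclidean spaces, hence standard Borel spaces when equipped with their Borel $\sigma$-algebras, and $Q$ is Borel measurable. The measure $\Psi_\Lambda$ is bounded, so by normalization we may as well work with probability measures. Under these hypotheses the standard disintegration theorem yields a family $\{\Psi_N(\cdot\,|\,q)\}_{q\in\mathcal{D}}$ of probability measures on $(\Lambda,\mathcal{B}_\Lambda)$, measurable in $q$, such that for every $A\in\mathcal{B}_\Lambda$ the disintegration identity
\[
\Psi_\Lambda(A)=\int_{\mathcal{D}}\Psi_N(A\,|\,q)\,d\Psi_{\mathcal{D}}(q)
\]
holds, together with uniqueness up to a $\Psi_{\mathcal{D}}$-null set. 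Replacing the outer domain of integration by $Q(A)$ in \eqref{TheDistin} then follows because $\Psi_N(A\,|\,q)$ will be shown to vanish off $Q(A)$ once the concentration property is established.

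The key step is proving \eqref{disconc}, i.e.\ that $\Psi_N(\cdot\,|\,q)$ lives on the generalized contour $Q^{-1}(q)$ for $\Psi_{\mathcal{D}}$-a.e.\ $q$. I would exploit the countable generation of $\mathcal{B}_{\mathcal{D}}$: pick a countable algebra $\{B_k\}\subset\mathcal{B}_\mathcal{D}$ that separates points of $\mathcal{D}$. Applying the disintegration formula with $A=Q^{-1}(B_k)$ gives
\[
\Psi_{\mathcal{D}}(B_k)=\Psi_\Lambda(Q^{-1}(B_k))=\int_{\mathcal{D}}\Psi_N\!\left(Q^{-1}(B_k)\,|\,q\right)d\Psi_{\mathcal{D}}(q),
\]
and comparing this with $\Psi_{\mathcal{D}}(B_k)=\int_{\mathcal{D}}\mathbbm{1}_{B_k}(q)\,d\Psi_{\mathcal{D}}(q)$ forces $\Psi_N(Q^{-1}(B_k)\,|\,q)=\mathbbm{1}_{B_k}(q)$ outside a null set $N_k$. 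Setting $N=\bigcup_k N_k$, still $\Psi_{\mathcal{D}}$-null, we have for every $q\notin N$ that $\Psi_N(Q^{-1}(B_k)\,|\,q)=1$ whenever $q\in B_k$, and $0$ otherwise. Since $\{B_k\}$ separates points, $\{q\}=\bigcap_{q\in B_k}B_k$, and a monotone-class / countable intersection argument promotes this to $\Psi_N(Q^{-1}(q)\,|\,q)=1$ for every $q\notin N$, which is \eqref{disconc}.

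Once \eqref{disconc} holds, the refined form of \eqref{TheDistin} with inner integral over $Q^{-1}(q)\cap A$ and outer integral over $Q(A)$ follows automatically: for $q\notin Q(A)$, the fiber is disjoint from $A$ so $\Psi_N(A\,|\,q)=0$, collapsing the outer integral to $Q(A)$; for $q\in Q(A)$, $\Psi_N(\cdot\,|\,q)$ is supported on $Q^{-1}(q)$, so $\Psi_N(A\,|\,q)=\Psi_N(Q^{-1}(q)\cap A\,|\,q)$. Uniqueness is inherited verbatim from the classical disintegration theorem, modulo a $\Psi_{\mathcal{D}}$-null set. I expect the main obstacle to be precisely the promotion step in paragraph three, since individual fibers $Q^{-1}(q)$ may be null for the ambient measure and one has to be careful not to fall into the trap of ``conditioning on a null event'' in a non-rigorous way; the countable separating family, joint with the common null set $N$, is what makes the argument clean.
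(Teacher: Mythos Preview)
Your proposal is correct and aligned with the paper's own treatment: the paper does not prove Theorem~\ref{thm:dis2} at all but simply states that it ``is an immediate consequence of the standard theory of disintegration'' and cites \cite{HJ,chang1997conditioning,Bogachev,Kallenberg,Rao2005,PT_book}. Your sketch is therefore more detailed than what the paper provides, and the countable-separating-family argument you outline for promoting the abstract regular conditional probability to one concentrated on fibers is the standard way this step is handled in those references.
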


In this case, $\Psi_N(\,\cdot\,|q)$ is a regular conditional probability measure on $\Lambda$ that  is conditional on the value $q$ in the sense that it is nonzero only on $Q^{-1}(q)$.  We say that the $\Psi_N(\,\cdot\,|q)$ is \textbf{\textit{concentrated}} on $Q^{-1}(q)$ and  $\Psi_\Lambda$ \textbf{\textit{disintegrates}} to $\{\Psi_N(\,\cdot\,|q)\}_{q\in \mathcal{D}}$ and  $\Psi_{\mathcal{D}}$. 
We call $ Q^{-1}(q)$  the \textbf{\textit{generalized contour}} corresponding to $q\in\mathcal{D}$. It is the generalization of a contour curve for a scalar map on two dimensions, see Figures \ref{disint2} and \ref{expdecay_forward}. We note that $Q^{-1}(q)$ is a set of Lebesgue measure $0$, $Q^{-1}(q_1) \cap Q^{-1}(q_2) = \emptyset$ when $q_1 \neq q_2$, and $\Lambda = \bigcup_{q \in \mathcal{D}} Q^{-1}(q)$.

The subscript ``$N$'' is a  reference to the null space of $Q$ when $Q$ is a linear function. For example, a common - but often unremarked - example of disintegration is the Product Measure Theorem  in which $Q$ is the orthogonal projection onto some coordinates \cite{PT_book}. Equation \eqref{TheDistin} can be interpreted as a  ``nonlinear product measure theorem'', see Fig.~\ref{disint2}. The inner integral $\int_{Q^{-1}(q)\cap A}\,$ $ d\Psi_N(\lambda|q)$ gives the probability of the ``slice'' of $A$ that intersects $Q^{-1}(q)$. We integrate these probabilities against the measure of the contours. Example~\ref{ex:condcircle} presents a simple example of disintegration.  
\begin{figure}[tbh]\centering
	\includegraphics[width=.8\textwidth]{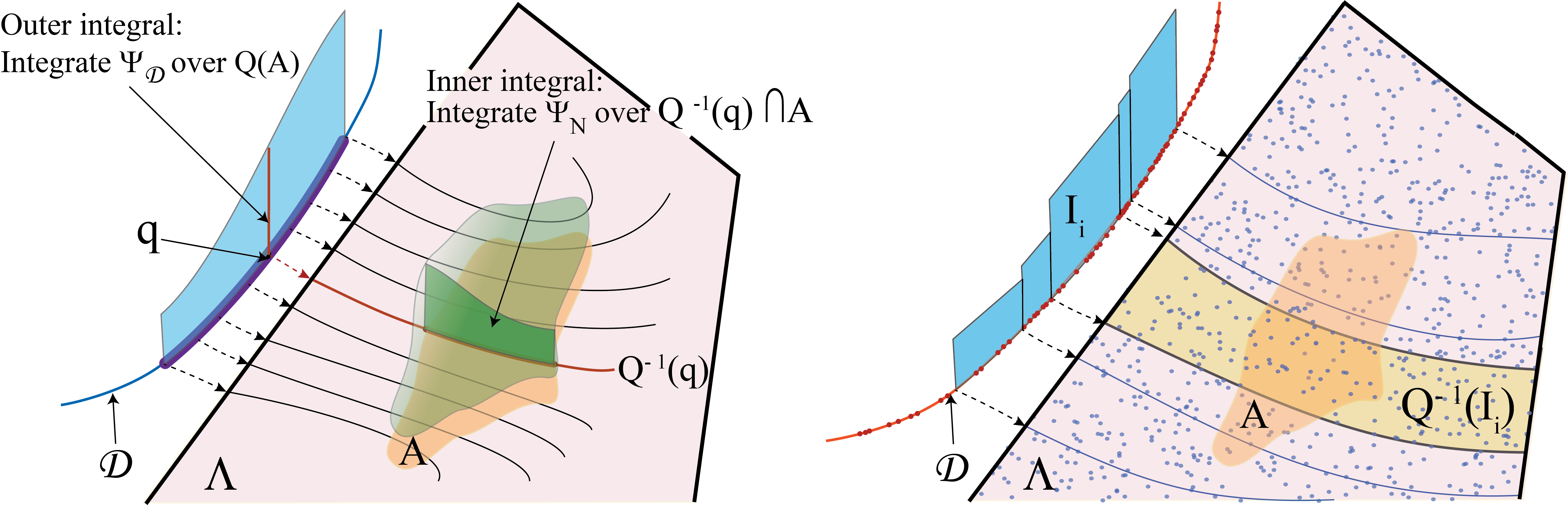}
	\caption{Left: we illustrate disintegration using a figure adapted from \cite{kimspaper}. The probability on $\Lambda$ is represented by a density shaded in green and the probability on $\mathcal{D}$ is represented by a density shaded in blue. The disintegrated density on $Q^{-1}(q)$ is indicated with a darker shade of green. Right: we illustrate the estimation by random sampling. The observed data is indicated by dark red points on $\mathcal{D}$ and the associated empirical estimator is shown in blue relative to the partition $\{I_i\}$. We show the samples $\{\lambda_i\}$ in $\Lambda$ as points and  $Q^{-1}(I_i)$ is shaded in gold.
	}\label{disint2} 
\end{figure}

Disintegration provides a concrete way to express the fact that $Q$ only provides information on a sub-$\sigma-$algebra of $\mathcal{B}_\Lambda$ when the experiment only  reveals partial information about the system and it supports practical application of abstract conditional probability. It turns out that Bayes' Theorem related is closely related to disintegration \cite{Schervish,PT_book}. Thus, parametric and nonparametric approaches to Bayesian computer model calibration share the same underlying theoretical foundation.

\subsection{Bayesian solution of the aSCP}\label{sec:Bayesol}

Theorem~\ref{thm:dis2} implies that for any solution ${P}_\Lambda$ of the aSCP there is a family of probability measures $\{P_N(\,\cdot\,|q)\}_{q\in \mathcal{D}}$ on $(\Lambda, \mathcal{B}_\Lambda)$ uniquely defined  $P_{\mathcal{D}}$ a.e. such that 
\begin{equation}\label{disintmeas}
P_N(Q^{-1}(q)|q)=1,\quad  P_N(\Lambda \backslash Q^{-1}(q)|q)=0, 
\end{equation}
and 
\begin{equation}\label{SIPAbsDis}
	{P}_\Lambda(A)=\int_{Q(A)}P_N(A|q)\, dP_{\mathcal{D}}(q)= \int_{Q(A)}\int_{Q^{-1}(q)\cap A}\, dP_N(\lambda|q)\, dP_{\mathcal{D}}(q)
\end{equation}
for all $A\in\mathcal{B}_\Lambda$.   Of course, the trial generating distribution $P^{\mathrm{t}}_\Lambda$ determines a unique a.e. family  $\{P^{\mathrm{t}}_N(\,\cdot\,|q)\}_{q\in \mathcal{D}}$, but this is unknown.

We are in a situation that is  analogous to the  introduction of Bayes' Theorem in a parametric approach.   We adopt a Bayesian approach by describing prior information about the system in terms of prior distributions.  Under Assumption~\ref{Assump:1}, any family of  $\{P_N(\,\cdot\,|q)\}_{q\in \mathcal{D}}$ of probability measures on $(\Lambda, \mathcal{B}_\Lambda)$  satisfying \eqref{disintmeas} yields a unique solution $P_\Lambda$ of the aSCP satisfying \eqref{SIPAbsDis}. We call such a family $\{P_N(\,\cdot\,|q)\}_{q\in \mathcal{D}}$ an \textbf{\textit{Ansatz prior}}. The notation reflects the fact that we may choose any prior measure on each contour $Q^{-1}(q)$ for every value $q$ of $Q$. 

 While very general, choosing an Ansatz prior $\{P_N(\,\cdot\,|q)\}_{q\in \mathcal{D}}$ is generally infeasible since we have no knowledge of the contours $Q^{-1}(q)$ even if we desire to choose uncountably many probability measures on complex manifolds.
Instead, we observe that any probability measure $P_p$ on $(\Lambda, \mathcal{B}_\Lambda)$ yields an Ansatz prior  family $\{P_N(\,\cdot\,|q)\}_{q\in \mathcal{D}}$ implicitly via Theorem~\ref{thm:dis2}. 
\begin{theorem}[\textbf{nonparametric Bayesian solution of the aSCP}] \label{SIPBayessol}  
	Assume that  Assumption~\ref{Assump:1} holds. Let $P_p$ be a probability measure on $(\Lambda, \mathcal{B}_\Lambda)$. There is a unique solution $P_\Lambda$ of the aSCP satisfying the disintegration \eqref{SIPAbsDis}, where $\{P_N(\,\cdot\,|q)$ $\}_{q\in \mathcal{D}}$ is obtained by disintegration of $P_p$.
\end{theorem}
In this context, $P_p$ is the  \textbf{\textit{prior (distribution)}} chosen based on prior belief and knowledge about the physical  parameters. Inclusion of prior physical knowledge is facilitated because $P_p$ is placed directly on physical quantities. In an epistemological sense, the solution ${P}_\Lambda$ represents the knowledge encapsulated in the prior updated with the information provided by the field observations. We emphasize that ${P}_\Lambda(\, \cdot \,)= P_\Lambda(\,\cdot\, | Q )$ is conditioned on $Q$ and  determines the same  distribution $P_{\mathcal{D}}$  on observed data as the trial generating distribution $P^{\mathrm{t}}_{\Lambda}$. Therefore, we call ${P}_\Lambda$ the \textbf{\textit{posterior (distribution)}}.  The estimator $\hat{P}_{\Lambda,K,N_\jmath}(\,\cdot\, | q_1, \cdots, q_K)$ of ${P}_\Lambda(\,\cdot\,)$ developed below is conditioned directly on the field observations $\{q_1, \cdots, q_K\}$. 

Example~\ref{discretedisint} describes a simple  solution of the aSCP on a discrete probability space that illustrates key points of the theory. 

\section{Important properties of the posterior}\label{sec:fSCIProp}
We derive important properties of the posterior solving the aSCP. These properties provide the conditions that allow for a nonparametric estimator.

\subsection{Properties of $Q^{-1}$}\label{sec:MakeSIPreal}

While disintegration is powerful, it can yield extremely ``rough'' posteriors. We show that reasonable assumptions lead to posteriors that are  regular.   
\begin{assumption}\label{Assump:2}
	$\quad$
	
\begin{enumerate}
	\item\label{assum1} $\Lambda\subset \mathbb{R}^n$ is compact with measurable boundary $\partial \Lambda$ satisfying $\overline{\mathrm{int}\, (\Lambda)} = \Lambda$ and $\mu_{\mathcal{L}}(\partial \Lambda)=0$, where $\mathrm{int}\, (B)$ is the interior of a set $B$, $\overline{B}$ is the closure of a set $B$ and $\mu_{\mathcal{L}}$ is the Lebesgue measure;
	\item\label{assum2}  There is an open set $U_\Lambda$ with $U_\Lambda \supset \Lambda$ such that $Q$ is continuously differentiable  a.e. on $U_\Lambda$;
	\item\label{assum3} $Q$ is \textbf{\textit{geometrically distinct (GD)}} on $U_\Lambda$, which means that the $m\times n$ Jacobian matrix $J_Q$ of $Q$ is full rank $m$ except on a finite union of manifolds of dimension less than or equal to $n-1$.
\end{enumerate}
\end{assumption}

Compactness of $\Lambda$ is reasonable because parameters are generally  bounded for  process models. Determining $\Lambda$ is an important step in choosing a prior, see \cite{kimspaper}. Assumptions  $\overline{\mathrm{int}\, (\Lambda)} = \Lambda$ and  $\mu_{\mathcal{L}}(\partial \Lambda)=0$ are used to establish regularity of the measure of the contours. The smoothness of $Q$  can be established for broad classes of models. The assumption of GD means that $\Lambda$ can be covered by a finite number of open sets up to a set of measure $0$ such that $J_Q$ has full rank on each set. In these sets, no component of $J_Q$ can be expressed as a linear combination of its other components. If this is violated, $Q$ should be simplified to obtain a new map with smaller dimension.  

An important consequence of these assumptions is a concrete interpretation of the inner integral in \eqref{SIPAbsDis}. Roughly speaking, each generalized contour is a smooth manifold that can be locally parameterized as a graph over $\mathbb{R}^{n-m}$  and the inner integral in the disintegration is a surface integral over the manifold.
\begin{theorem}\label{thm:GCisMan}
	Assume Assumptions \ref{Assump:1} and \ref{Assump:2} hold. 
	For almost all $q \in \mathcal{D}$, the generalized contour $
	Q^{-1}(q)=\{\lambda \in U_\Lambda| \, Q(\lambda)=q,\;  \mathrm{rank}(J_Q(\lambda))=m \}$
	is an $n-m$-dimensional a.e. smooth manifold immersed in $\mathbb{R}^n$ and the inner integral in \eqref{SIPAbsDis} is a surface integral defined for a piecewise smooth parametrization on $(\mathbb{R}^{n-m},\mathcal{B}_{\mathbb{R}^{n-m}},\mu_{\mathcal{L}})$. 
\end{theorem}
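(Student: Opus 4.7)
The plan is to combine the Implicit Function Theorem with Sard's theorem: the former provides the manifold structure wherever $J_Q$ is full rank, and the latter shows that for almost every $q \in \mathcal{D}$ the contour $Q^{-1}(q)$ avoids the critical set entirely. The inner integral of \eqref{SIPAbsDis} is then identified with a classical surface integral by pulling back through local charts.

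First, let $U_\Lambda^\ast \subset U_\Lambda$ be the (open) set on which $Q$ is $C^1$ and $J_Q$ has rank $m$, and write $S = U_\Lambda \setminus U_\Lambda^\ast$. By Assumption~\ref{Assump:2}, items \ref{assum2} and \ref{assum3}, $S$ is contained in a finite union $\bigcup_j M_j$ of $C^1$ manifolds with $\dim M_j \leq n-1$. For any regular point $\lambda_0 \in U_\Lambda^\ast$ with $Q(\lambda_0) = q$, pick $m$ linearly independent columns of $J_Q(\lambda_0)$; the Implicit Function Theorem then produces an open neighborhood $V_{\lambda_0}$ of $\lambda_0$ and a $C^1$ parameterization $\phi_{\lambda_0} : W_{\lambda_0} \subset \mathbb{R}^{n-m} \to V_{\lambda_0} \cap Q^{-1}(q)$. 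Thus $Q^{-1}(q) \cap U_\Lambda^\ast$ is locally a $C^1$ embedded $(n-m)$-submanifold of $\mathbb{R}^n$, and the inclusion into $\mathbb{R}^n$ is an immersion since $J_Q$ has full row rank at every such point.

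To pass from ``$\lambda$-a.e.'' to ``$q$-a.e.'', I would apply Sard's theorem piecewise to $Q$ restricted to each critical component $M_j$: the image $Q(M_j)$ has $m$-dimensional Lebesgue measure zero in $\mathcal{D}$, so the exceptional set $N = Q(S)$ is a null set in $\mathcal{D}$. For any $q \in \mathcal{D} \setminus N$ the entire contour $Q^{-1}(q)$ lies in $U_\Lambda^\ast$, and the local charts $\phi_{\lambda_0}$ (together with second countability of $\mathbb{R}^n$) assemble into a global a.e.\ smooth $(n-m)$-dimensional manifold structure for $Q^{-1}(q)$. The \emph{main obstacle} is justifying this Sard step carefully: since $Q$ is only $C^1$ away from $S$, classical Sard cannot be applied to $Q$ as a single global map, and in high codimension $n-m$ even restricted Sard requires scrutiny of the regularity. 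The \textbf{geometrically distinct} hypothesis is precisely what saves the argument, forcing $S$ to decompose into finitely many smooth lower-dimensional pieces on which $Q$ inherits enough regularity for a Sard-type bound to apply.

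For the surface-integral claim, each local parameterization $\phi_{\lambda_0} : W_{\lambda_0} \to V_{\lambda_0} \cap Q^{-1}(q)$ induces the standard $(n-m)$-dimensional surface measure on its image via the Jacobian area formula, $d\sigma_q = \sqrt{\det(D\phi_{\lambda_0}^\top D\phi_{\lambda_0})}\, d\mu_{\mathcal{L}}$. Since $P_N(\cdot|q)$ is concentrated on $Q^{-1}(q)$ by \eqref{disintmeas}, covering $Q^{-1}(q)$ by countably many such charts together with a subordinate partition of unity expresses the inner integral in \eqref{SIPAbsDis} as a piecewise smooth surface integral on $(\mathbb{R}^{n-m}, \mathcal{B}_{\mathbb{R}^{n-m}}, \mu_{\mathcal{L}})$, exactly as claimed. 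A minor technical point is that the statement is phrased over $U_\Lambda$ whereas the ambient probability space is $\Lambda$; but Assumption~\ref{Assump:2}, item \ref{assum1} ($\mu_{\mathcal{L}}(\partial \Lambda) = 0$) together with the coarea formula ensures that the slices $\partial \Lambda \cap Q^{-1}(q)$ carry zero $(n-m)$-dimensional surface measure for a.e.\ $q$, so restricting back to $\Lambda$ loses nothing.
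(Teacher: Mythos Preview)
Your core approach---the Implicit Function Theorem for local $(n-m)$-dimensional charts at regular points, then patching via second countability---is exactly what the paper does. The appendix proof is extremely terse: it states the IFT, recalls the definition of a $k$-manifold as a set locally parameterized by embedded charts, and simply declares that Assumptions~\ref{Assump:1} and~\ref{Assump:2} together with the IFT yield the result. Your write-up is considerably more careful than the paper's.

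Where you diverge is in invoking Sard's theorem to pass from ``$\lambda$-a.e.'' to ``$q$-a.e.''. The paper does not do this, and arguably it is not needed as stated: notice that the theorem \emph{defines} $Q^{-1}(q)$ to be $\{\lambda : Q(\lambda)=q,\ \mathrm{rank}(J_Q(\lambda))=m\}$, i.e.\ the rank condition is already baked into the set. So for \emph{every} $q$ this set lies entirely in your $U_\Lambda^\ast$, and the IFT alone gives the manifold structure without any appeal to Sard. Your Sard argument is really proving the stronger fact that the regular fiber coincides with the full fiber for a.e.\ $q$---useful downstream, but not what the theorem literally asserts.

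The obstacle you flag is genuine if one insists on Sard: classical Sard for $C^1$ maps fails when $n-m>1$. Your proposed fix (restrict to each $M_j$ and note every point is critical for $Q|_{M_j}$ since $\mathrm{rank}(D(Q|_{M_j}))\leq \mathrm{rank}(J_Q)<m$) still runs into the same regularity wall, because Sard on $Q|_{M_j}:M_j\to\mathbb{R}^m$ with $\dim M_j\geq m$ again needs more than $C^1$. One clean way out is to observe that the GD hypothesis makes each $M_j$ a smooth manifold of dimension $\leq n-1$; if $\dim M_j<m$ the image has measure zero by the area formula alone, and if $\dim M_j\geq m$ one can stratify by the (upper-semicontinuous) rank of $J_Q$ and use the constant-rank theorem on each stratum. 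But the paper sidesteps all of this by the definitional trick above, and you can too.
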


We discuss the proof  in Sec.~\ref{app:GCISMAN}.  We reflect  Theorem~\ref{thm:GCisMan}  by writing an integral of a conditional density $f$ over a generalized contour $Q^{-1}(q)$ as $\int_{Q^{-1}(q)} f(s|q)\,  ds$, where $s$ indicates the integral is computed for some suitable parametrization of $Q^{-1}(q)$ over $(\mathbb{R}^{n-m},\mathcal{B}_{\mathbb{R}^{n-m}},\mu_{\mathcal{L}})$. Equivalently, the integrals can be expressed in terms of Hausdorff measures on the generalized contours. While Theorem~\ref{thm:GCisMan} is important theoretically,  the parametrization is immaterial for the estimator presented in Sec.~\ref{sec:basicest}.

\subsection{A posterior density}\label{sec:SIPdensol}
We develop a formula for the  posterior  density function. We use $\mu_{\Lambda}$ respectively $\mu_{\mathcal{D}}$ to denote the restriction of the Lebesgue measure to $\Lambda$ respectively $\mathcal{D}$. We assume,
\begin{assumption}\label{Assump:shouldbe3}
$P_\mathcal{D}$ is absolutely continuous with respect to $\mu_\mathcal{D}$, i.e. $dP_\mathcal{D}=\rho_\mathcal{D}\,d\mu_{\mathcal{D}}$ a.e. for an integrable probability density function $\rho_\mathcal{D}$.
\end{assumption}
 By Theorem~\ref{thm:dis2}, the Lebesgue measure disintegrates as, 
\begin{equation}\label{mulamdist}
	\mu_{\Lambda}(A)=\int_{Q(A)}\mu_N(A|q)\, d\tilde{\mu}_{\mathcal{D}}(q)
	=\int_{Q(A)}\int_{Q^{-1}(q)\cap A} \, d\mu_N(\lambda|q)\, d\tilde{\mu}_{\mathcal{D}}(q),
\end{equation}
where $\{\mu_N(\,\cdot\,|q)\}_{q \in \mathcal{D}}$ is a family of regular conditional probability measures on $(\Lambda,\mathcal{B}_\Lambda)$ such that $\mu_N(\,\cdot\,|q)$ is concentrated on $Q^{-1}(q)$ and  $\{\mu_N(\cdot|q)\}_{q \in \mathcal{D}}$ is unique $\tilde{\mu}_{\mathcal{D}}$ a.e. with $\tilde{\mu}_{\mathcal{D}} = Q \mu_\Lambda$.

We note that $P_{\mathrm{p}}$ on $(\Lambda,\mathcal{B}_\Lambda)$ induces the measure $\tilde{P}_{\mathrm{p},\mathcal{D}}= Q P_{\mathrm{p}}$. We assume,
\begin{assumption}\label{Assump:shouldbe4}
The  prior $P_{\mathrm{p}}$ is absolutely continuous with respect to $\mu_\Lambda$, so $dP_{\mathrm{p}} = \rho_{\mathrm{p}} \, d\mu_\Lambda$ for an integrable probability density $\rho_{\mathrm{p}}$, and $\mu_N(\rho_{\mathrm{p}}(\lambda)=0|q) \neq 1$ for almost all $q \in \mathcal{D}$.
\end{assumption}

The main result replaces Bayes' Theorem in a sense. It is proved in Sec.~\ref{sec:solproof}.
\begin{theorem}\label{thm:sol}
	Assume Assumptions \ref{Assump:1}, \ref{Assump:2}, \ref{Assump:shouldbe3}, and \ref{Assump:shouldbe4} hold.   There is a posterior ${P}_\Lambda$   that is absolutely continuous with respect to $\mu_\Lambda$ and  $dP_{\Lambda}={\rho}_\Lambda\, d\mu_\Lambda$ with 
	\begin{equation}\label{UASIPdengen}
		{\rho}_\Lambda(\lambda)=\rho_{\mathcal{D}}(Q(\lambda))\cdot \frac{\rho_{\mathrm{p}}(\lambda)}{\tilde{\rho}_{\mathrm{p},\mathcal{D}}(Q(\lambda))}  \quad \mu_\Lambda~a.e.,
	\end{equation}
	where
	\begin{equation}\label{UASIPdengenscale}
		\tilde{\rho}_{\mathrm{p},\mathcal{D}}(q)=\int_{Q^{-1}(q)}\rho_{\mathrm{p}} \, \frac{1}{\sqrt{\det (J_{Q}J_{Q}^\top) }}\, ds.
	\end{equation}
\end{theorem}
\noindent This is related to the coarea formula in geometric measure theory \cite{evansgar}. The normalization factor \eqref{UASIPdengenscale} reflects the fact that the nonlinear behavior of $Q$ may vary over $\Lambda$.

\subsection{The uniform prior and maximum entropy}\label{sec:unians}

The posterior corresponding to the \textbf{\textit{uniform prior}} $P_p = \mu_\Lambda/\mu_\Lambda(\Lambda)$ has density,
\begin{equation}\label{UASIPdenun}
	{\rho}_\Lambda(\lambda)=\frac{\rho_{\mathcal{D}}(Q(\lambda))}{\tilde{\rho}_{\mathcal{D}}(Q(\lambda))} \;\; \mu_\Lambda \;\; a.e., \quad \tilde{\rho}_{\mathcal{D}}(q)=\int_{Q^{-1}(q)}\frac{1}{\sqrt{\det (J_{Q}J_{Q}^\top) }}\, ds .
\end{equation}
We call the posterior corresponding to the uniform prior the \textbf{\textit{posterior conditioned on the uniform prior}} or the  \textbf{\textit{uniform prior posterior}}. 

The uniform prior has a special property that makes it a natural choice for solving the aSCP in the absence of prior information or belief.  Namely, it is  an expression of the Principle of Insufficient Reason. The translation invariance of the uniform prior respects the fact that since distinct points on the generalized contour $Q^{-1}(q)$ cannot be distinguished using values of $Q$, the probability of points on $Q^{-1}(q)$ should be equal, see Example~\ref{EX:expdecay}. The next theorem shows that the posterior ${P}_\Lambda$ corresponding to the uniform prior is the least biased solution of the aSCP in a certain sense.  We give the proof in Sec.~\ref{sec:proofsunians}.
\begin{theorem}\label{entropy}
	Assume Assumptions \ref{Assump:1}, \ref{Assump:2}, and \ref{Assump:shouldbe3} hold. The posterior  corresponding to the uniform prior has maximum entropy among all posteriors that are absolutely continuous with respect to $\mu_\Lambda$ and whose prior is absolutely continuous with respect to the Lebesgue measure.
\end{theorem}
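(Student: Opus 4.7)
The plan is to invoke the coarea formula implicit in Theorem~\ref{thm:GCisMan} to decompose $H(P_\Lambda)$ into an integral over $\mathcal{D}$ of entropy contributions from each generalized contour, observe that the SIP constraint $QP_\Lambda=P_\mathcal{D}$ fixes the integral of $\rho_\Lambda$ against a weighted surface measure on each contour, and then apply Gibbs' inequality contour by contour. The uniform prior posterior~\eqref{UASIPdenun} is precisely the member of the admissible class that is constant along every generalized contour, and it will turn out to be the unique maximizer of the contour-wise entropy under the imposed normalization.

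First, by Theorem~\ref{thm:GCisMan} and the GD assumption, the coarea formula yields, for any $\mu_\Lambda$-integrable $f$,
\begin{equation*}
\int_\Lambda f(\lambda)\, d\mu_\Lambda(\lambda) = \int_\mathcal{D} \left[\int_{Q^{-1}(q)} f(\lambda)\, w(\lambda)\, ds\right] d\mu_\mathcal{D}(q), \qquad w(\lambda) := \frac{1}{\sqrt{\det(J_Q J_Q^\top)}}.
\end{equation*}
Taking $f=\rho_\Lambda\log\rho_\Lambda$ gives $H(P_\Lambda) = -\int_\mathcal{D} I(q)\,d\mu_\mathcal{D}(q)$ with $I(q):=\int_{Q^{-1}(q)}\rho_\Lambda\log\rho_\Lambda\,w\,ds$; taking $f=\rho_\Lambda\chi_{Q^{-1}(A)}$ for $A\in\mathcal{B}_\mathcal{D}$ shows that the admissibility condition $QP_\Lambda=P_\mathcal{D}$ is equivalent to the pointwise constraint $\int_{Q^{-1}(q)}\rho_\Lambda\,w\,ds = \rho_\mathcal{D}(q)$ for $\mu_\mathcal{D}$-a.e.\ $q$.

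Next, fix $q$ with $\rho_\mathcal{D}(q)>0$ and set $W:=\tilde{\rho}_\mathcal{D}(q)=\int_{Q^{-1}(q)} w\,ds$ and $c:=\rho_\mathcal{D}(q)$. Writing $\rho_\Lambda|_{Q^{-1}(q)}=(c/W)\,g$ with $g\ge 0$ and $\int_{Q^{-1}(q)} g\,(w/W)\,ds=1$, a direct algebraic expansion gives $I(q)=c\log(c/W)+c\int_{Q^{-1}(q)} g\log g\,(w/W)\,ds$. Since $(w/W)\,ds$ is a probability measure on $Q^{-1}(q)$, Gibbs' inequality yields $\int g\log g\,(w/W)\,ds\ge 0$ with equality if and only if $g\equiv 1$, i.e.\ $\rho_\Lambda\equiv c/W=\rho_\mathcal{D}(q)/\tilde{\rho}_\mathcal{D}(q)$ on $Q^{-1}(q)$. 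Hence $-I(q)\le c\log(W/c)$ pointwise, and integrating in $q$ gives $H(P_\Lambda)\le \int_\mathcal{D}\rho_\mathcal{D}\log(\tilde{\rho}_\mathcal{D}/\rho_\mathcal{D})\,d\mu_\mathcal{D}$. Substituting~\eqref{UASIPdenun} back into the coarea formula confirms that the right-hand side is exactly the entropy of the uniform prior posterior, and the equality case of Gibbs' inequality delivers uniqueness of the maximizer (up to $\mu_\Lambda$-null sets).

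The main obstacle is a careful handling of regularity: the coarea formula is classical for $C^1$, everywhere-submersive maps, while Assumption~\ref{Assump:2} only provides $C^1$ smoothness and full-rank $J_Q$ off a lower-dimensional singular set $S\subset U_\Lambda$. I would apply coarea on $\Lambda\setminus S$, noting that $\mu_\Lambda(S)=0$ so that $S$ contributes nothing to either $H(P_\Lambda)$ or the constraint, and verify that the resulting weighted surface measure $w\,ds$ on each contour coincides $\mu_\mathcal{D}$-a.e.\ with the conditional measure delivered by Theorem~\ref{thm:dis2} -- a consistency already tacitly used in the density formula~\eqref{UASIPdengen} of Theorem~\ref{thm:sol}.
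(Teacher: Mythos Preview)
Your proof is correct and rests on the same two ingredients as the paper's: the coarea decomposition of integrals over $\Lambda$ and the non-negativity of relative entropy. The organization differs. The paper works globally: it writes $D(\rho_\Lambda\|\overline{\rho}_\Lambda)=-H(\rho_\Lambda)+H(\rho_\Lambda,\overline{\rho}_\Lambda)$, then uses coarea together with the explicit density formula~\eqref{UASIPdengen} to show that the cross-entropy $H(\rho_\Lambda,\overline{\rho}_\Lambda)$ collapses to $H(\overline{\rho}_\Lambda)$ (the key step being that $\log\overline{\rho}_\Lambda$ is constant on each contour and the remaining inner integral reproduces $\tilde{\rho}_{\mathrm{p},\mathcal{D}}(q)$, which cancels). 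You instead perform a contour-wise constrained maximization, invoking Gibbs' inequality on each $Q^{-1}(q)$ with respect to the probability measure $(w/W)\,ds$. Your route is marginally more general, since it applies to any absolutely continuous SIP solution satisfying $\int_{Q^{-1}(q)}\rho_\Lambda w\,ds=\rho_\mathcal{D}(q)$ a.e., without needing the competitor to be in the form~\eqref{UASIPdengen} for some prior; the paper's computation leans on that specific representation. Both arguments yield the same bound $H(P_\Lambda)\le\int_\mathcal{D}\rho_\mathcal{D}\log(\tilde{\rho}_\mathcal{D}/\rho_\mathcal{D})\,d\mu_\mathcal{D}=H(\overline{P}_\Lambda)$, and your equality case via Gibbs also gives uniqueness, which the paper's $D\ge 0$ argument implicitly contains but does not state.
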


Note that the estimators developed below handle general priors. A prior that reflects prior knowledge is generally preferable over the uniform prior, see Example~\ref{droppingballs}.

\subsection{Continuity of the posterior density}\label{sec:content}

We develop sufficient regularity conditions that guarantee the density of the posterior is continuous a.e. A key ingredient is the a.e.~continuity of the surface  (Hausdorff) measure of the generalized contours, which depends on the interaction of the generalized contours with the boundary of $\Lambda$.

\begin{figure}[htb]
	\begin{center}
		\includegraphics[height=1.1in]{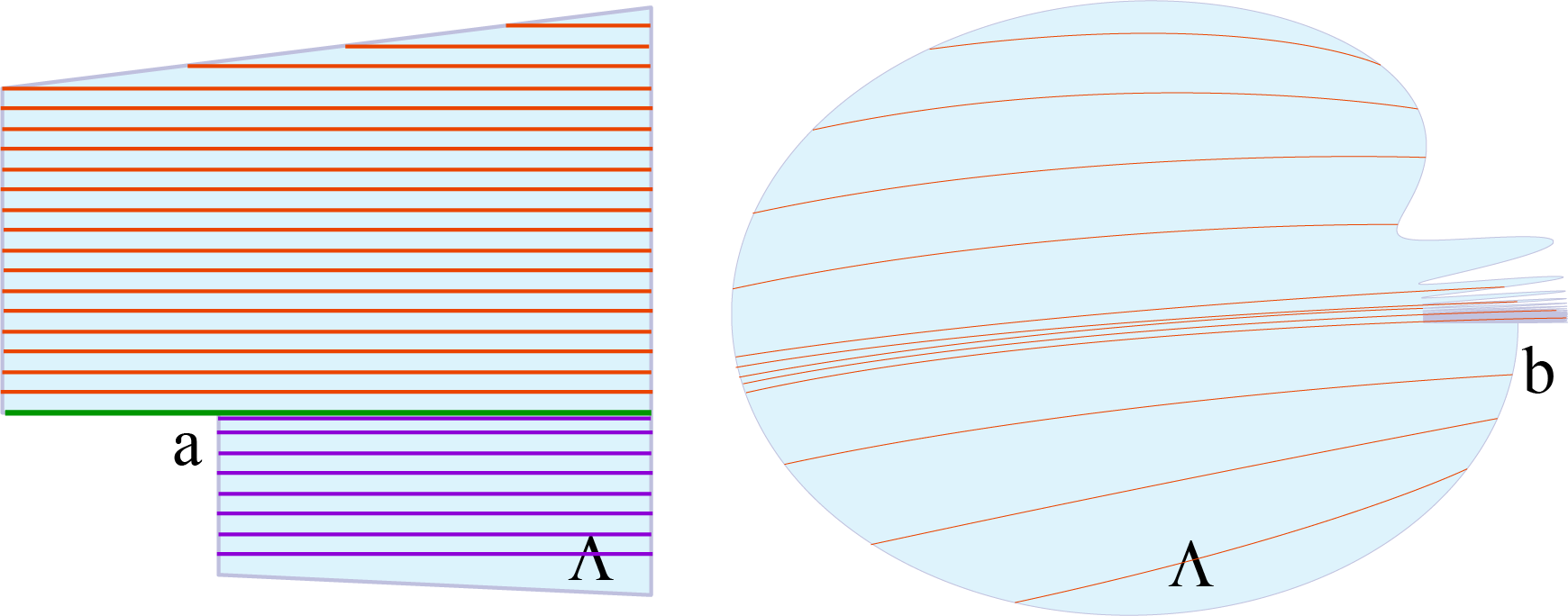}
		\caption{Left: generalized contour lines colored in red, green, and purple. The green contour line intersecting the boundary of $\Lambda$ at the point $a$ is parallel to the boundary  and thus violates the rank condition on 	$\big( J_{Q} \;	J_{B} \big)^\top$. The measure of the contours obviously change discontinuously at the point $a$. Right: a two dimensional domain defined by a map $B$ that fails the assumptions. The boundary has Hausdorff dimension larger than $1$ and measures of the contours that intersect the boundary at point $b$ change rapidly as the intersection point changes.}
		\label{illboundary}
	\end{center}
\end{figure}

To analyze the interaction of the generalized contours with the boundary of $\Lambda$, we assume the boundary is a piecewise smooth manifold determined as the level set of a locally smooth function. Specifically, we assume
\begin{assumption}\label{Assump:3}
 $\partial \Lambda = \{\lambda : B(\lambda) = 0\}$ is determined by a continuously differentiable function $B: U_\Lambda \to \mathbb{R}$ and, for $q\in \mathcal{D}$ with $\partial \Lambda \cap Q^{-1}(q)\neq \emptyset$, the boundary $Q^{-1}(q)\cap \partial\Lambda$ of $Q^{-1}(q)$ in $\Lambda$  is determined by the augmented system, 
\begin{equation*}
	\begin{cases}
		Q(\lambda)-q =0,\\
		B(\lambda) =0.
	\end{cases}
\end{equation*}
For all $q\in \mathcal{D}$ with $\partial \Lambda \cap Q^{-1}(q)\neq \emptyset$, 
$\big(J_{Q} \;J_{B} \big)^\top$ is full rank on $\partial \Lambda \cap Q^{-1}(q)$.
\end{assumption}
\noindent The condition $\big(J_{Q} \;J_{B} \big)^\top$ is full rank on $\partial \Lambda \cap Q^{-1}(q)$ implies that the generalized contour is not tangent to the boundary at $\partial \Lambda \cap Q^{-1}(q)$. We illustrate problems that can arise if these assumptions are not satisfied in Fig.~\ref{illboundary}.  We prove the following result in Sec.~\ref{sec:proofconti}. 
\begin{theorem}\label{thm:conti}
		Assume Assumptions \ref{Assump:1}, \ref{Assump:2}, \ref{Assump:shouldbe3}, \ref{Assump:shouldbe4}, and \ref{Assump:3} hold, and  the density of the prior $\rho_{\mathrm{p}}$ is continuous a.e. Then,
	$\tilde{\rho}_{\mathrm{p},\mathcal{D}}(q)$ is continuous in a neighborhood of each point $q \in \mathcal{D}$. 	In addition, if the density function $\rho_\mathcal{D}$ is continuous a.e., then the density ${\rho}_\Lambda$ of the posterior $P_\Lambda$ corresponding to the prior $P_{\mathrm{p}}$ is continuous in a neighborhood of each point $q\in \mathcal{D}$.
\end{theorem}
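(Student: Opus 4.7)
The strategy is to prove continuity of $\tilde{\rho}_{\mathrm{p},\mathcal{D}}$ by a finite-atlas local-parametrization argument, and then obtain continuity of $\rho_\Lambda$ as an essentially algebraic consequence of the explicit formula \eqref{UASIPdengen}. First, I would fix $q_0 \in \mathcal{D}$ outside the measure-zero exceptional set from Theorem~\ref{thm:GCisMan}, so that $Q^{-1}(q_0) \cap \Lambda$ is a compact piecewise-smooth $(n-m)$-manifold with boundary. Using compactness, cover it by finitely many local charts of two types. For \emph{interior} points, the full-rank condition on $J_Q$ from Assumption~\ref{Assump:2}(\ref{assum3}) together with the implicit function theorem yields a parameterization $\lambda = \psi_k(s,q)$ with $s$ ranging over an open set $D_k \subset \mathbb{R}^{n-m}$, smooth jointly in $s$ and in $q$ near $q_0$. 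For \emph{boundary} points in $\partial\Lambda \cap Q^{-1}(q_0)$, Assumption~\ref{Assump:3} provides full rank for the augmented Jacobian $(J_Q^\top, J_B^\top)^\top$, and applying the implicit function theorem to the augmented system parametrizes the chart portion of the contour together with its boundary on $\partial\Lambda$, producing a chart domain $D_k(q) \subset \mathbb{R}^{n-m}$ whose boundary varies smoothly with $q$.

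Second, I would rewrite the surface integral defining $\tilde{\rho}_{\mathrm{p},\mathcal{D}}(q)$ as a finite sum of chart integrals
\[
\tilde{\rho}_{\mathrm{p},\mathcal{D}}(q) = \sum_k \int_{D_k(q)} \rho_{\mathrm{p}}(\psi_k(s,q)) \, \frac{g_k(s,q)}{\sqrt{\det(J_Q J_Q^\top)}\big|_{\psi_k(s,q)}} \, ds,
\]
where $g_k(s,q)$ is the local volume-element factor from the parametrization, jointly continuous in $(s,q)$ on the relevant compact set. As $q \to q_0$, the indicator $\chi_{D_k(q)}$ converges to $\chi_{D_k(q_0)}$ pointwise except on the null set $\partial D_k(q_0)$ (using the smooth $q$-dependence of the chart boundaries), and the a.e. continuity of $\rho_{\mathrm{p}}$ gives pointwise a.e. convergence of the full integrand. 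A uniform bound (the Jacobian factor is bounded away from zero on each compact chart, and $\rho_{\mathrm{p}}$ is bounded by an integrable majorant) then lets dominated convergence conclude $\tilde{\rho}_{\mathrm{p},\mathcal{D}}(q) \to \tilde{\rho}_{\mathrm{p},\mathcal{D}}(q_0)$.

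Third, for the continuity of $\rho_\Lambda$, I would use the formula \eqref{UASIPdengen}. Since $Q$ is continuously differentiable, $\rho_\mathcal{D} \circ Q$ and $\tilde{\rho}_{\mathrm{p},\mathcal{D}} \circ Q$ are a.e. continuous on $\Lambda$ (wherever $J_Q$ has full rank, $Q$ is a local submersion and so pulls null sets in $\mathcal{D}$ back to null sets in $\Lambda$); $\rho_{\mathrm{p}}$ is a.e. continuous by hypothesis; and the denominator $\tilde{\rho}_{\mathrm{p},\mathcal{D}}(Q(\lambda))$ is strictly positive almost everywhere by Assumption~\ref{Assump:shouldbe4}. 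The product/quotient is therefore continuous a.e. on $\Lambda$, which gives the asserted continuity in a neighborhood of (almost every point of) $Q^{-1}(q)$ for each good $q \in \mathcal{D}$.

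The main obstacle is the boundary-chart step: the pointwise-a.e. convergence $\chi_{D_k(q)} \to \chi_{D_k(q_0)}$ is where Assumption~\ref{Assump:3} does the real work. As Figure~\ref{illboundary} illustrates, if a contour becomes tangent to $\partial\Lambda$ the chart domain can jump discontinuously, producing a discontinuity in the contour length and in $\tilde{\rho}_{\mathrm{p},\mathcal{D}}$. Making the implicit-function-theorem argument uniform in $q$ near $q_0$, and thereby rigorously extracting the a.e. convergence of the $D_k(q)$ from the transversality condition on $(J_Q^\top, J_B^\top)^\top$, is the technical heart of the proof; the rest is a standard dominated-convergence wrap-up.
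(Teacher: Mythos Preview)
Your proposal is correct and follows essentially the same route as the paper: a finite-atlas argument via the implicit function theorem, with interior charts coming from the full-rank condition on $J_Q$ and boundary charts from the augmented transversality condition of Assumption~\ref{Assump:3}, followed by showing that both the chart domains and the integrands vary continuously in $q$. The paper organizes this slightly differently---it first proves a lemma (Theorem~\ref{lem:cont}) under an extra uniform column-selection hypothesis and then removes that hypothesis by a further finite cover, and it uses explicit uniform-continuity $\epsilon$--$\delta$ estimates in place of your dominated-convergence wrap-up---but the mathematical content is the same.
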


We say that a statistical calibration problem that satisfies Assumptions \ref{Assump:1}, \ref{Assump:2}, \ref{Assump:shouldbe3}, \ref{Assump:shouldbe4}, and \ref{Assump:3} is \textbf{\textit{regular}}. We  summarize the results in classic terms.
\begin{theorem}\label{wellposed}
Assume the SCP is regular and  the densities  $\rho_{\mathrm{p}}$  and $\rho_\mathcal{D}$ are continuous a.e. The Bayesian solution of the aSCP is well posed, i.e., the solution exists, it is unique, and it is continuous a.e.
\end{theorem}

In \cite{Prasadan2026}, we strengthen the continuity result by investigating the continuity properties, or ``stability'', of the solution operator of the nonparametric Bayesian calibration procedure.

%\section*{\hskip -.65cm {\large Part 2: Estimation}}

\section{Estimating the posterior}\label{sec:numsol}

We consider nonparametric estimation of the  posteriors solving the aSCIP and SCP. We focus on an importance sampling approach, but we describe an accept-reject algorithm in Sec.~\ref{sec:reject}. 

\subsection{Estimating the posterior for the aSCP}\label{sec:absest}

The posterior for the aSCP is estimated using simple function approximations computed respectively using the field data, assumed to randomly distributed according to $P_{\mathcal{D}}$, and random samples in $\Lambda$ distributed according to the prior. We want to allow for generality about how these samples are chosen, e.g., to accommodate sampling approaches suitable for high dimensions. The key properties required for the approximations is captured in the following assumption.

\begin{assumption}\label{Assump:4}
	Let $\{M_\ell\}_{\ell=1}^\infty$ and $\{N_j\}_{j=1}^\infty$ be strictly monotone sequences of positive integers. Collections of  points $ \big\{\{d^{(\ell)}_i\}_{i=1}^{M_\ell}\big\}_{\ell = 1}^\infty$ and associated Voronoi tessellations $\{\mathcal{I}_{M_\ell}\}_{\ell = 1}^\infty$ in $\mathcal{D}$ and points $ \big\{\{\lambda^{(\jmath)}_i\}_{i=1}^{N_\jmath}\big\}_{\jmath = 1}^\infty$ and associated Voronoi tessellations  $\{\mathcal{T}_{N_\jmath}\}_{\jmath = 1}^\infty$ in $\Lambda$ are $\mathcal{B}_{\mathcal{D}}$-consistent resp. $\mathcal{B}_\Lambda$-consistent. Further, for sets $C\in \mathcal{B}_{\mathcal{D}}$ with $\mu_{\mathcal{D}}(\partial C) = 0$ and $A\in \mathcal{B}_{\Lambda}$ with $\mu_{\Lambda}(\partial A) = 0$,
	\begin{equation}\label{setapproxassump}
		\lim_{\ell \to \infty} \mu_{{\mathcal{D}}}(C_{M_\ell} \triangle C) = 0\; \mathrm{a.s.} \; \mathrm{and} \;  \lim_{\jmath \to \infty} \mu_{{\Lambda}}(A_{N_\jmath} \triangle A) = 0\; 
		\mathrm{ a.s.} 
	\end{equation}
	where $\triangle$ indicates the \textbf{\textit{symmetric set difference}}.
\end{assumption}

For background material on  $\mathcal{B}$-consistent random samples, see Sec.~\ref{sec:randprop}. 
Generating collections of points by a Poisson point process that has an a.e.~positive probability density function with respect to the Lebesgue measure is $\mathcal{B}_{\Omega}$-consistent, see Theorem~\ref{lem:SLLNS}\footnote{The approximation result \eqref{setappconv} also holds for sequences of partitions of $\Omega$ comprising open or closed generalized rectangles with fixed aspect ratios or open balls \cite{folland2013real,PT_book}.}.

The approximation result is proved in Sec.~\ref{sec:disapproof}.

\begin{theorem} \label{thm:disap}
	Assume the SCP is regular, Assumption \ref{Assump:4} holds, and the densities  $\rho_{\mathrm{p}}$  and $\rho_\mathcal{D}$ are continuous a.e. There exists a sequence of approximations $P_{\Lambda,M_\ell,N_\jmath}$ constructed from simple functions and requiring only calculations of measures of cells in the partitions  $\{\mathcal{I}_{M_\ell}\}_{\ell = 1}^\infty$ and $\{\mathcal{T}_{N_\jmath}\}_{\jmath = 1}^\infty$ such that 
	\begin{equation}\label{disapconv}
		{P}_\Lambda(A)= \lim_{\ell\to \infty}\lim_{\jmath\to \infty}{P}_{\Lambda,M_\ell,N_\jmath}(A_{N_\jmath}),
	\end{equation}
	for all sets $A \in \mathcal{B}_\Lambda$ with $\mu_\Lambda (\partial A)=0$.
\end{theorem}

The order of the limits in \eqref{disapconv} is important. As the cells in the partitions of $\mathcal{D}$ become finer, their inverse images in $\Lambda$ become ``narrower'', and the cells partitioning $\Lambda$ must also become finer.

\subsection{Estimating the posterior for the SCP}\label{sec:basicest}

The methodology for estimating the posterior distribution for the SCP is stated in Algorithm~\ref{alg:estSCP}.
\begin{algorithm}[htb]
	\caption{Estimating the SCP posterior}\label{alg:estSCP}
	\begin{algorithmic}[1]
		
		\STATE  Compute an empirical estimate of the distribution or the density for the field data 
		
		\STATE Evaluate the computer model at a set of random samples from the prior
		
		\STATE Combine 1.~and 2.~using  importance sampling to reweight the samples from the prior to obtain a sample from the corresponding posterior solving the SCP (using \eqref{accessapp} below)

	\end{algorithmic}% end for begin
	
\end{algorithm}

Important implementation choices are the methodologies used to estimate the distribution from the field data and to draw samples from the prior. In this paper, we analyze an estimator computed using a  histogram estimate of $\rho_{\mathcal{D}}$ and plain random sampling from the prior. This is not unreasonable, given that a large proportion of computer model calibration problems involve scalar or low dimensional observations and many applications involve  parameter spaces that are not very high dimensional. In general, we view the treatment of a histogram estimate for the density of the field data as providing a ``baseline'' approximation result. We extend the main convergence result to other estimators for the distribution of the field data in Sec.~\ref{sec:altrhoest}.

In cases where the parameters and possibly the computer model output are high dimensional, the methodologies for the first two steps of Algorithm~\ref{alg:estSCP} must be chosen appropriately, e.g., employing adaptive sampling, Markov Chain Monte Carlo sampling, machine learning, and so on. Computational issues  are discussed in \cite{ZM2023, rumbell2023novel, MSB+22, tran2021solving, BGW2020, FBB25, li2025inverse,BH20, MSB+22,RHB25,ButWilYen,BGM+17,BJW18a,condSIP,BBW24,li2025inverse,ZhuEstep2024,ZhuEstep2025}. 

Nominally, there are three discretizations involved with the SCP; the number of observed datum $K$, the number of cells $M$ in a partition of $\mathcal{D}$, and the number of cells $N$  in a partition of $\Lambda$.  However, limits on the accuracy of empirical density approximations means that $M$ must be related to $K$ in practice.  We denote the field observations by $\{q_i \}_{i=1}^\infty$, which we assume is a random sample generated by a Poisson point process with distribution $P_{\mathcal{D}}$.  For $K\geq 1$, we construct an approximation using  $\{q_i \}_{i=1}^K$ that is associated with a partition $\mathcal{I}_{M_K}$ of $\mathcal{D}$, where the number of cells $M_K$ in $\mathcal{I}_{M_K}$ is a function of $K$ that  tends monotonically to  $\infty$ as $K\to \infty$. We assume the family $\{\mathcal{I}_{M_K}\}_{K = 1}^\infty$ satisfies Assumption~\ref{Assump:4}. The approximation is also associated with partitions $\{\mathcal{T}_{N_\jmath}\}_{\jmath = 1}^\infty$ of $\Lambda$, which we assume satisfies Assumption~\ref{Assump:4}.

To construct the approximation, we begin with
\begin{equation}\label{hatpi}
	\hat{p}_{K,i} = \frac{\# q_k \in I_i^{(K)}}{K}, \quad 1 \leq i \leq M_K.
\end{equation}
We generalize to other estimators in Sec.~\ref{sec:altrhoest}. The approximation for ${P}_\Lambda$ is
\begin{equation}\label{UAcountapp}
\hat{P}_{\Lambda,K,N_\jmath}(A) =
\sum_{i=1}^{M_K} \, \frac{\# \lambda^{(\jmath)}_j \in A \cap Q^{-1}(I_i^{(K)})}{\# \lambda^{(\jmath)}_j  \in Q^{-1}(I_i^{(K)})}\, \hat{p}_{K,i} , \quad A \in \mathcal{B}_\Lambda.
\end{equation}
We  write the estimator in a form more conducive to analysis,
\begin{equation}\label{UAuseapp}
	\hat{P}_{\Lambda,K,N_\jmath}(A) = \sum_{i=1}^{M_K} \, \frac{\sum_{j=1}^{N_\jmath} \chi_{\lambda_j^{(\jmath)}}(Q^{-1}(I_{i}^{(K)})\cap A ) }{ \sum_{j=1}^{N_\jmath} \chi_{\lambda_j^{(\jmath)}}(Q^{-1}(I_{i}^{(K)})) }\cdot \frac{1}{K} \sum_{k=1}^{K} \chi_{q_k}(I_{i}^{(K)}), \quad A \in \mathcal{B}_\Lambda ,
\end{equation}
where  $\chi_{A}$ denotes the \textbf{\textit{characteristic}} or \textbf{\textit{indicator function}} of $A$. Observing that computing $Q^{-1}(I_{i}^{(K)})$ is impractical, we can write
\begin{equation}\label{accessapp}
\hat{P}_{\Lambda,K,N_\jmath}(A) =
	\sum_{i=1}^{M_K} \, \frac{\# \lambda_j^{(\jmath)} \in A: \, Q(\lambda_j^{(\jmath)}) \in I_i^{(K)}}{\# \lambda_j^{(\jmath)} :\,  \, Q(\lambda_j^{(\jmath)}) \in I_i^{(K)}}\; \frac{\# q_k \in I_i^{(K)}}{K}, \quad A \in \mathcal{B}_\Lambda,
\end{equation}
which shows that the estimate involves counting points in cells in the partition of $\mathcal{D}$.

We illustrate the estimator in Fig.~\ref{disint2}.  We partition the range $\mathcal{D}$ into a collection of cells and use the field data to approximate the probability $P_{\mathcal{D}}$ of each cell.  The inverse image under $Q$ of each cell in the partition of $\mathcal{D}$ is a ``fat'' contour. We compute a sample of points in $\Lambda$ distributed according to the prior. We estimate the probability of the intersection of a set $A$ with a fat contour by taking the ratio of the number of samples in the  intersection of $A$ with the fat contour and the total number of samples in the fat contour. Finally, we sum up all the estimates over fat contours that intersect $A$, reweighting the estimate in each fat contour with the probability of its corresponding cell in $\mathcal{D}$. Note that the a.e. continuity of the surface measure of the contours implies that when a fat contour is sufficiently thin, the surface measures of almost all the contours in it are approximately equal.

%Note that we could use a Bayesian estimate of the density, e.g., a Dirichlet process mixture model.

\subsubsection{Statistical properties of the estimator}\label{sec:eststatprop}

We derive properties of the estimator that quantify convergence in the limit of large numbers of samples. The results imply that if sufficient random samples from the prior are used to build the importance sampling estimator for the posterior, the error of the estimator is determined primarily by how well the distribution of the field observations can be estimated. That is a statistical sample size problem and it is the best possible result we can expect. We illustrate in Example~\ref{droppingballs}.

The ratio $\breve{p}_i^{(K)} \big/ p_i^{(K)}$, where
\[ 
p_i^{(K)} = P_{\mathrm{p}}\big(Q^{-1}(I_{i}^{(K)})\big) \quad \mathrm{and} \quad  \breve{p}_i^{(K)} = P_{\mathrm{p}}\big(A\cap Q^{-1}(I_{i}^{(K)})\big), 
\]
for all $i$ and $K$, is the probability that a point selected at random on $Q^{-1}(q)$ intersects $A$. It features prominently. We prove the next theorem in Sec.~\ref{sec:sampsolproof}.
\begin{theorem}\label{thm:sampsolprop}
Assume the SCP is regular, Assumption \ref{Assump:4} holds, and the densities  $\rho_{\mathrm{p}}$  and $\rho_\mathcal{D}$ are continuous a.e. 
\begin{enumerate}
\item\label{thm:sampsolprop1} The estimator is asymptotically unbiased. For $A \in \mathcal{B}_\Lambda$,
\begin{equation}\label{Esampsol}
{E}\big(\hat{P}_{\Lambda,K,N_\jmath}(A) \big) = 	\sum_{i=1}^{M_K} \frac{\breve{p}_i^{(K)}}{p_i^{(K)}}\big(1 - (1-p_i^{(K)})^{N_\jmath})\, P_{\mathcal{D}}\big(I_{i}^{(K)}\big),
\end{equation}
and $
	\lim_{K \to \infty }\lim_{\jmath \to \infty } {{E}\big(\hat{P}_{\Lambda,K,N_\jmath}(A) \big))} = {P}_\Lambda(A)$.
\item\label{thm:sampsolprop2} For $A \in \mathcal{B}_\Lambda$,
\begin{multline}\label{Varsampsol}
\lim_{\jmath \to \infty}\mathrm{Var}\,\big(\hat{P}_{\Lambda,K,N_\jmath}(A) \big)
\leq 	\frac{1}{K}\Big(\sum_{i=1}^{M_K} P_{\mathcal{D}}(I_{i}^{(K)}) \frac{(\breve{p}_i^{(K)})^2}{(p_i^{(K)})^2} -  \Big(\sum_{i=1}^{M_K} P_{\mathcal{D}}(I_{i}^{(K)}) \frac{\breve{p}_i^{(K)}}{p_i^{(K)}}\Big)^2\Big)
\end{multline}
and  $\lim_{K \to \infty}\lim_{\jmath \to \infty} \mathrm{Var}\,\big(\hat{P}_{\Lambda,K,N_\jmath}(A) \big) = 0.$
\item\label{thm:sampsolprop3} The estimator is strongly consistent. For $A \in \mathcal{B}_\Lambda$,
\begin{equation}\label{samappconsist}
	\lim_{K \to \infty} \lim_{\jmath \to \infty} \hat{P}_{\Lambda,K,N_\jmath}(A) = {P}_\Lambda (A).
\end{equation}
\end{enumerate}	
\end{theorem}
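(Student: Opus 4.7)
The plan is to treat the three claims in order, exploiting the independence between the prior samples $\{\lambda_j^{(\jmath)}\}$ (from $P_{\mathrm{p}}$) and the data $\{q_k\}$ (from $P_{\mathcal{D}}$), and conditioning on the $\lambda$-samples throughout. Introduce the ratio $R_i := \frac{\sum_j \chi_{\lambda_j^{(\jmath)}}(A \cap Q^{-1}(I_i^{(K)}))}{\sum_j \chi_{\lambda_j^{(\jmath)}}(Q^{-1}(I_i^{(K)}))}$ (with the convention $0/0 := 0$) and $X_i := \frac{1}{K}\sum_k \chi_{q_k}(I_i^{(K)})$, so that $\hat{P}_{\Lambda,K,N_\jmath}(A) = \sum_{i=1}^{M_K} R_i X_i$ with the families $\{R_i\}_i$ and $\{X_i\}_i$ independent of each other.

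For Part \ref{thm:sampsolprop1}, I would first condition on the count $n_i := \sum_j \chi_{\lambda_j^{(\jmath)}}(Q^{-1}(I_i^{(K)})) \sim \mathrm{Bin}(N_\jmath, p_i^{(K)})$; the numerator is then $\mathrm{Bin}(n_i, \breve{p}_i^{(K)}/p_i^{(K)})$ given $n_i \geq 1$, whence $E[R_i] = (1 - (1-p_i^{(K)})^{N_\jmath})\,\breve{p}_i^{(K)}/p_i^{(K)}$. Combined with $E[X_i] = P_{\mathcal{D}}(I_i^{(K)})$ and the $R_i \perp X_i$ independence this produces \eqref{Esampsol} directly. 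Taking $\jmath \to \infty$ sends $(1-p_i^{(K)})^{N_\jmath} \to 0$ for every cell with $p_i^{(K)} > 0$ (cells with $p_i^{(K)} = 0$ contribute nothing by convention), and taking then $K \to \infty$ the disintegration \eqref{SIPAbsDis} applied to $P_{\mathrm{p}}$ identifies $\breve{p}_i^{(K)}/p_i^{(K)}$ as an average of $P_N(A|\cdot)$ over $I_i^{(K)}$; the a.e. continuity of that conditional probability (inherited from Theorem~\ref{thm:conti}) together with the $\mathcal{B}_{\mathcal{D}}$-consistency of $\{\mathcal{I}_{M_K}\}$ in Assumption~\ref{Assump:4} then lets one recognize the resulting Riemann-type sum as $\int_{\mathcal{D}} P_N(A|q)\,dP_{\mathcal{D}}(q) = P_\Lambda(A)$.

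For Part \ref{thm:sampsolprop2}, I would use the law of total variance. Conditional on the $\lambda$-samples the $R_i$ are constants and $(K X_i)_i \sim \mathrm{Multinomial}(K, (P_{\mathcal{D}}(I_i^{(K)}))_i)$, so
\[
\mathrm{Var}\bigl(\hat{P}_{\Lambda,K,N_\jmath}(A) \bigm| \lambda \bigr) = \frac{1}{K}\Bigl[\sum_i R_i^2\, P_{\mathcal{D}}(I_i^{(K)}) - \Bigl(\sum_i R_i\, P_{\mathcal{D}}(I_i^{(K)})\Bigr)^{\!2}\Bigr].
\]
Taking expectations and then letting $\jmath \to \infty$, the SLLN forces $R_i \to \breve{p}_i^{(K)}/p_i^{(K)}$ almost surely on $\{n_i \geq 1\}$ while $R_i$ is uniformly bounded by $1$, so dominated convergence delivers the bracketed expression in \eqref{Varsampsol}. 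The remaining $\mathrm{Var}(E[\hat{P}|\lambda])$ contribution vanishes in the $\jmath$ limit because $\sum_i R_i P_{\mathcal{D}}(I_i^{(K)})$ is a bounded sum converging a.s.\ to a deterministic constant. The prefactor $1/K$ then drives the whole bound to $0$ as $K\to\infty$, since the bracketed quantity is uniformly bounded (using $\breve{p}_i^{(K)}/p_i^{(K)} \in [0,1]$).

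For Part \ref{thm:sampsolprop3}, I would take the iterated limits in the stated order. For fixed $K$ the partition is fixed and finite, so the SLLN applied cell-by-cell gives $\lim_{\jmath \to \infty}\hat{P}_{\Lambda,K,N_\jmath}(A) = \sum_{i=1}^{M_K} (\breve{p}_i^{(K)}/p_i^{(K)})\,\hat{p}_{K,i}$ almost surely. I would then split
\[
\sum_i \frac{\breve{p}_i^{(K)}}{p_i^{(K)}} \hat{p}_{K,i} = \sum_i \frac{\breve{p}_i^{(K)}}{p_i^{(K)}}\, P_{\mathcal{D}}(I_i^{(K)}) + \sum_i \frac{\breve{p}_i^{(K)}}{p_i^{(K)}} \bigl(\hat{p}_{K,i} - P_{\mathcal{D}}(I_i^{(K)})\bigr).
\]
The first sum converges to $P_\Lambda(A)$ by the disintegration/continuity argument already used for Part \ref{thm:sampsolprop1}. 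The main obstacle lies in the second sum: because the cell count $M_K$ grows with $K$, one cannot merely apply SLLN cell-by-cell. I would handle this with a uniform, Glivenko--Cantelli-style argument for the empirical measure of $\{q_k\}$ against the growing class of $\mathcal{I}_{M_K}$-measurable sets, exploiting the uniform bound $\breve{p}_i^{(K)}/p_i^{(K)} \leq 1$, the $\mathcal{B}_{\mathcal{D}}$-consistency of Assumption~\ref{Assump:4} (in particular \eqref{setapproxassump}), and the a.e.\ continuity of $\rho_{\mathcal{D}}$ to bound the total-variation discrepancy between the empirical $\hat{P}_{\mathcal{D},K}$ and $P_{\mathcal{D}}$ on $\sigma(\mathcal{I}_{M_K})$.
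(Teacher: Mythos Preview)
Your treatment of Parts~\ref{thm:sampsolprop1} and~\ref{thm:sampsolprop2} is correct and, for Part~\ref{thm:sampsolprop2}, arguably cleaner than the paper's route. For Part~\ref{thm:sampsolprop1} your conditioning on the binomial count $n_i$ is equivalent to the paper's direct expansion of $E[R_i]$ as a sum over denominator values; both land on $\frac{\breve{p}_i^{(K)}}{p_i^{(K)}}\bigl(1-(1-p_i^{(K)})^{N_\jmath}\bigr)$. For the $K\to\infty$ limit, note that Theorem~\ref{thm:conti} yields continuity of $\rho_\Lambda$ and $\tilde\rho_{\mathrm{p},\mathcal{D}}$, not directly of $q\mapsto P_{\mathrm{p},N}(A|q)$; the paper sidesteps this by applying the Lebesgue Differentiation Theorem to the averages $\frac{\int_{I_i^{(K)}} P_{\mathrm{p},N}(A|q)\,d\tilde P_{\mathrm{p},\mathcal{D}}}{\tilde P_{\mathrm{p},\mathcal{D}}(I_i^{(K)})}$ and then Vitali convergence, which needs only measurability and uniform integrability. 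Your Riemann-sum phrasing will go through once you replace the continuity appeal by that differentiation-theorem step. For Part~\ref{thm:sampsolprop2} your law-of-total-variance decomposition is a genuinely different route from the paper's direct second-moment computation with the $A_1$, $A_2$ bounds; yours is shorter, transparently handles the dependence between the $R_i$'s and the $X_i$'s, and in fact shows that \eqref{Varsampsol} holds with equality in the $\jmath\to\infty$ limit.

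Part~\ref{thm:sampsolprop3} has a gap in your plan. The problematic term $\sum_i \frac{\breve{p}_i^{(K)}}{p_i^{(K)}}\bigl(\hat p_{K,i}-P_{\mathcal{D}}(I_i^{(K)})\bigr)$ is what you propose to control by a Glivenko--Cantelli argument, i.e.\ by bounding via $\sum_i|\hat p_{K,i}-P_{\mathcal{D}}(I_i^{(K)})|$. But that total-variation distance on a partition with $M_K$ cells does not go to zero a.s.\ without a growth condition like $M_K/K\to 0$, which the theorem does not assume. The paper avoids this entirely: it rewrites the term as $\frac{1}{K}\sum_{k=1}^K X_k^{(K)}$ with $X_k^{(K)}=\sum_i \frac{\breve{p}_i^{(K)}}{p_i^{(K)}}\bigl(\chi_{q_k}(I_i^{(K)})-P_{\mathcal{D}}(I_i^{(K)})\bigr)$, a row-wise i.i.d.\ bounded mean-zero array, and applies Hoeffding's inequality to get $P\bigl(|\frac{1}{K}\sum_k X_k^{(K)}|\ge\epsilon\bigr)\le 2e^{-cK\epsilon^2}$ uniformly in the partition; summability in $K$ plus Borel--Cantelli then gives a.s.\ convergence regardless of how fast $M_K$ grows. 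Replace your Glivenko--Cantelli sketch with this concentration-plus-Borel--Cantelli step and the argument closes.
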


\subsubsection{Asymptotic  properties of the error of the estimator}\label{sec:asymperror}
 We write,
\begin{align}
	&\hat{P}_{\Lambda,K,N_\jmath}(A) = \sum_{i=1}^{M_K} \frac{\breve{p}_i^{(K)}}{p_i^{(K)}} P_{\mathcal{D}}(I_{i}^{(K)})  + \sum_{i=1}^{M_K} \frac{\breve{p}_i^{(K)}}{p_i^{(K)}} \Big( \frac{1}{K} \sum_{k=1}^{K} \chi_{q_k}(I_{i}^{(K)})-	P_{\mathcal{D}}(I_{i}^{(K)}) \Big) \nonumber\\
		&\qquad +  \sum_{i=1}^{M_K}  \Bigg( \frac{\sum_{j=1}^{N_\jmath}\chi_{\lambda_j^{(K)}}(Q^{-1}(I_{i}^{(K)})\cap A ) }{ \sum_{j=1}^{N_\jmath} \chi_{\lambda_j^{(K)}}(Q^{-1}(I_{i}^{(K)}))} - \frac{\breve{p}_i^{(K)}}{p_i^{(K)}}\Bigg) \cdot \frac{1}{K} \sum_{k=1}^{K} \chi_{q_k}(I_{i}^{(K)})= T_1+ T_2 + T_3.\nonumber
\end{align}
Since $\lim_{K\to\infty} T_1 = {P}_\Lambda(A)$ a.e., we analyze the stochastic errors $T_2$ and $T_3$. $T_2$ depends on the error of the empirical approximations to $P_{\mathcal{D}}(I_{i}^{(K)})$ while $T_3$ depends on the empirical approximations to ${\breve{p}_i^{(K)}}/{p_i^{(K)}}$.

For $T_2$, we prove the following in Sec.~\ref{sec:asymperrorproof}.
\begin{theorem}\label{CLTerror2}
Assume the SCP is regular, Assumption \ref{Assump:4} holds, and the densities  $\rho_{\mathrm{p}}$  and $\rho_\mathcal{D}$ are continuous a.e.  Then, $
		\lim_{K\to \infty}	\sqrt{K}\, T_2 \stackrel{d}{=} \mathcal{N}(0,\sigma_{p}^2)$, where
		\begin{equation}\label{thm:CLTerror2}
			\sigma_{p}^2  = \int P_{\mathrm{p},N}(A|q)^2\, dP_{\mathcal{D}} (q) - \left(\int P_{\mathrm{p},N}(A|q)\, dP_{\mathcal{D}}(q)\right)^2.
		\end{equation}
\end{theorem}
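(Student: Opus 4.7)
\medskip\noindent\textbf{Proof proposal.} The plan is to recognize $\sqrt{K}\,T_2$ as a normalized sum of i.i.d.\ bounded random variables from a triangular array, so that the Lindeberg--Feller CLT applies, and then to show that the row variances converge to $\sigma_{p}^2$ by a dominated-convergence argument based on the a.e.\ continuity of $q\mapsto P_{\mathrm{p},N}(A|q)$.

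First, interchange the $i$- and $k$-summations to write
\begin{equation*}
T_2 \;=\; \frac{1}{K}\sum_{k=1}^{K} \big(f_K(q_k) - \mathbb{E}\,f_K(q_1)\big),
\qquad
f_K(q) \;=\; \sum_{i=1}^{M_K} \frac{\breve{p}_i^{(K)}}{p_i^{(K)}}\,\chi_{I_i^{(K)}}(q).
\end{equation*}
Since each ratio $\breve{p}_i^{(K)}/p_i^{(K)}$ is a probability, $0\le f_K\le 1$, so the centered summands $Z_{K,k}=f_K(q_k)-\mathbb{E}f_K(q_1)$ are i.i.d.\ (row-wise, under $P_{\mathcal{D}}$) and uniformly bounded by $2$. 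The Lindeberg condition is therefore immediate: for any $\epsilon>0$, $|Z_{K,k}/\sqrt{K}|\le 2/\sqrt{K}<\epsilon$ for $K$ large, so the indicator in the Lindeberg sum vanishes. It then follows from the Lindeberg--Feller CLT that
\begin{equation*}
\sqrt{K}\,T_2 \stackrel{d}{\longrightarrow} \mathcal{N}\big(0,\,\sigma_\infty^2\big),
\qquad \sigma_\infty^2 \;=\; \lim_{K\to\infty}\mathrm{Var}\,f_K(q_1),
\end{equation*}
provided the limit variance exists.

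Second, identify the pointwise limit of $f_K$. Applying the disintegration of the prior $P_{\mathrm{p}}$ over $I_i^{(K)}$ gives
\begin{equation*}
\frac{\breve{p}_i^{(K)}}{p_i^{(K)}} \;=\; \frac{1}{P_{\mathrm{p},\mathcal{D}}(I_i^{(K)})}\int_{I_i^{(K)}} P_{\mathrm{p},N}(A\,|\,q)\, dP_{\mathrm{p},\mathcal{D}}(q),
\end{equation*}
so $f_K$ is a piecewise constant approximation of $q\mapsto P_{\mathrm{p},N}(A|q)$, formed by averaging the latter over the cells $I_i^{(K)}$ of a partition whose diameters $r_{M_K}\to 0$ almost surely by Assumption~\ref{Assump:4}. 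The same surface-measure continuity argument used to prove Theorem~\ref{thm:conti}, applied to the numerator $\int_{Q^{-1}(q)\cap A}\rho_{\mathrm{p}}/\sqrt{\det(J_Q J_Q^\top)}\,ds$ and the denominator $\tilde{\rho}_{\mathrm{p},\mathcal{D}}(q)$, shows that $q\mapsto P_{\mathrm{p},N}(A|q)$ is continuous in a neighborhood of a.e.\ $q\in\mathcal{D}$ for any $A\in\mathcal{B}_\Lambda$ with $\mu_\Lambda(\partial A)=0$. A standard Lebesgue differentiation / continuity-point argument then yields $f_K(q)\to P_{\mathrm{p},N}(A|q)$ for $\mu_{\mathcal{D}}$-a.e.\ $q$, hence for $P_{\mathcal{D}}$-a.e.\ $q$ by Assumption~\ref{Assump:shouldbe3}.

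Third, pass to the limit in the variance. Because $0\le f_K\le 1$, the dominated convergence theorem gives
\begin{equation*}
\lim_{K\to\infty}\mathbb{E}\,f_K(q_1)^r \;=\; \int P_{\mathrm{p},N}(A\,|\,q)^r\, dP_{\mathcal{D}}(q) \qquad (r=1,2),
\end{equation*}
so $\mathrm{Var}\,f_K(q_1)\to \sigma_{p}^2$ with $\sigma_{p}^2$ as in \eqref{thm:CLTerror2}, completing the proof.

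\medskip\noindent\textbf{Main obstacle.} The only nontrivial step is establishing the a.e.\ continuity of $q\mapsto P_{\mathrm{p},N}(A|q)$ for general measurable $A$, which is not stated directly in the excerpt but is a straightforward adaptation of the argument behind Theorem~\ref{thm:conti}: the ratio of the two surface integrals (over $Q^{-1}(q)\cap A$ and $Q^{-1}(q)$) inherits continuity from the regularity of $Q$, the boundary assumption~\ref{Assump:3} controlling how $Q^{-1}(q)$ meets $\partial\Lambda$, and the a.e.\ continuity of $\rho_{\mathrm{p}}$. The triangular-array structure ($M_K$ grows with $K$) is otherwise handled cleanly because the summands are uniformly bounded.
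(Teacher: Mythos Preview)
Your proof is correct and follows essentially the same route as the paper: rewrite $T_2$ as a row-wise i.i.d.\ sum via $f_K$, verify Lindeberg from uniform boundedness (the paper uses Markov's inequality to the same effect), and identify the limiting variance via $f_K(q)\to P_{\mathrm{p},N}(A|q)$ and bounded convergence. The only minor difference is that the paper obtains the pointwise limit of $f_K$ directly from the Lebesgue Differentiation Theorem (this was already done in the proof of Theorem~\ref{thm:sampsolprop}, Part~1), so the a.e.\ continuity argument you flag as the main obstacle is not actually needed.
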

\noindent The second moment $\sigma^2_p$ quantifies the variation in the disintegrated conditional probabilities conditioned on the data as the observations vary.  We can control $T_2$ by increasing the amount of observed data.

Obtaining a simple asymptotic result for $T_3$ is apparently more difficult. We summarize the analysis presented in Sec.~\ref{sec:asymperrorproof}.  The size of $T_3$ is determined by the expressions,
\[
T_{3,i,\jmath} = \frac{\sum_{j=1}^{N_\jmath}\chi_{\lambda_j^{(K)}}(Q^{-1}(I_{i}^{(K)})\cap A ) }{ \sum_{j=1}^{N_\jmath} \chi_{\lambda_j^{(K)}}(Q^{-1}(I_{i}^{(K)}))} - \frac{\breve{p}_i^{(K)}}{p_i^{(K)}}, \quad 1 \leq i \leq M_K,\; 1 \leq \jmath.
\]
We prove,
\begin{theorem}\label{CLTerror3}
	Assume the SCP is regular, Assumption \ref{Assump:4} holds, and the densities  $\rho_{\mathrm{p}}$  and $\rho_\mathcal{D}$ are continuous a.e.  Then, $\lim_{\jmath \to \infty} {E}(T_{3,i,\jmath}) = 0 $ and $\lim_{\jmath \to \infty} \mathrm{Var}\, (T_{3,i,\jmath}) = 0 $.
\end{theorem}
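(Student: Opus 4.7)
The plan is to recognize that $T_{3,i,\jmath}$ is simply a ratio of sums of indicator random variables minus its target value, so the result should follow from a strong law of large numbers argument combined with bounded convergence. Since $i$, $K$, and hence $I_i^{(K)}$ are fixed throughout, only the behavior in $\jmath$ (the number of samples drawn from the prior) matters.

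First I would introduce the notation
\[
X_\jmath = \sum_{j=1}^{N_\jmath} \chi_{\lambda_j^{(\jmath)}}\bigl(Q^{-1}(I_i^{(K)})\bigr), \qquad Y_\jmath = \sum_{j=1}^{N_\jmath} \chi_{\lambda_j^{(\jmath)}}\bigl(Q^{-1}(I_i^{(K)})\cap A\bigr),
\]
so that $T_{3,i,\jmath} = Y_\jmath/X_\jmath - \breve{p}_i^{(K)}/p_i^{(K)}$ (with the convention $0/0 := 0$, which is consistent with the $(1-(1-p_i^{(K)})^{N_\jmath})$ correction factor already appearing in Theorem~\ref{thm:sampsolprop}). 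Because the samples $\lambda_j^{(\jmath)}$ are produced by a Poisson point process with distribution $P_{\mathrm{p}}$, conditional on $N_\jmath$ the indicators are i.i.d.\ Bernoulli with success probabilities $p_i^{(K)}$ and $\breve{p}_i^{(K)}$ respectively, with $Y_\jmath \leq X_\jmath$. In particular $Y_\jmath/X_\jmath \in [0,1]$, so $|T_{3,i,\jmath}| \leq 2$ uniformly in $\jmath$, regardless of the realization. Note that $p_i^{(K)} > 0$ because $I_i^{(K)}$ is a Voronoi cell of positive Lebesgue measure, $Q^{-1}(I_i^{(K)})$ has positive Lebesgue measure by the full-rank condition of Assumption~\ref{Assump:2}, and $P_{\mathrm{p}}$ assigns positive mass to such sets under Assumption~\ref{Assump:shouldbe4}.

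Next I would apply the strong law of large numbers to the numerator and denominator separately,
\[
\frac{X_\jmath}{N_\jmath} \xrightarrow{\mathrm{a.s.}} p_i^{(K)}, \qquad \frac{Y_\jmath}{N_\jmath} \xrightarrow{\mathrm{a.s.}} \breve{p}_i^{(K)},
\]
and invoke the continuous mapping theorem (division is continuous at $(\breve{p}_i^{(K)}, p_i^{(K)})$ since $p_i^{(K)} > 0$) to conclude
\[
\frac{Y_\jmath}{X_\jmath} \xrightarrow{\mathrm{a.s.}} \frac{\breve{p}_i^{(K)}}{p_i^{(K)}}, \qquad \text{hence } T_{3,i,\jmath} \xrightarrow{\mathrm{a.s.}} 0.
\]
Finally, since $|T_{3,i,\jmath}| \leq 2$ uniformly and $|T_{3,i,\jmath}|^2 \leq 4$ uniformly, the bounded convergence theorem applied to $T_{3,i,\jmath}$ and to $T_{3,i,\jmath}^2$ yields $E(T_{3,i,\jmath}) \to 0$ and $E(T_{3,i,\jmath}^2)\to 0$, so $\mathrm{Var}(T_{3,i,\jmath}) = E(T_{3,i,\jmath}^2) - (E(T_{3,i,\jmath}))^2 \to 0$.

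The only real subtlety—and the likely main obstacle—is rigorously handling the event $\{X_\jmath = 0\}$ on which the ratio is formally undefined. Under the 0/0 := 0 convention this event contributes at most $\breve{p}_i^{(K)}/p_i^{(K)}$ in absolute value to $T_{3,i,\jmath}$ and has probability $(1-p_i^{(K)})^{N_\jmath}$, which decays exponentially since $p_i^{(K)} > 0$; thus it does not disturb either the a.s.\ limit or the bounded convergence argument. A minor bookkeeping issue is that $N_\jmath$ itself may be random under the Poisson-process interpretation, but conditioning on $N_\jmath$ and noting that $N_\jmath \to \infty$ a.s.\ under Assumption~\ref{Assump:4} reduces to the i.i.d.\ case treated above.
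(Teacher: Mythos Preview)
Your argument is correct and is a genuinely different route from the paper's. The paper does not pass through almost-sure convergence and bounded convergence at all; instead it recycles the explicit combinatorial identities already derived in the proof of Theorem~\ref{thm:sampsolprop}. From \eqref{l1} one reads off directly
\[
E(T_{3,i,\jmath}) = -\frac{\breve{p}_i^{(K)}}{p_i^{(K)}}\,(1-p_i^{(K)})^{N_\jmath},
\]
and from the bounds \eqref{a1a2:a1}--\eqref{a1a2:a2} on the second moment one obtains a closed-form upper bound for $\mathrm{Var}(T_{3,i,\jmath})$ in which every term carries a factor of either $1/(N_\jmath+1)$ or $(1-p_i^{(K)})^{N_\jmath-1}$. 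The limits then follow by inspection.

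What each approach buys: the paper's computation yields explicit rates, in particular the dependence on $N_\jmath p_i^{(K)}$ that is exploited in the discussion immediately following the theorem to argue that $N_\jmath$ must grow at least like $1/p_i^{(K)}$. Your soft argument is shorter and self-contained, but gives no rate, so it would not support that discussion. One small caveat on your side: the samples $\{\lambda_j^{(\jmath)}\}$ form a triangular array rather than a single nested sequence, so the strong law does not apply verbatim; replacing ``a.s.'' by ``in probability'' (via the weak law, row by row) and invoking dominated convergence for convergence in probability fixes this with no change to the conclusion.
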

\noindent Roughly speaking, the proofs in Sec.~\ref{sec:asymperrorproof} imply that to make $T_3$ small, $N_\jmath p_i^{(K)}$ has to be large for all $i$. As $K \to \infty$, $Q^{-1}(I_{i}^{(K)})$ become ``narrower'' and $p_i^{(K)} \downarrow 0$.   $N_\jmath$ has to increase at least linearly with $1/p_i^{(K)}$ to ensure $T_3 $ becomes small.

\subsubsection{Using other estimators of $\rho_{\mathcal{D}}$}\label{sec:altrhoest}
We consider a general estimator $\hat{\rho}_{k,i} = \hat{P}_{\mathcal{D}}\big(I_i^{(k)}\big)$ in \eqref{UAcountapp}, where $\hat{P}_{\mathcal{D}}$ has density $\hat{\rho}_{\mathcal{D}}$ with respect to $\mu_{\mathcal{D}}$.  We prove the following theorem in Sec.~\ref{sec:gensampsolproof}.
\begin{theorem}\label{thm:gensampsolprop}
	Assume the SCP is regular, Assumption \ref{Assump:4} holds, and the densities  $\rho_{\mathrm{p}}$  and $\rho_\mathcal{D}$ are continuous a.e. 
	If $\hat{\rho}_{\mathcal{D}} \stackrel{L^1}{\rightarrow} \rho_{\mathcal{D}}$ $P_\mathcal{D}$ a.e., then $\hat{P}_{\Lambda,K,N_\jmath}$ is strongly consistent and \eqref{samappconsist} holds. 
\end{theorem}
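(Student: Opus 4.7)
The plan is to reduce to Theorem~\ref{thm:sampsolprop} by introducing an ``oracle'' intermediate estimator that uses the true probabilities $P_{\mathcal{D}}(I_i^{(K)})$ in place of $\hat{P}_{\mathcal{D}}(I_i^{(K)})$, and to control the swap using Scheff\'e's identity together with the hypothesized $L^1$ convergence of $\hat{\rho}_{\mathcal{D}}$.

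\textbf{Step 1 (introduce the oracle estimator).} Define
\[
\tilde{P}_{\Lambda,K,N_\jmath}(A) = \sum_{i=1}^{M_K}R_{i,\jmath}^{(K)}\,P_{\mathcal{D}}\bigl(I_i^{(K)}\bigr), \qquad R_{i,\jmath}^{(K)} = \frac{\sum_{j=1}^{N_\jmath}\chi_{\lambda_j^{(\jmath)}}\bigl(Q^{-1}(I_i^{(K)})\cap A\bigr)}{\sum_{j=1}^{N_\jmath}\chi_{\lambda_j^{(\jmath)}}\bigl(Q^{-1}(I_i^{(K)})\bigr)},
\]
and write
\[
\hat{P}_{\Lambda,K,N_\jmath}(A)-P_\Lambda(A) = \Bigl[\hat{P}_{\Lambda,K,N_\jmath}(A)-\tilde{P}_{\Lambda,K,N_\jmath}(A)\Bigr] + \Bigl[\tilde{P}_{\Lambda,K,N_\jmath}(A)-P_\Lambda(A)\Bigr].
\]

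\textbf{Step 2 (bound the oracle swap).} Because each ratio satisfies $0\le R_{i,\jmath}^{(K)}\le 1$, the first bracket is dominated by
\[
\Bigl|\hat{P}_{\Lambda,K,N_\jmath}(A)-\tilde{P}_{\Lambda,K,N_\jmath}(A)\Bigr| \le \sum_{i=1}^{M_K}\bigl|\hat{P}_{\mathcal{D}}(I_i^{(K)})-P_{\mathcal{D}}(I_i^{(K)})\bigr| \le \int_{\mathcal{D}}|\hat{\rho}_{\mathcal{D}}-\rho_{\mathcal{D}}|\,d\mu_{\mathcal{D}},
\]
where the last inequality is the standard partition/Scheff\'e bound (the cells of $\mathcal{I}_{M_K}$ tile $\mathcal{D}$ up to a set of measure zero). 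By hypothesis this tends to $0$ a.s.\ as $K\to\infty$.

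\textbf{Step 3 (the oracle estimator is consistent).} It remains to show $\tilde{P}_{\Lambda,K,N_\jmath}(A)\to P_\Lambda(A)$ a.s. This is precisely the $T_2=0$ version of the decomposition in \S~\ref{sec:asymperror}: we have
\[
\tilde{P}_{\Lambda,K,N_\jmath}(A) = \underbrace{\sum_{i=1}^{M_K}\frac{\breve{p}_i^{(K)}}{p_i^{(K)}}P_{\mathcal{D}}(I_i^{(K)})}_{T_1(K)} + \underbrace{\sum_{i=1}^{M_K}\Bigl(R_{i,\jmath}^{(K)}-\frac{\breve{p}_i^{(K)}}{p_i^{(K)}}\Bigr)P_{\mathcal{D}}(I_i^{(K)})}_{T_3'(K,\jmath)}.
\]
Under the continuity hypotheses on $\rho_{\mathrm{p}}$ and $\rho_{\mathcal{D}}$ and Assumption~\ref{Assump:3}, the argument used in the proof of Theorem~\ref{thm:sampsolprop} shows $T_1(K)\to P_\Lambda(A)$ as $K\to\infty$ (this is the $K\to\infty$ limit of \eqref{Esampsol} with $N_\jmath\to\infty$ first, which is essentially Theorem~\ref{thm:sol} applied to a Riemann-type sum over the partition $\mathcal{I}_{M_K}$). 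Meanwhile Theorem~\ref{CLTerror3} gives that each summand $R_{i,\jmath}^{(K)}-\breve{p}_i^{(K)}/p_i^{(K)}\to 0$ a.s.\ as $\jmath\to\infty$, and the same uniform-integrability/Voronoi-refinement machinery from \S~\ref{sec:asymperrorproof} yields $T_3'(K,\jmath)\to 0$ in the iterated limit.

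\textbf{Step 4 (combine).} Taking the iterated limit $\lim_{K\to\infty}\lim_{\jmath\to\infty}$ in the decomposition of Step 1 and applying Steps 2 and 3 gives $\hat{P}_{\Lambda,K,N_\jmath}(A)\to P_\Lambda(A)$ a.s., establishing \eqref{samappconsist}.

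The only genuine obstacle is in Step 3: ensuring that the oracle limit $T_1(K)\to P_\Lambda(A)$ is uniform enough in the cell geometry to survive refinement of the Voronoi tessellation $\mathcal{I}_{M_K}$. This is where Assumption~\ref{Assump:3} (so that the surface measure of contours is continuous a.e.) and the a.e.\ continuity of $\rho_\mathcal{D}$ and $\rho_{\mathrm{p}}$ are used, exactly as in the proof of Theorem~\ref{thm:sampsolprop}; the $L^1$ hypothesis on $\hat{\rho}_\mathcal{D}$ is needed only for the comparatively soft Step 2.
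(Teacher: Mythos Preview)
Your proposal is correct and follows essentially the same route as the paper: bound the swap term by $\int_{\mathcal{D}}|\hat{\rho}_{\mathcal{D}}-\rho_{\mathcal{D}}|\,d\mu_{\mathcal{D}}$ using that the ratio coefficients lie in $[0,1]$, and then invoke the machinery from the proof of Theorem~\ref{thm:sampsolprop} for the remaining oracle piece.

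The only organizational difference is that the paper takes the inner limit $\jmath\to\infty$ \emph{first} (so the random ratios $R_{i,\jmath}^{(K)}$ are already replaced by their a.s.\ limits $\breve{p}_i^{(K)}/p_i^{(K)}$ via the Law of Large Numbers), and then the entire argument reduces to showing
\[
\Bigl|\sum_{i=1}^{M_K}\frac{\breve{p}_i^{(K)}}{p_i^{(K)}}\bigl(\hat{P}_{\mathcal{D}}(I_i^{(K)})-P_{\mathcal{D}}(I_i^{(K)})\bigr)\Bigr|\le\int_{\mathcal{D}}|\hat{\rho}_{\mathcal{D}}-\rho_{\mathcal{D}}|\,d\mu_{\mathcal{D}}\to 0.
\]
This ordering avoids your Step~3 term $T_3'$ altogether: there is no need to appeal to Theorem~\ref{CLTerror3} or any uniform-integrability machinery, because for fixed $K$ there are only finitely many cells and the LLN handles each ratio directly. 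Your version is not wrong, just slightly heavier than necessary.
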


We apply this theorem to a kernel density estimate (KDE). Choosing a \textbf{\textit{kernel function}} $\Xi$ (measurable, nonnegative, and $\int \Xi \, d\mu_{\mathcal{D}} = 1$), we define
\[
\hat{\rho}_{\mathcal{D}}(q) = \frac{1}{K h_K^m} \, \sum_{i=1}^K \Xi\left(
\frac{q - q_i}{h_K}\right),
\]
where $h_K$ is a \textbf{\textit{scaling length}}. Theorem~6.1 of \cite{DevroyeGyorfi} implies that if
\begin{equation}\label{KDEscalelen}
\lim_{K\to \infty}	h_K + \big(K h_K^m\big)^{-1} = 0 ,
\end{equation}
then $\int_{\mathcal{D}} |\hat{\rho}_{\mathcal{D}} - {\rho}_{\mathcal{D}} | \, d\mu_{\mathcal{D}} \to 0$ and Theorem~\ref{thm:gensampsolprop} holds.

On another hand, if ${\rho}_{\mathcal{D}}$ belongs to a parameterized family of distributions $P_{\mathcal{D}}(q:\theta)$ with parameters $\theta \in \Theta$, we  define the estimate  $\hat{\rho}_{\mathcal{D}}(q)={\rho}_{\mathcal{D}}(q:\hat{\theta})$ for parameter estimate $\hat{\theta}$. To use Theorem~\ref{thm:gensampsolprop}, we require family of distributions and an estimate $\hat{\theta}_K$ computed from data $\{q_i\}_{i=1}^K$ such that $\hat{\rho}_{\mathcal{D}}(\cdot:\hat{\theta}_K) \stackrel{L^1}{\rightarrow} \rho_{\mathcal{D}}(\,\cdot\,:\theta)$ $P_\mathcal{D}$ a.e. as $\lim_{K \to \infty} \hat{\theta}_K = \theta$. 

\subsection{An estimator based on accept-reject sampling}\label{sec:reject}

We can use accept-reject sampling instead of importance sampling in the inversion step (Step 3) of Algorithm~\ref{alg:estSCP}. We adapt a methodology from \cite{BJW18a}, which was applied on a problem with a 100-dimensional parameter space,  to construct  an estimator that produces a collection of independent random samples approximately distributed according to the posterior. We use the same setup as for the random sampling method presented in Sec.~\ref{sec:numsol}, but simplify notation.  The algorithm, shown in Algorithm~\ref{alg:accept-reject} in Sec.~\ref{sec:rejectalg}, uses an acceptance criteria based on the ratio $\rho_{\mathcal{D}}(Q(\lambda)) \big/ \tilde{\rho}_{\mathrm{p},\mathcal{D}}(Q(\lambda))$ 
applied to a sample $\lambda$ from the prior density $\rho_{\mathrm{p}}(\lambda)$, where
\begin{equation*}
	\hat{\rho}_{\mathcal{D}}(Q(\lambda)) =  \sum_{i=1}^M\left( \frac{1}{K\,\mu_{\mathcal{D}}(I_i)} \sum_{k=1}^{K} \chi_{q_k}(I_{i})\right) \, \chi_{I_i}(Q(\lambda)), 
\end{equation*}
and
\begin{equation*}
	\widehat{\tilde{\rho}}_{\mathrm{p},\mathcal{D}}(Q(\lambda)) =  \sum_{i=1}^M\left( \frac{1}{N\,\mu_{\mathcal{D}}(I_i)} \sum_{j=1}^{N} \chi_{Q(\lambda_j)}(I_{i})\right) \, \chi_{I_i}(Q(\lambda)) ,
\end{equation*}
which can be computed after Steps 1 and 2 of of Algorithm~\ref{alg:estSCP}.  We assume $\{I_i\}_{i=1}^M$ is a partition of $\mathcal{D}$ from a family satisfying Assumption~\ref{Assump:4}, and $\{\lambda_j\}_{j=1}^N \in \Lambda$ are independent samples from the prior $P_{\mathrm{p}}$. In words, $\hat{\rho}_{\mathcal{D}}(Q(\lambda))$ is the estimated density computed from the observed data and  $\widehat{\tilde{\rho}}_{\mathrm{p},\mathcal{D}}(Q(\lambda))$  is the estimate of the density of the push-forward measure of the prior.

The accepted samples  $\{\breve{\lambda}_\ell\}_{\ell=1}^{\breve{N}}$ are  approximately distributed according to the posterior $P_\Lambda$ with density,
\begin{equation*}
	{\rho}_\Lambda(\lambda)=\frac{\rho_{\mathcal{D}}(Q(\lambda))}{\tilde{\rho}_{\mathrm{p},\mathcal{D}}(Q(\lambda))}\cdot \rho_{\mathrm{p}}(\lambda).
\end{equation*}
In our experience, the estimator based on reweighting random samples is more computationally efficient when there is a high reject rate in the accept-reject estimator.

\section{Examples}\label{sec:examples}

We illustrate the theoretical results using examples chosen for simplicity of presentation. We explore convergence using simulations with varying sample sizes in Example~\ref{EX:expdecay}. We repeat Example~\ref{EX:expdecay} using the accept-reject estimator in Example~\ref{EX:expdecay5}. We formulate a calibration problem for a real-world experiment in Example~\ref{droppingballs}. We present an example with a higher dimensional parameter space in Example~\ref{S:CM_Example}. The code, data, and computational details are provided in the supplemental material. 

\begin{example}\label{EX:expdecay}
	We consider a simple model for exponential decay,
	\begin{equation}\label{expdecay}
		\begin{cases} \frac{dy}{dt} = - \lambda_2 y, & 0 < t \leq T,\\
			y(0) = \lambda_1, & 
		\end{cases}
	\end{equation}
	where we treat the initial condition and the rate as unknown. The quantity of interest is  $Q(\lambda_1,\lambda_2) = y(T) = \lambda_1 \exp(-\lambda_2 T)$. We set $\Lambda = [0,1]\times[0,1]$ and $\mathcal{D} = [0,1]$. %Commonly, computer model outputs depend on auxiliary variables such as time and spatial location. Typically, however, there is strong dependency between computer model values obtained for different values of auxiliary variables, \cite{BGE+15,kimspaper,condSIP}.
	
	In Fig.~\ref{expdecay_cont}, we plot  $32$ generalized contours $Q^{-1}(q)=\{(\lambda_1,\lambda_2): \lambda_1 = q \exp(T \, \lambda_2), 0 \leq \lambda_2 \leq 1\}$.  
	\begin{figure}[tbh]\centering
		\includegraphics[width=.65\textwidth]{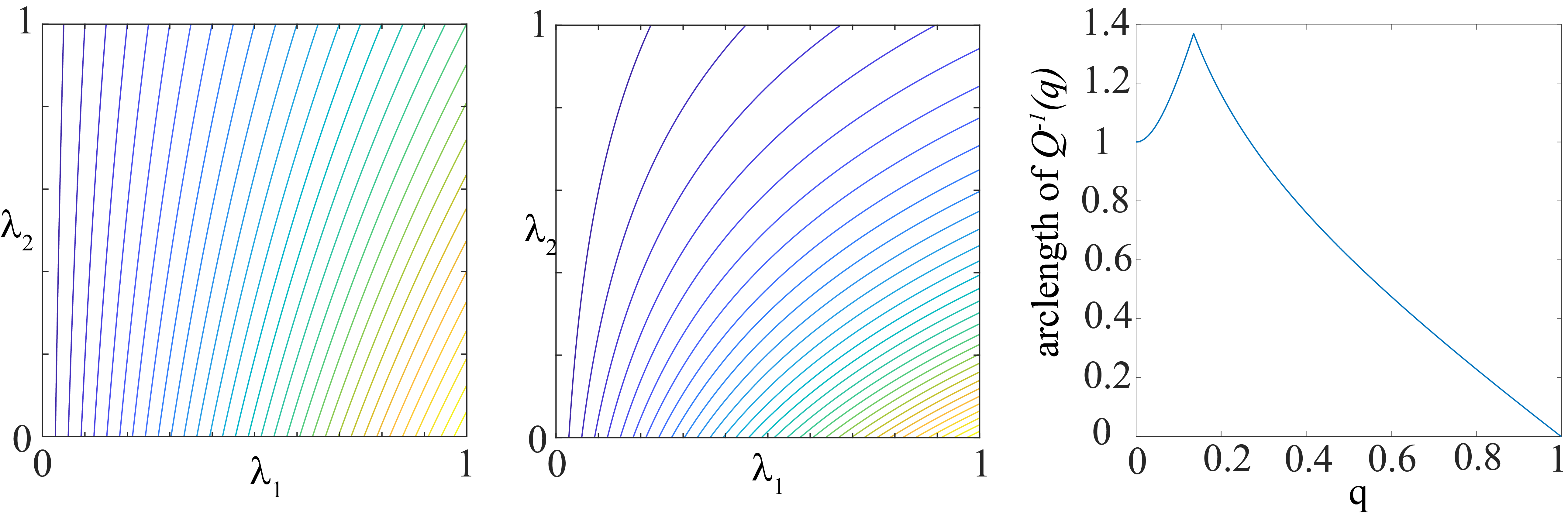}
		\caption{We show generalized contours for the map $Q$ for the exponential decay model in Example~\ref{EX:expdecay} corresponding to $T=.5$ (left) and $T=2$ (center) for $32$ values of $q$. On the right, we plot the arclength of the generalized contours $Q^{-1}(q)$  against $q$ for $T=2$.}\label{expdecay_cont} 
	\end{figure}
	To generate \textbf{\textit{synthetic data}}, we assume a trial generating distribution $P^{\mathrm{t}}_\Lambda$ equal to a product of independent $\mathrm{Beta}\,(12,12)$ distributions for $\lambda_1$ and $\lambda_2$. We draw samples from $P^{\mathrm{t}}_\Lambda$ and evaluate the solution to generate observed data.  We plot empirical densities for $P^{\mathrm{t}}_\Lambda$ and the corresponding output distributions $P_{\mathcal{D}}$ at $T=.5$ and $T=2$ using $200,000$ samples in Fig.~\ref{expdecay_forward}.
	
	\begin{figure}[tbh]\centering
		\includegraphics[width=.8\textwidth]{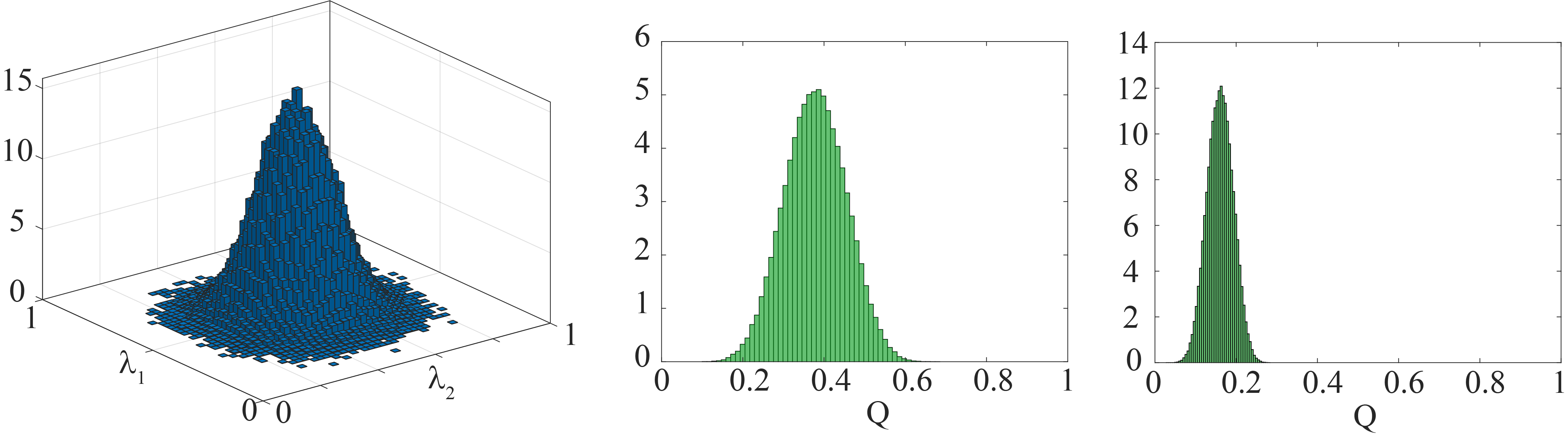}
		\caption{Left: empirical density for $P^{\mathrm{t}}_\Lambda$ for Example~\ref{EX:expdecay} computed using $200,000$ points. Center: empirical density for $P_{\mathcal{D}}$ at $T=.5$. Right: empirical density for $P_{\mathcal{D}}$ at $T=2$.		}\label{expdecay_forward} 
	\end{figure}
	
	We estimate the posterior corresponding to the uniform prior using the nonparametric estimator  presented in Sec.~\ref{sec:numsol} with a large number of samples in order to virtually eliminate effects of finite sampling.  We generate synthetic data by sampling $K=16\times 10^6$ points in $\Lambda$ randomly from $P^{\mathrm{t}}_\Lambda$ and building an empirical output distribution on a partition of $\mathcal{D}$ with $M=100$ cells.  We compute an empirical estimator of the posterior using  $N=16\times 10^6$ uniformly distributed points in $\Lambda$.	
	
	In Fig.~\ref{expdecaySIPS}, we plot heatmaps of the estimators for $T=.5$ and $T=2$ produced by computing $\{P_\Lambda(B_i)\}_{i=1}^{6400}$ for a partition $\{B_i\}_{i=1}^{6400}$ of $\Lambda$ into small cells $B_i$ and generating a heatmap from the probabilities. Recalling the contours shown in Fig.~\ref{expdecay_cont}, the posteriors have the characteristic ``ridge'' shape oriented along contours that reflects the fact that parameter values on the same contour have equal probability due to the choice of the uniform prior. This reflects the \textit{\textbf{indeterminacy}} induced by the loss of information in the experimental observation because we only observe the value of the solution at a particular time.
	\begin{figure}[tbh]\centering
		\includegraphics[width=\textwidth]{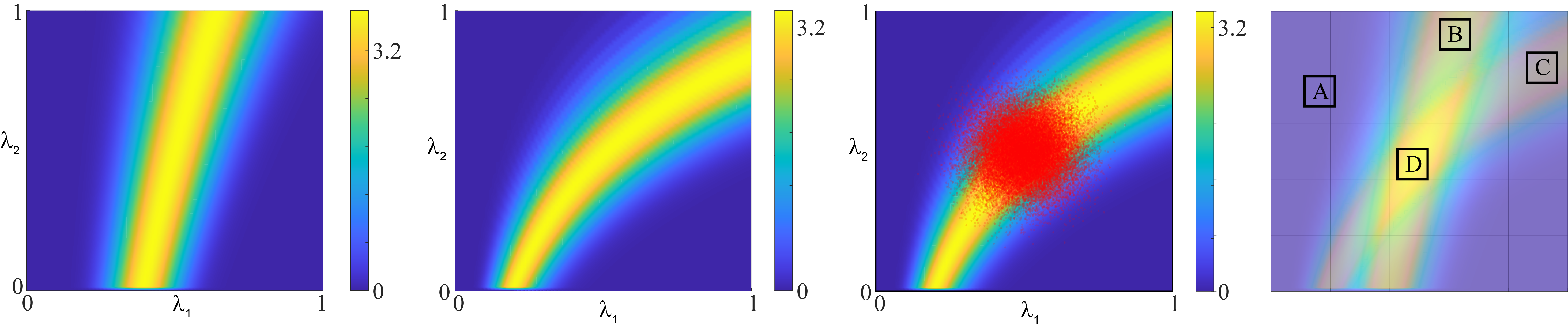}
		\caption{Left two plots: Heatmaps of estimators of the posterior for the exponential decay model using the uniform prior for $T=.5$ (left) and $T=2$ (center). Second plot from the right: The posterior using the uniform prior for $T=2$ together with the sample points (red) from the trial generating distribution used to create the simulated output data.  Rightmost plot: Overlays of the heatmaps for $T=.5$ and $T=2$.
		}\label{expdecaySIPS} 
	\end{figure}
	
To give intuition about the connection between the trial generating distribution and the posterior, we plot the samples from the trial generating distribution that produced the simulated output data together with the posterior for $T=2$ in the third panel of Fig.~\ref{expdecaySIPS}. Choosing the uniform prior means that the  posterior ``extends'' the profile of the trial generating distribution along the generalized contours. In Example~\ref{droppingballs}, we show the use of an informed prior that results in a posterior with more localized support.

We illustrate the use of the posterior to estimate probabilities of events. In the right hand plot in Fig.~\ref{expdecaySIPS}, we show four events of equal size located in different parts of $\Lambda$. We see that event $A$ has relatively low probability with respect to the posteriors for $T=.5$ and $T=2$, while event $D$ has relatively high probability for both posteriors. Events $B$ and $C$ are relatively high probability for one of the posteriors but not the other. 

We present visual evidence regarding convergence. We hold the discretization and computational parameters fixed except as noted.  In the first experiment, we vary the number of points $N_\jmath \in \{50^2, 100^2, 400^2, 1600^2\}$ in the samples $\{\lambda_i\}_{i=1}^{N_\jmath}$ from the prior. We plot the results in Fig.~\ref{expdecaySIPS2}. There are visible resolution issues in the heatmap when $N_\jmath = 50^2$.  The increasing resolution is apparent up to $N_\jmath = 1000^2$ but there is little improvement after that.
	\begin{figure}[tbh]\centering
		\includegraphics[width=.85\textwidth]{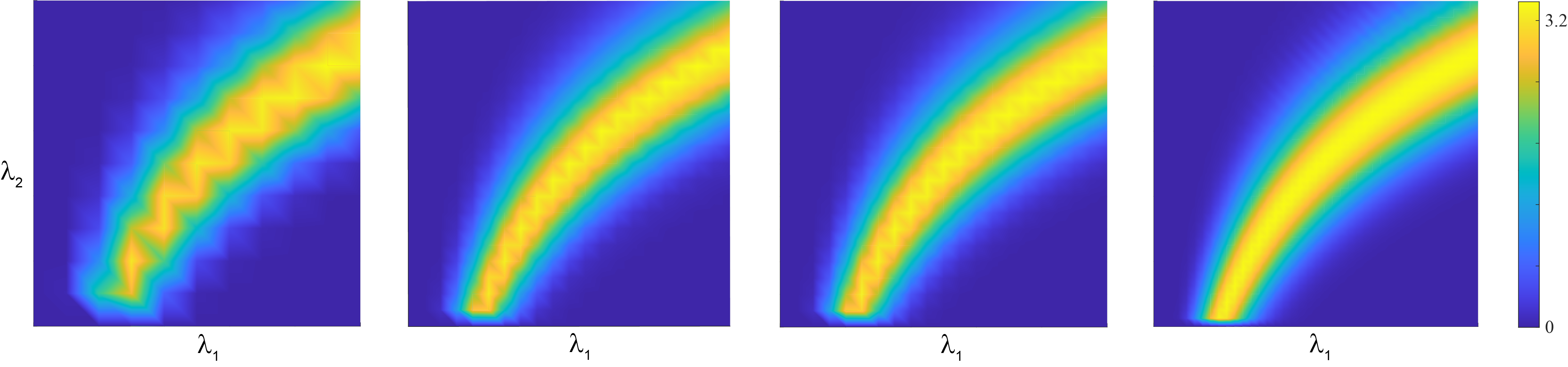}
		\caption{Heatmaps of the uniform prior posteriors for the exponential decay model for $T=2$ computed using tessellation points $\{\lambda_i\}_{i=1}^{N_\jmath}\in \Lambda$ for $N_\jmath \in \{50^2, 100^2, 400^2, 1600^2\}$ in order from left to right.}\label{expdecaySIPS2} 
	\end{figure}
	In the second experiment, we vary the number of cells $M_K \in \{12, 25, 50, 100, 200\}$ in the partition $\{I_i\}_{i=1}^{M_K}$ of $\mathcal{D}$. We plot the results in Fig.~\ref{expdecaySIPS3}.  The lower values of $M_k$ result in poor resolution of $P_{\mathcal{D}}$ because the ``fat'' contours are too fat. The increasing resolution is apparent up to $M_K =100$ but there is little improvement after that.
	\begin{figure}[tbh]\centering
		\includegraphics[width=\textwidth]{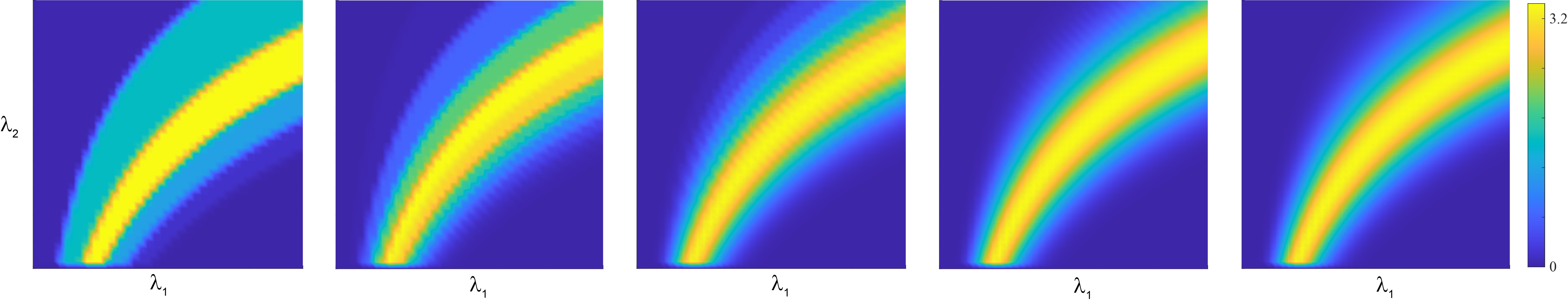}
		\caption{Heatmaps of the uniform prior posteriors for the exponential decay model for $T=2$ computed using partitions $\{I_i\}_{i=1}^{M_K}$ of $\mathcal{D}$ for $M_K \in \{12, 25, 50, 100, 200\}$ in order from left to right.}\label{expdecaySIPS3} 
	\end{figure}

\end{example}

\begin{example}\label{EX:expdecay5}
	We apply the accept-reject algorithm to Example~\ref{EX:expdecay} with the same discretization and computational parameters except as noted. We choose $M=100$ for a uniform partition of $\mathcal{D}=[0,1]$. We use $K=10^6$ samples from the trial generating distribution $P_\Lambda^{\mathrm{t}}$ to compute the synthetic data used to construct $\hat{\rho}_{\mathcal{D}}$.  We use  $N=40,000$ samples from uniform prior on $\Lambda$ to construct $	\widehat{\tilde{\rho}}_{\mathrm{p},\mathcal{D}}$. The algorithm rejected $29,317$ of these samples. In Fig.~\ref{AR-D-dist}, we show the estimates $\hat{\rho}_{\mathcal{D}}$ and $	\widehat{\tilde{\rho}}_{\mathrm{p},\mathcal{D}}$ and samples that are accepted by the algorithm.
	\begin{figure}[tbh]\centering
		\includegraphics[width=.7\textwidth]{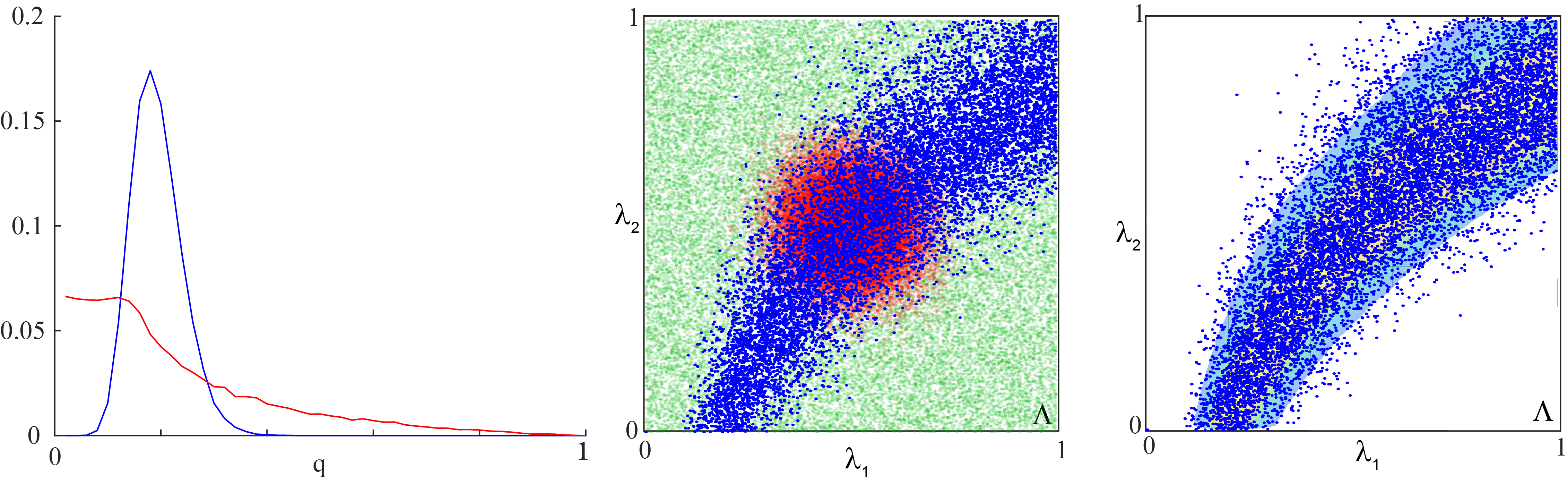}
		\caption{Left:  Plots of $\hat{\rho}_{\mathcal{D}}$ (blue) and $	\widehat{\tilde{\rho}}_{\mathrm{p},\mathcal{D}}$ (red). Center: Scatter plots of samples from the trial generating distribution (red), uniform prior (green), and the posterior (blue). Right: Scatter plot of uniform prior samples accepted by the accept-reject algorithm that are distributed according to the posterior  on top of the heatmap of the empirical posterior computed from re-weighting samples.}\label{AR-D-dist} 
	\end{figure}
\end{example}

\begin{example}\label{droppingballs}

We calibrate a computer model for falling objects. While the example is simple, it presents challenges involved with formulating and solving an SCP that generally arise in calibration of computer models. See \cite{Mosegard2002} for lengthy discussion about this example.

\paragraph*{{The experiment}}
The \textbf{\textit{standard acceleration of gravity}}  $g$ approximates the acceleration of an object falling near the surface of a planet. On Earth, the value of $g\approx 9.81$ but varies with location by as much as $.5\%$. In June 2013, author Derek Bingham and our colleague Dave Higdon performed an experiment aimed at estimating the value of $g$ in Vancouver British Columbia.

\paragraph*{{The experimental set up and field observations}}
They dropped a collection of balls from a spot on the Alex Fraser Bridge approximately $35 $m above the ground at sea level. The balls were launched horizontally to have a clear flight path, see Fig.~\ref{fig:ballexp}.  
\begin{figure}[htb]
	\begin{center}
		\includegraphics[width=4in]{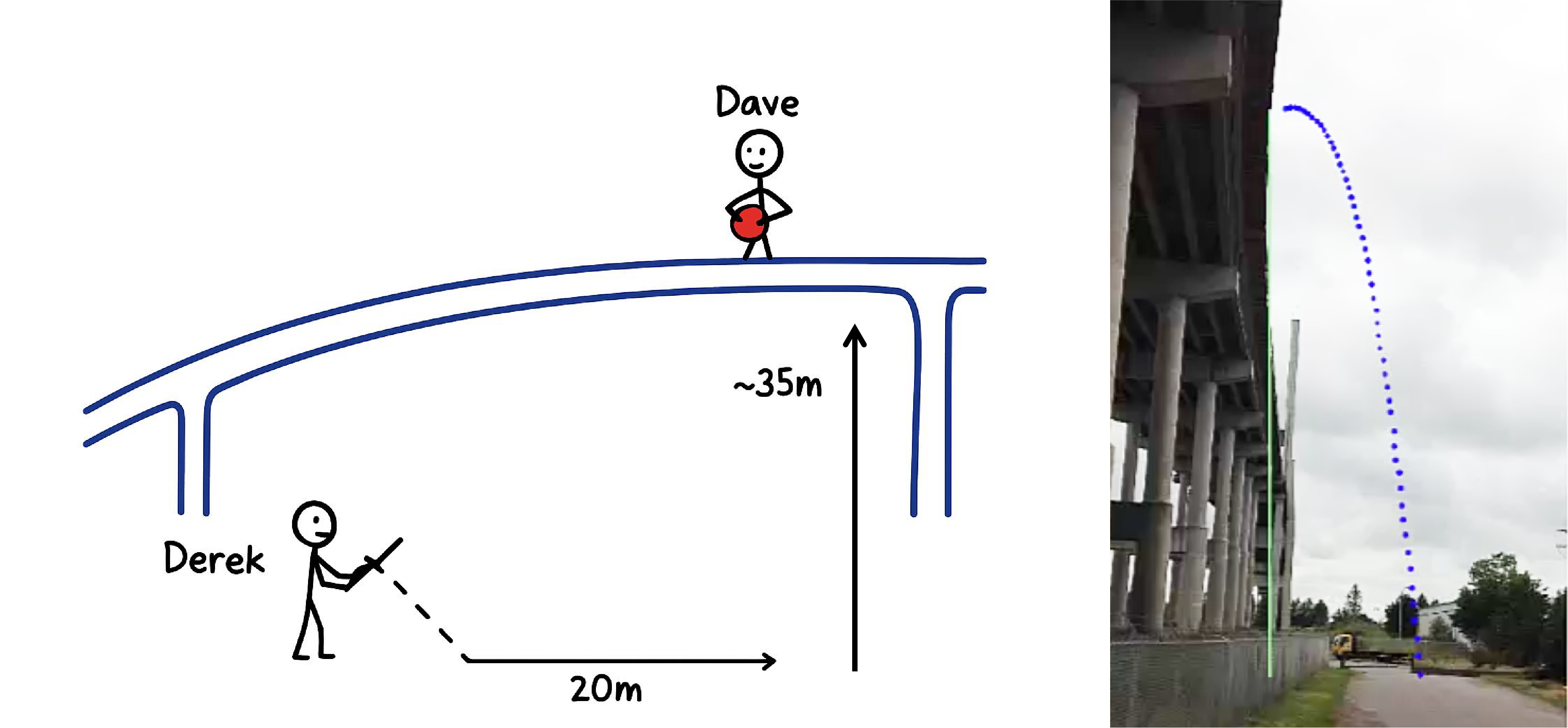}
		\caption{Left: Illustration of the ball dropping experiment. Right: A frame from a video capture of a trial.}
		\label{fig:ballexp}
	\end{center}
\end{figure}
Physical properties of the balls are presented in Table~\ref{tab:ballchar} in Sec.~\ref{droppingballsextra}.

Each ball was dropped 2-5 times. The motion was captured by video and the time of flight was computed using \textit{Logger Pro} software. The $17$ measured times in seconds are given in Table~\ref{tab:balldata} in Sec.~\ref{droppingballsextra}. The low number of data points is due to the intervention of local authorities during the experiment.

\paragraph*{{The model}}
If the flight time of a falling object is sufficiently long, air resistance has a significant effect. A high fidelity model is
\begin{equation}\label{fallingfull}
\frac{dv}{dt} = - g +\frac{1}{2} \rho C_d \frac{A}{m} v^2, 
\end{equation}
where $v$ is the \textit{vertical velocity} of the following object, $g$ is the \textit{gravitational constant}, $\rho$ is the \textit{air density}, $C_d$ is the \textit{coefficient of drag}, $m$ is the \textit{mass}, and $A$ is the \textit{cross-sectional area}. The second term on the right, which is the \textbf{\textit{drag term}}, models the effect of air resistance. The model \eqref{fallingfull} is accompanied by an \textit{initial height} $H_0$ and \textit{initial vertical velocity} $V_0$ which are uncertain since  they vary with each trial. Solving \eqref{fallingfull} yields the time of flight $T$ when the height is $0$.

Using the full model \eqref{fallingfull} has drawbacks. If we treat all the parameters as unknown, then we must estimate a posterior in a seven dimensional parameter space from a one dimensional observation. Since the generalized contours are six dimensional, the posterior will have significant indeterminacy. Alternatively, since  $\rho$, $C_d$, $m$, and $A$ can be measured independently, we could treat them as random effects. However, then we have to deal with the complexity of the SCP that includes random effects \cite{constrSIP}.

The drag term is small when the velocity is small and  wind tunnel data  indicates that the velocities of the balls remain within the linear growth regime\cite{hyperphysics}. It is reasonable to consider a simpler model that neglects drag,
\begin{equation}\label{fallingnodrag}
	\frac{dv}{dt} = - g,
\end{equation}
with solution $H(t) = -\frac{1}{2}  g t^2 + V_0 t + H_0, \; t \geq 0$, where $H(t)$ is the height at time $t$.  Solving for the impact time gives $
T = \frac{1}{g} \big( V_0 + \sqrt{V_0^2 + 2 g H_0}\big)$. There is a  single SCP for all of the balls collectively so we have $17$ data points. 

We pay for the simpler model in the choice of prior. Without knowledge of the uncertainty introduced by neglecting drag, we allow for a large degree of uncertainty in the parameters.  With $(H_0 \; V_0 \; g )^\top$  as the set of parameters for the SCP for the simple model, we choose $\Lambda = [27, 43]\times [-1,1] \times [8.8,10.8]$ centered on the nominal values $(35 \; 0 \; 9.8)^\top$.

\paragraph*{{Dealing with a small amount of data}}
The number of observations ($N\jmath=17$)  is too small to accurately estimate $\hat{p}_{K,M_K,i}$ directly.  We deal with this using simulation-based inference on the distribution of the field observations. We augment the measured values by adding i.i.d. noise to produce a sufficiently large dataset for the inversion. This approach amounts to jittering the observations many times to created a smoother empirical distribution function.  We obtain very similar results using a Beta distribution fit to the field observations.

Allowing for additional uncertainty beyond the minimum and maximum values of the data in Table~\ref{tab:balldata}, we define the interval $\mathcal{D} = [2.55, 3.19]$. We create a set of ``noisy'' output data by simulating $50,000$ iid samples from the  $\mathrm{Beta}(8,8)$ distribution shifted and scaled to the interval $[q_i-.35,q_i+.35]$ for each $q_i$ value in Table~\ref{tab:balldata} ($17\times 50,000 $ is on the order of $10^6$). 

We use $N=27\times 10^6$ samples in $\Lambda$ and $M=80$ partitions of $\mathcal{D}$ to compute the posterior for the uniform prior. We use a large numbers of samples for the estimator to reduce inaccuracies arising from finite Monte Carlo sampling. We obtain qualitatively similar results using samples on the order of hundreds.

\paragraph*{Calibrating the computer model using all data}
We show the posterior corresponding to creating ``noisy'' output data in Fig.~\ref{dropmarg2}.
\begin{figure}[htb]
	\begin{center}
		\includegraphics[width=5.25in]{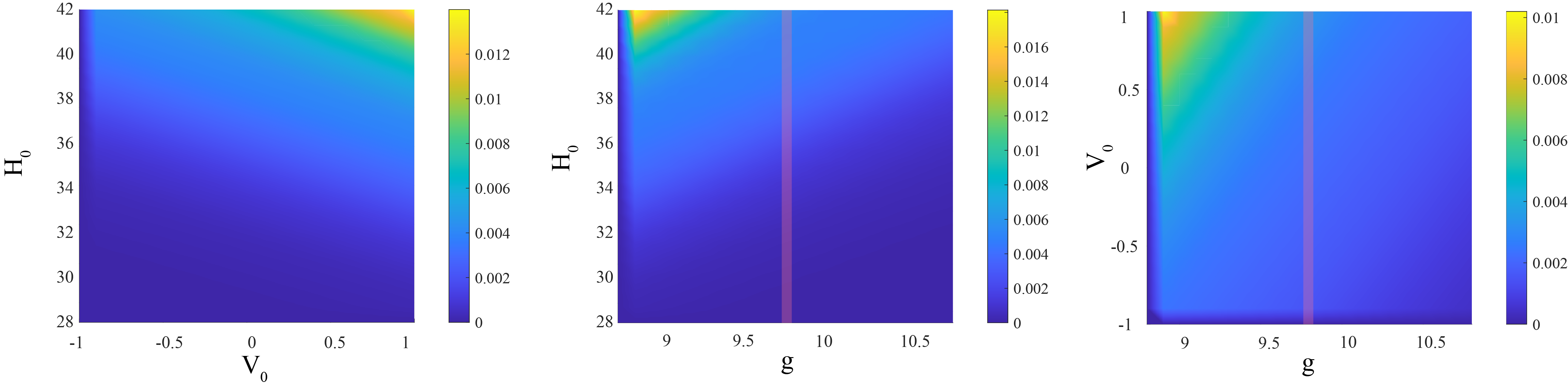}
		\caption{Heatmaps of marginal distributions for the posterior computed using noisy data. Left: $H_0$ vs $V_0$. Center: $H_0$ vs $g$. Right: $V_0$ vs $g$.  We indicate the part of $\Lambda$ intersecting the realistic range of uncertainty for $g$ near sea level as a transparent vertical rectangle in the second and third plots.}
		\label{dropmarg2}
	\end{center}
\end{figure}
This posterior places the highest probability near the corner point 
$(43 \; 1  \; 8.8)^\top$. The posterior  gives cause for concern. Realistically, the variation in the gravitational constant near sea level is  on the order of $\pm 0.02 \,\text{m/s}^2$. The posterior implies that values within $g\in [9.78,9.82]$ have relatively low probability.

\paragraph*{{The effect of removing outliers}}

One possible reason for the unsatisfactory results is neglecting drag. The fall times of the volleyball and the tennis ball are generally larger than the other types of balls. If we fix $H_0=35$ and $V_0=0$ and use the crude estimate $g \approx 2 H_0/T^2$, the volleyball and the tennis ball give significantly lower estimates than the other balls.  

We repeat the computations after removing their data. We generate ``noisy'' output data by adding $70,000$ iid samples from the $\mathrm{Beta}(8,8)$ distribution scaled and shifted to $[q_i-.35,q_i+.35]$ for each $q_i$ value. The other computational parameters remain the same. We show the posterior in Fig.~\ref{dropmarg2-no}.
\begin{figure}[htb]
	\begin{center}
		\includegraphics[width=5.25in]{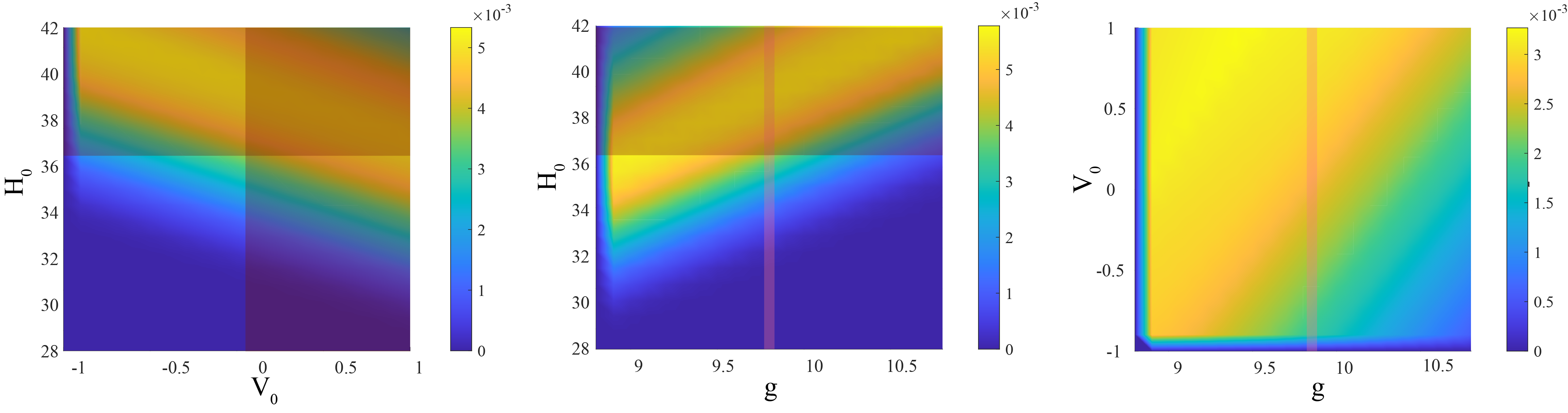}
		\caption{Heatmaps of marginal distributions for the posterior computed using noisy data. Left: $H_0$ vs $V_0$. Center: $H_0$ vs $g$. Right: $V_0$ vs $g$. We indicate the part of $\Lambda$ intersecting the realistic range of uncertainty for $g$ near sea level as a transparent vertical rectangle in the second and third plots. In the first and second plots, we indicate the intersection of $\Lambda$ with the highly probable range of $H_0$ given the realistic range of $g$ as a transparent horizontal rectangle. In the first plot, we indicate the intersection of $\Lambda$ with the highly probable range of $V_0$ as a transparent vertical rectangle.}
		\label{dropmarg2-no}
	\end{center}
\end{figure}

This posterior  is quite different. The structure in the densities imparted from the linear contours combined with the uniform prior is evident. A much larger set of possible parameter values produce computer model solutions consistent with the observed data with relatively high probability. 

In particular, there is a large range of $H_0$ and $V_0$ values that are consistent with $g$ close to $9.8$. We plot the event for $g \in [9.78,9.82]$ in Fig.~\ref{dropmarg2-no} as vertical rectangles. Using the intersection of that event with the region of relatively high probability determines ranges for $H_0$ of $[37.5,42.5]$ respectively $[36.75,43]$ of relatively high probability. This indicates that $V_0$ is in the range of $[0,1]$. The region of relatively high probability has the property that larger values of $H_0$ are associated with more negative values of $V_0$ and vice versa, which fits intuition. Unfortunately, there is a large degree of uncertainty due to the choice of $\Lambda$, prior, and variation in data for the different balls.

\paragraph*{Calibrating the computer model using bowling ball data and an informed prior}
Since the drag term in \eqref{fallingfull} is inversely proportional to mass, it appears that the flight of the bowling ball might best be described by the simple model \eqref{fallingnodrag} so we solve the SCP using only the bowling ball data.  We use a Beta distribution shifted and scaled to  $[q_i-.03,q_i+.03]$ to generate noise for each data point. We also choose an ``informed'' prior. Since we are dealing with a single type of ball, it is reasonable to believe that there is less variation in the experimental uncertainty in $H_0$ and $V_0$ during the trials. We  estimate those uncertainties to within a physically realistic level and choose a normal prior with independent marginals $N(35,.1)$ for $H_0$ and $N(0,.1)$ for $V_0$. Finally, we restrict the uncertainty in $g$ to a realistic level given we have strong prior knowledge about its value and choose $N(9.81,.01)$ as a prior for $g$. We use the same sample space as above since we do not know the uncertainty introduced by dropping the drag term. We use $2\times 10^{7}$ samples to generate the sample from the prior and fix all of the other computational parameters as above.

We show the posterior in Fig.~\ref{dropmarg2-BB}.  The mode of the posterior is near $(35.7 \; .5 \; 9.9)^\top$ and the event of relatively high probability places $g$ within $[9.78,9.95]$. 

\begin{figure}[htb]
	\begin{center}
		\includegraphics[width=5.25in]{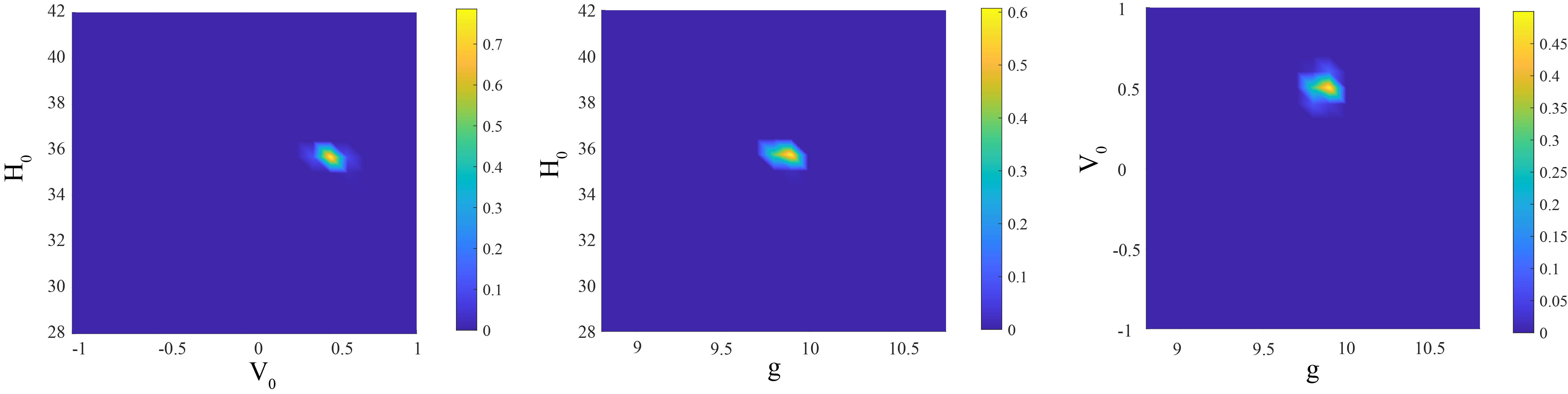}
		\caption{Heatmaps of marginal distributions for the posterior computed using noisy data for the bowling ball and an informed prior. Left: $H_0$ vs $V_0$. Center: $H_0$ vs $g$. Right: $V_0$ vs $g$.}
		\label{dropmarg2-BB}
	\end{center}
\end{figure}

%\paragraph*{{Conclusion}}
%The experimental uncertainty, the uncertainty introduced into the parameters in the simple model as a result of neglecting drag force, and the small amount of data means that we cannot determine more than two digits of accuracy for $g$ from this experiment and model.

\end{example}

\begin{example}\label{S:CM_Example}
	
	 We  use a slight variant of the accept-reject estimator of Sec.~\ref{sec:reject} to illustrate the performance of the SCP estimator on a model with a 15 dimensional parameter space and a 2 dimensional measurement vector.  We consider the MSEIR  epidemic model for disease dynamics \cite{Hethcote_2000}. In addition to the susceptible population (S), infected population (I), and the recovered population (R), the MSEIR model includes groups of infants protected by maternal antibodies (M) and groups of exposed and latent infected but not infectious (E). The process model $\mathcal{M}$ is,
	\begin{equation}\label{eq:MSEIRS}
		\begin{cases}
			\frac{dM}{dt} = B(S+E+I+R)-(\delta+\mu_M) M ,\\
			\frac{dS}{dt} = \delta M - \beta SI - (\mu_G+\iota) S  + fR ,\\
			\frac{dE}{dt} = \beta SI - (\epsilon + \mu_G)E, &  0 < t < T,\\
			\frac{dI}{dt} = \epsilon E - (\gamma+\mu_I+\mu_G)I,\\
			\frac{dR}{dt} = \gamma I - (\mu_G+f)R+\iota S,
		\end{cases}
	\end{equation}
	with initial conditions,
	\begin{equation}\label{eq:MSEIRSIC}
		M(0)=M_0, \quad S(0)=S_0, \quad E(0)=E_0, \quad I(0)=I_0, \quad R(0)=R_0.
	\end{equation}
	In this case, $Y=\big( M \; S \; E \; I \; R\big)^\top$ and 
	$ \lambda = \big( B \; \delta \; \mu_M \; \beta \; \mu_G \; 1/\epsilon \; \mu_I \; 1/\gamma \; f \; \iota \; M_0 \; S_0 \; E_0 \; I_0 \; R_0 \big)^\top \in \Lambda \subset \mathbb{R}^{15}$. Non-dimensionalizing the parameters and assuming a single unit of time is 1 week, a unit of population is $1E6$, and the birth/death rates are normalized to a population of size 300 million, we give the ranges for the characteristics in Table~\ref{tab:MSEIRS_params} that define $\Lambda$.
%	\begin{table}[h]
%		\centering
%		\begin{tabular}{| l | l | l |}
%			\hline
%			& & range\\
%			\hline
%			B & birth rate & $[ 2.72E-4,3.04E-4 ]$     \\
%			$\delta$ &  temporary immunity & $[.0833,.25]$   \\
%			$\mu_M$ & infant death rate & $[4E-3, 6E-3$ \\
%			$\beta$ & infectivity rate & $ [1.92E-3, 3.85E-3]$   \\
%			$\mu_G$ & general death rate & $[2.4E-4,2.72E-4]$ \\
%			$1/\epsilon$ &  infection time& $[0.571, 1]$ \\
%			$\mu_I$ &  infected death rate & $[4.81E-6 , 2.11E-5]$ \\
%			$1/\gamma$ &  recovery time& $[0.7 , 2.33]$  \\
%			$f$ &  loss of immunity rate & $[0.125, 0.25]$ \\
%			$\iota$ &  immunization rate & $[0.015,0.0375]$ \\
%			$M_0$ & initial infants & $[2.5,3.5]$ \\
%			$S_0$ & initial susceptibles & $[260,275]$  \\
%			$E_0$ & initial exposed & $[0.01,0.5]$\\
%			$I_0$ &  initial infected & $[0.1,4]$ \\
%			$R_0$ &  initial recovered/immunized & $[10,20]$ \\
%			\hline
%		\end{tabular}
%		\caption{The range of physical characteristics for the MSEIR model \eqref{eq:MSEIRS}. 
%		} \label{tab:MSEIRS_params}
%	\end{table}
\begin{table}[h]
		\centering \footnotesize
		\begin{tabular}{| l | l | l l | | l | l | l |}
			\hline
			\multicolumn{7}{|c|}{Physical Interpretations and Ranges of Process Model Parameters in~\eqref{eq:MSEIRS}}\\
			\hline 
			 & Interpretation & Param.~Range & &  & Interpretation & Param.~Range \\
			\hline
			\hline
	%		& & range\\
			B & birth rate & $[ 2.72E-4,3.04E-4 ]$  &  & $1/\epsilon$  &  infection time & $[0.571, 1]$   \\
			$\delta$ &  temp.~immunity & $[.0833,.25]$ & & $\mu_I$ &  infected death rate & $[4.81E-6 , 2.11E-5]$    \\
			$\mu_M$ & infant death rate & $[4E-3, 6E-3$  & &  $1/\gamma$ &  recovery time & $[0.7 , 2.33]$  \\
			$\beta$ & infectivity rate & $ [1.92E-3, 3.85E-3]$   & & $f$ &  immunity loss rate & $[0.125, 0.25]$ \\
			$\mu_G$ & general death rate & $[2.4E-4,2.72E-4]$ & & $\iota$ &  immunization rate & $[0.015,0.0375]$  \\
			\hline
		
			\multicolumn{7}{|c|}{Ranges of Initial Population Parameters in~\eqref{eq:MSEIRSIC}}\\
			\hline
			\multicolumn{7}{|c|}
			{\begin{tabular}{l l l l l l l}
			& $M_0$: $[2.5,3.5]$ & $S_0$: $[260,275]$  & $E_0$: $[0.01,0.5]$ & $I_0$: $[0.1,4]$ & $R_0$: $[10,20]$ &
			\end{tabular}} \\
			\hline
		\end{tabular}
		%\caption{Uncertain parameter ranges for the MSEIR model given by \eqref{eq:MSEIRS} and~\eqref{eq:MSEIRSIC}. 
		%} 
		\caption{}\label{tab:MSEIRS_params}
	\end{table}
	We consider the observation vector $Q(\lambda)= \big( M(6)\; I(6)\big)^\top \in \mathcal{D} \subset \mathbb{R}^2$ with components equal to the numbers of immune infants and infected individuals at the sixth week respectively. 
	
	We compute synthetic data by assuming independent $\mathrm{Beta}(6,\beta)$ distributions for each parameter shifted to the  intervals listed in Table~\ref{tab:MSEIRS_params} with $\alpha = 6$ and $\beta$ is chosen to yield specified mean values in the intervals (see the supplemental files for details) and take $P^{\mathrm{t}}_\Lambda$ to be the product measure. We sample $5E2$ points from $P^{\mathrm{t}}_\Lambda$, solve the model to $T=6$ weeks numerically using  a variable step Runge-Kutta method, and compute $Q_1=M(6)$, $Q_2=I(6)$, and $\big( Q_1\; Q_2\big)^\top=\big( M(6)\; I(6)\big)^\top$. 
	
	Assuming a uniform prior, we calibrate using the synthetic data on $Q_1$, $Q_2$, and $\big(Q_1 \; Q_2\big)^\top$, respectively using $1E4$ uniform samples on $\Lambda$ to estimate the posterior. In view of Theorem~\ref{thm:gensampsolprop}, we estimate densities using KDE. The ratios for the rejection are used to compute a weighted KDE estimate of the posterior.

	We show the synthetic data and the estimated marginals of the posterior in Fig.~\ref{MSEIRfwd6}.  
	\begin{figure}[tbh]
	\centering
		\includegraphics[width=0.8\textwidth]{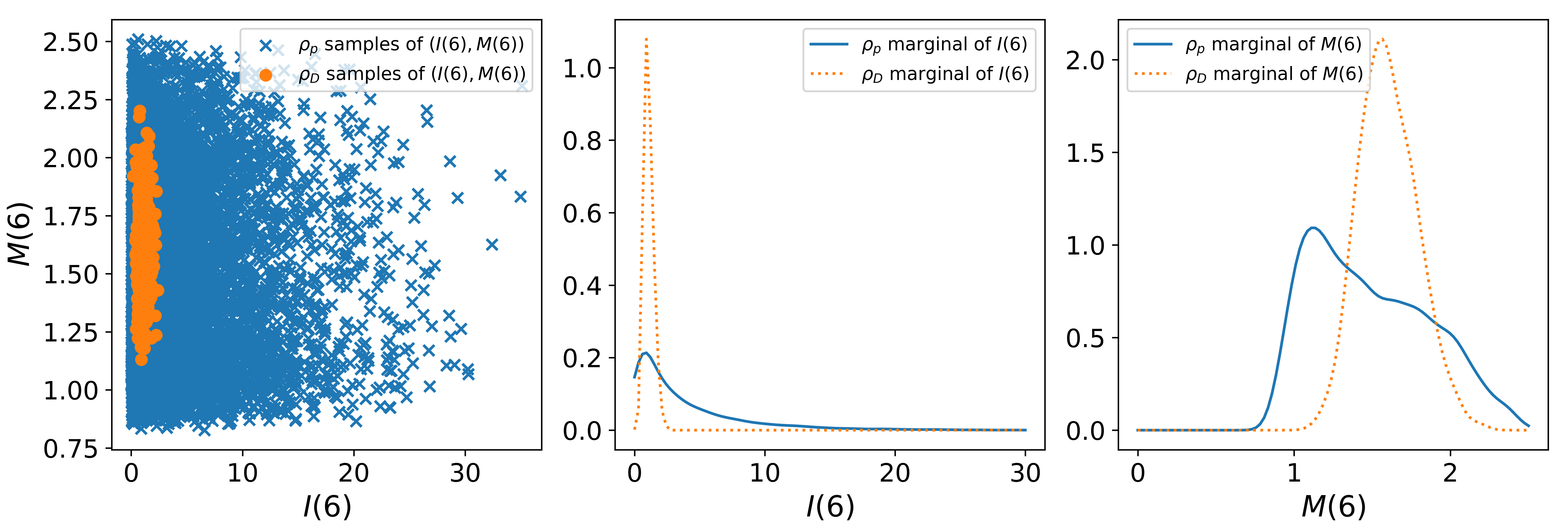}
		\caption{Left: scatter plot of values $Q(\lambda)= \big( M(6)\; I(6)\big)^\top$  computed from numerical solutions of \ref{eq:MSEIRSIC} for $5E2$ samples  of $\lambda$ distributed according to $P^{\mathrm{t}}_\Lambda$ (orange dots) and for $1E4$ samples of $\lambda$ distributed according to a uniform prior (blue x's). Center and right: KDE estimates of posterior marginals.
		}\label{MSEIRfwd6} 
	\end{figure}

	In Fig.~\ref{fig:gamma_delta_MSEIR}, we show $2$-dimensional marginals for $(\gamma^{-1},\delta)$ of the posterior using $I(6)$ data (left), $M(6)$ data (center), and the joint $(I(6)\;M(6))^\top$ data (right). We chose this marginal because it illustrates the impact of the 14-dimensional contour structures corresponding to the scalar data  and the 13-dimensional contour structure of the 2 dimensional data in the posterior respectively.  The code provided in the supplemental files can be used to display other marginals. However, many of the other marginals are approximately uniform because of the choice of uniform prior.
	
\begin{figure}[tbh]
	\begin{center}
		\includegraphics[width=0.3\textwidth]{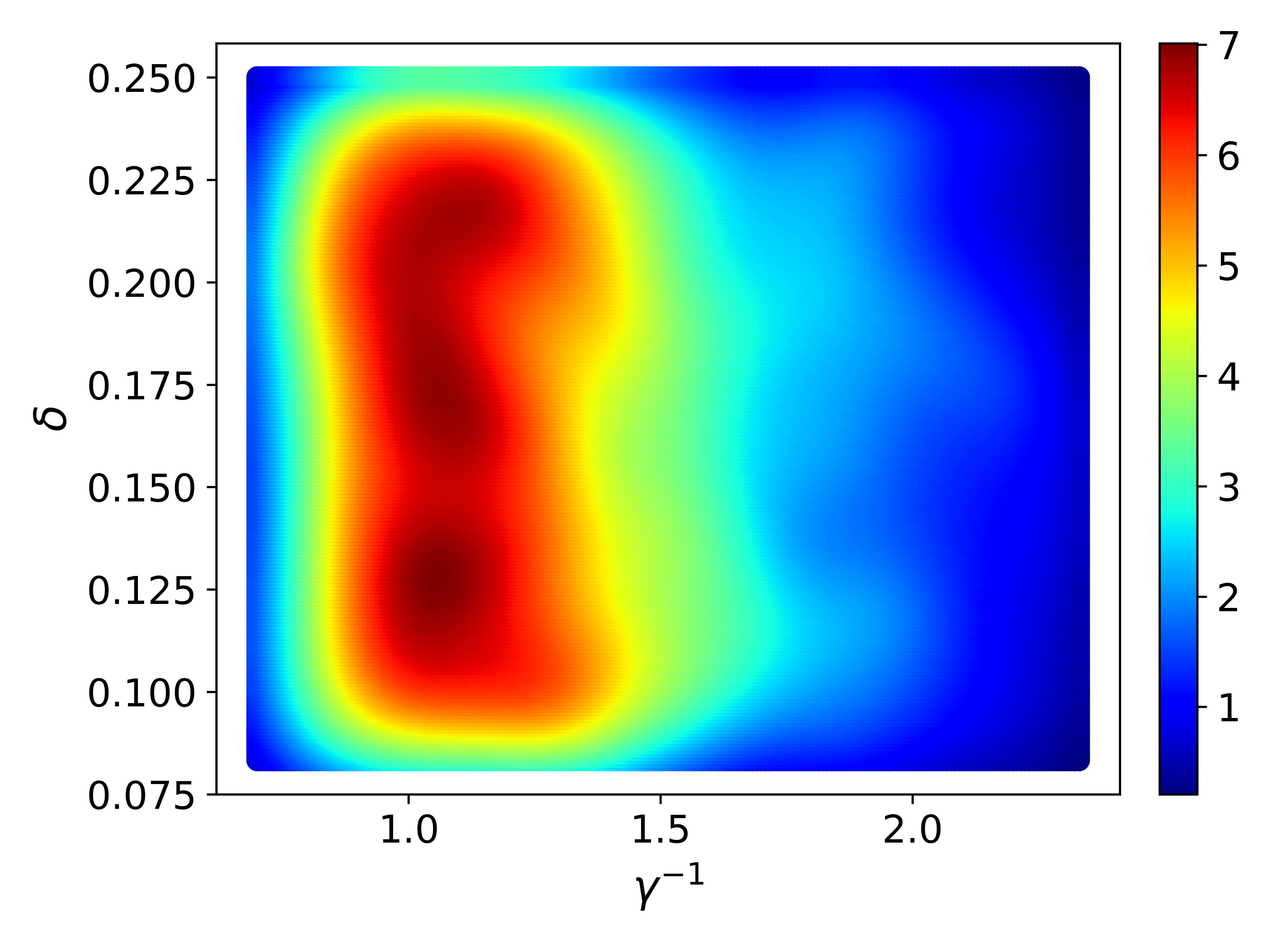}
		\includegraphics[width=0.3\textwidth]{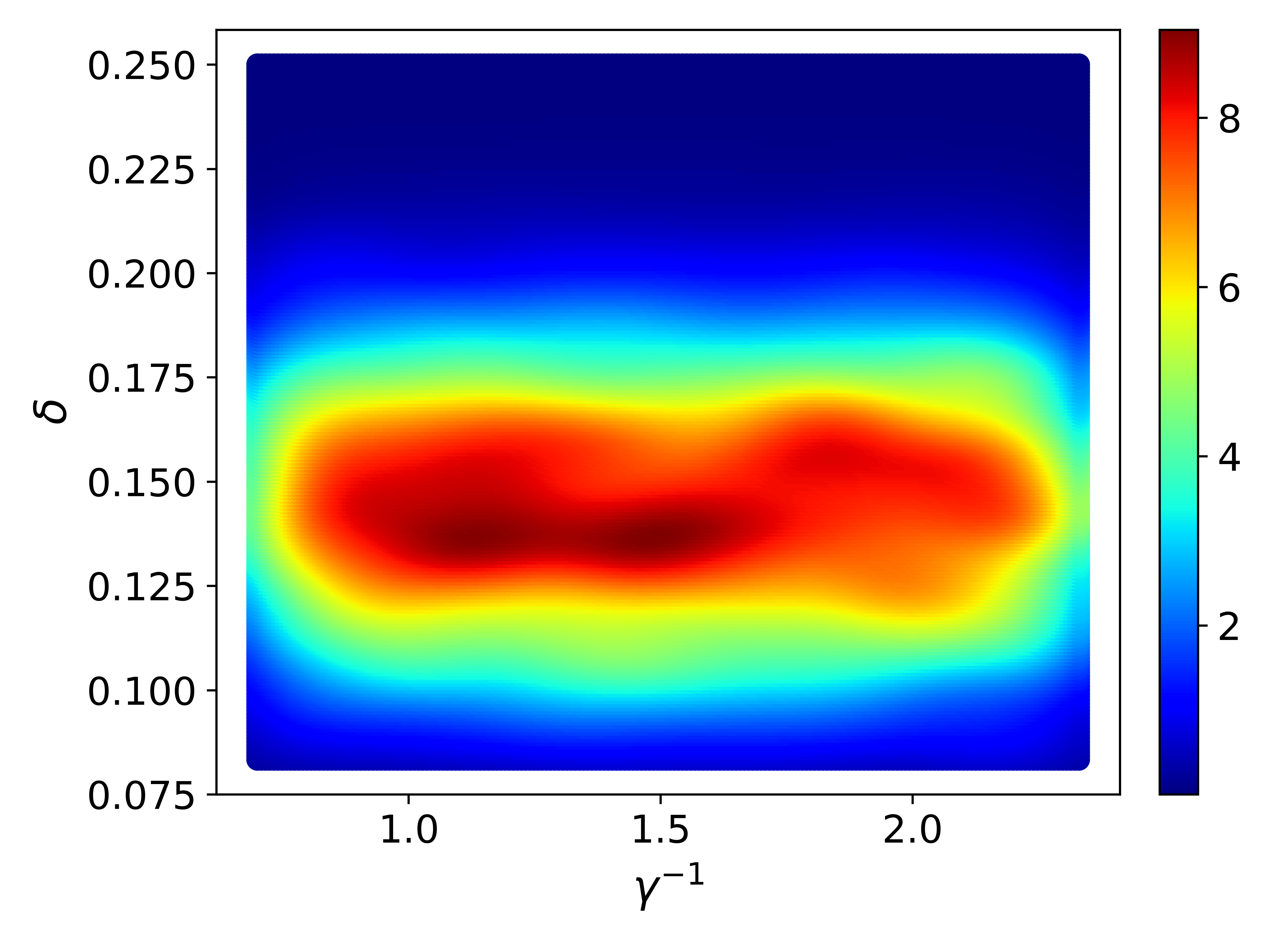}
		\includegraphics[width=0.3\textwidth]{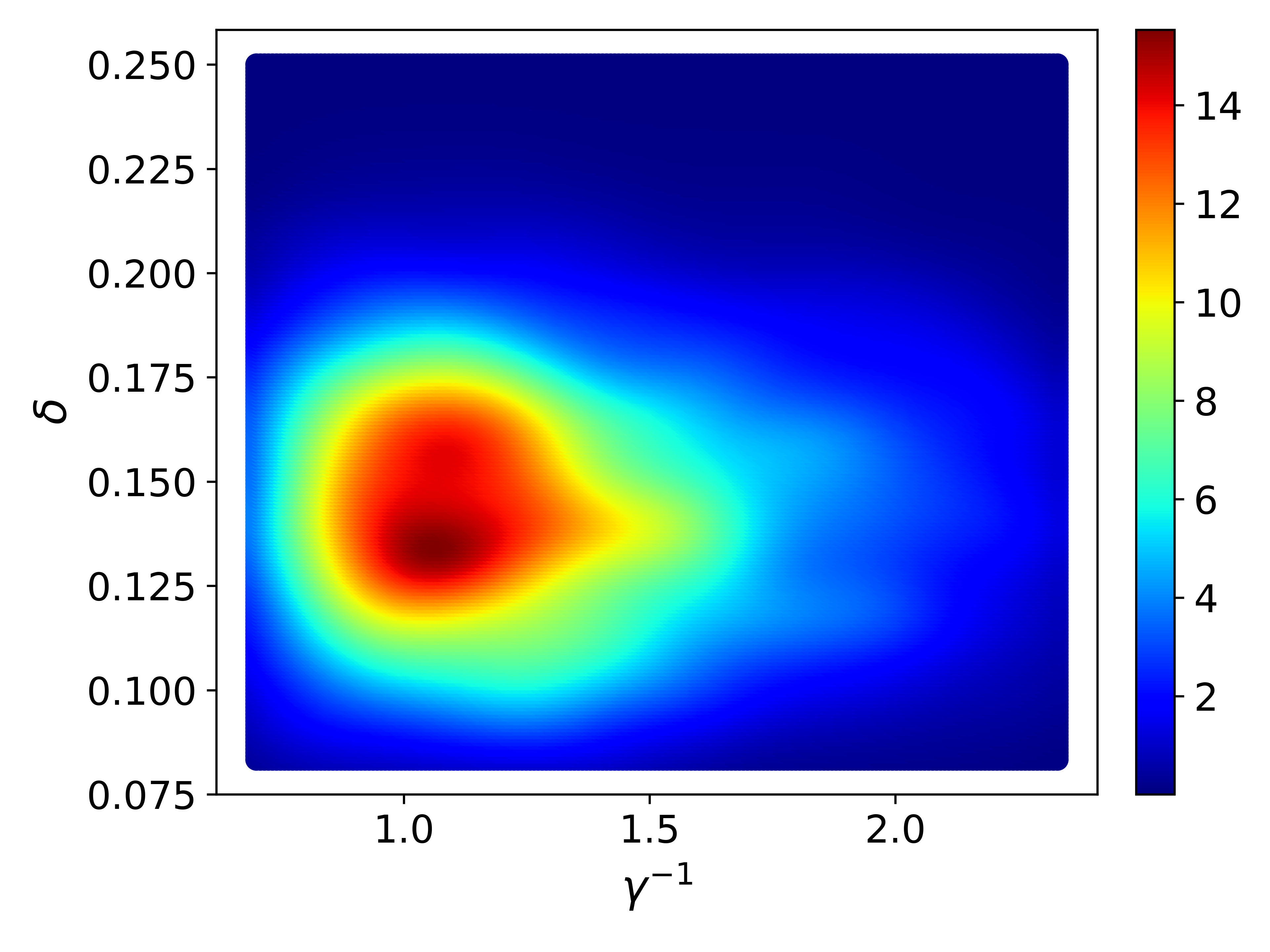}
			\caption{Marginal distributions of the posterior for $(\gamma^{-1},\delta)$ using data only on $Q_1$ associated with $I(6)$ (left), only on $Q_2$ associated with $M(6)$ (center), and the joint QoI data associated with $(I(6)\; M(6))^\top$ (right).}
			\label{fig:gamma_delta_MSEIR}
		\end{center}
\end{figure}

%	We show plots of $2$ dimensional marginals of the posterior selected because they have some structure  in  Fig.~\ref{fig:MSEIRS_marginals1}--\ref{fig:MSEIRS_marginals3}. Most of the  $2$ dimensional marginals are approximately uniform because the generalized contours are nearly linear in these dimensions and we use the uniform prior. 
%	
%	We see that the posteriors for $Q_1$, $Q_2$, and $\big(Q_1 \; Q_2\big)^\top$  are significantly different.
%	
%	\begin{figure}[htb]
%		\begin{center}
%			\includegraphics[width=3.5in]{MSEIR_SIP_Q1Sol2.png}
%			\caption{Marginal distributions of the posterior for $(M_0,\delta)$ (left) and $(\gamma,\delta)$ (right) calibrated using data on $Q_1$.}
%			\label{fig:MSEIRS_marginals1}
%		\end{center}
%	\end{figure}
%	
%	\begin{figure}[htb]
%		\begin{center}
%			\includegraphics[width=5.25in]{MSEIR_SIP_Q4Sol2.png}
%			\caption{Marginal distributions of the posterior for $(\gamma,\delta)$ (left), $(\gamma,\beta)$ (middle), and $(\gamma,I_0)$ (right) calibrated using data on $Q_2$.}
%			\label{fig:MSEIRS_marginals2}
%		\end{center}
%	\end{figure}
%	
%	\begin{figure}[htb]
%		\begin{center}
%			\includegraphics[width=1.75in]{MSEIR_SIP_Q14Sol2.png}
%			\caption{Marginal distributions of the posterior for $(\gamma,\delta)$ calibrated using data on $\big(Q_1 \; Q_2\big)^\top$.}
%			\label{fig:MSEIRS_marginals3}
%		\end{center}
%	\end{figure}
\end{example}

\section{Conclusion}\label{sec:conclusion}

 We formulate and analyze a nonparametric methodology for Bayesian computer model calibration to estimate a posterior on model parameters given a prior. We prove that the posterior exists using disintegration of measures and provide conditions under which the posterior can be represented by a conditional density. We prove that under general conditions, the posterior is continuous a.e. We also show that the posterior corresponding to the uniform prior has maximum entropy. A nonparametric estimator based on a form of importance sampling is constructed and analyzed along with an alternative accept–reject estimator. Finally, we illustrate the results with several examples.

Future work includes applying the aSCP framework to parametric Bayesian calibration to obtain a nonparametric hierarchical Bayesian methodology and investigating the links between nonparametric Bayesian calibration of computer models and traditional Bayesian statistics. The problem of asynchronous data collection for vector valued quantities of interest is also critically important. Ongoing efforts focus on extending this calibration approach to infinite-dimensional spaces and investigating the use of Markov Chain Monte Carlo methods and machine learning methods for estimating the posterior in high dimensional parameter spaces.

%
%\section*{Data availability statement}
%The authors confirm that the data supporting the findings of this study are available within the supplementary materials.

\bibliographystyle{siamplain.bst}
\bibliography{referencesmain,referencessuppbib}% change sample to the name of your .tex file, e.g., thesis

\appendix

\section{Conceptual examples}\label{conceptexam}

We present two simple examples to illustrate the theoretical development.

\begin{example}\label{ex:condcircle} To illustrate disintegration, we adapt a simple example from \cite{PT_book} that models observing the distance to the origin of a point chosen at random in a disk. 	Let $\Lambda = \big\{(x_1,x_2)\in \mathbb{R}^2: ({x_1^2 + x_2^2})^{1/2}\leq 1\big\}$ and define $Q: (\Omega,\mathcal{B}_\Lambda) \to ([0,1],\mathcal{B}_{[0,1]})$ by $Q(x_1,x_2)=({x_1^2 + x_2^2})^{1/2}$.  The generalized contours are circles of radius $q$. We assume that $\Psi_\Lambda$ has density $\rho_\Lambda$ with respect to the Lebesgue measure.  There is a family $\{\Psi_N(\cdot|q)\}_{q\in [0,1]}$  concentrated on circles $\big\{(x_1,x_2): Q(x_1,x_2) =  q, x_1, x_2 \in \mathbb{R}\big\}$ such that 
	$$
	\Psi_\Lambda(A) = \int_{[0,1]} \Psi_N(A|q) \, d\Psi_{\mathcal{D}}(q), \quad A \in \mathcal{B}_\Lambda.
	$$
	We identify the components of the integral on the right by a simple matching of terms.  We first write an expression for the probability of an event $A$ using polar coordinates,
	\begin{equation*}
		\Psi_\Lambda(A) = \int_{[0,1]\times[0,2\pi]}\bigg( \chi_{A}(q,\theta) \rho_\Lambda(q \cos(\theta), q \sin(\theta)) \bigg) \, q \, d\mu_{\mathcal{L}}(\theta) \, d\mu_{\mathcal{L}}(q),
	\end{equation*}
	where $\mu_{\mathcal{L}}$ denotes the Lebesgue measure and $\chi_{A}$ denotes the \textbf{\textit{characteristic}} or \textbf{\textit{indicator function}} of $A$. We use Fubini's theorem to write,
	\begin{align}
		\Psi_\Lambda(A) &= \int_{[0,1]}\left(\frac{\int_{[0,2\pi]} \chi_{A}(q,\theta) \rho_\Lambda(q \cos(\theta), q \sin(\theta)) \, d\mu_{\mathcal{L}}(\theta)}{\int_{[0,2\pi]} \rho_\Lambda(q \cos(\theta), q \sin(\theta)) \, d\mu_{\mathcal{L}}(\theta)}\right) \nonumber\\
		&\qquad \qquad \qquad \times  \left( \int_{[0,2\pi]} \rho_\Lambda(q \cos(\theta), q \sin(\theta)) \, d\mu_{\mathcal{L}}(\theta)\right) \, q \, d\mu_{\mathcal{L}}(q),\nonumber
	\end{align}
	
	By identification, we recognize that,
	\[
	d \Psi_{\mathcal{D}}(q) = \left(\int_{[0,2\pi]} \rho_\Lambda (q \cos(\theta), q \sin(\theta)) \, d\mu_{\mathcal{L}}(\theta \right) \, q \,d\mu_{\mathcal{L}}(q).
	\]
	and
	\[
	\Psi_N (A|q) = \frac{\int_{[0,2\pi]} \chi_{A}(q,\theta) \rho_\Lambda(q \cos(\theta), q \sin(\theta)) \, d\mu_{\mathcal{L}}(\theta)}{\int_{[0,2\pi]} \rho_\Lambda(q \cos(\theta), q \sin(\theta)) \, d\mu_{\mathcal{L}}(\theta)},
	\]
	which has conditional density,
	\begin{equation}\label{ex:condcircleeq}
		\rho_{N}(\cdot|q) = \frac{\rho_\Lambda(q \cos(\theta), q \sin(\theta))(\theta)}{\int_{[0,2\pi]} \rho_\Lambda(q \cos(\theta), q \sin(\theta)) \, d\mu_{\mathcal{L}}(\theta)}.
	\end{equation}
	
\end{example}

\begin{example}\label{discretedisint}
	We present a discrete example adapted from \cite{JChi,PT_book} that provides a simplistic but intuitive illustration of the Bayesian solution of the SCP. We define the discrete probability space  $	\Lambda=\big\{(\lambda_1,\lambda_2) \in \{1,2,3\}\times\{1,2,3\}\big\}$, with the power set $\sigma-$algebra, and trial generating distribution,
	\[
	P^{\mathrm{t}}_\Lambda= \begin{bmatrix} 1/20 & 1/20 & 1/20 \\
		1/9  & 1/9 & 1/20\\
		5/12 &1/9 & 1/20 \end{bmatrix}.
	\]
	Setting $f(j) = \begin{cases} 0 , & j \text{ odd}, \\ 1, & j \text{ even},\end{cases}$, we define $Q(\lambda_1,\lambda_2) = f(\lambda_1+\lambda_2)$. $Q:\Lambda\to \mathcal{D}=\{0,1\}$, so the generalized contours are,
	\begin{equation*}
		Q^{-1}(0) = \{(1,2), (2,1), (2,3), (3,2)\}, \quad 
		Q^{-1}(1) = \{(1,1), (1,3), (2,2), (3,1), (3,3)\}.
	\end{equation*}
	The induced probability measure is $\begin{pmatrix} {P}_{\mathcal{D}}(0) \\ {P}_{\mathcal{D}}(1)\end{pmatrix} =
	\begin{pmatrix} 29/90 \\ 61/90\end{pmatrix} \approx \begin{pmatrix} .32\overline{2} \\.67\overline{7}\end{pmatrix}$.
	We now ``forget'' $P^{\mathrm{t}}_\Lambda$ and solve the SCP conditioned on a choice of prior.
	
	To obtain observational data for the SCP, we compute 200 draws from a Bernoulli random variable with probability $\frac{61}{90}$ of $1$ and use these to compute the estimate $
	\begin{pmatrix} \hat{P}_{\mathcal{D}}(0) \\ \hat{P}_{\mathcal{D}}(1)\end{pmatrix} =
	\begin{pmatrix} .34 \\ .66\end{pmatrix}$. We compute the posterior corresponding to $\hat{P}_{\mathcal{D}}$. In the absence of prior information, we assume a uniform prior,
	\[
	P_{\mathrm{p}} = \begin{pmatrix} 1/9 & 1/9 & 1/9 \\
		1/9 & 1/9 & 1/9\\
		1/9 &1/9 & 1/9 \end{pmatrix}.
	\]
	Since, $
	\frac{\frac{1}{9}}{\frac{1}{9}+\frac{1}{9}+\frac{1}{9}+\frac{1}{9}} = \frac{1}{4} $ and $
	\frac{\frac{1}{9}}{\frac{1}{9}+\frac{1}{9}+\frac{1}{9}+\frac{1}{9}+\frac{1}{9}} = \frac{1}{5}$,
	we obtain the conditional distributions,
	\begin{gather*}
		P_N(A|0) = \frac{1}{4}\chi_{A}(1,2) + \frac{1}{4}\chi_{A}(2,1) + \frac{1}{4}\chi_{A}(2,3)  + \frac{1}{4}\chi_{A}(3,2) ,\\
		P_N(A|1) = \frac{1}{5}\chi_{A}(1,1) + \frac{1}{5}\chi_{A}(1,3)  + \frac{1}{5}\chi_{A}(2,2) \ + \frac{1}{5}\chi_{A}(3,1) + \frac{1}{5}\chi_{A}(3,3) ,
	\end{gather*}
	where $A$ is any set in $\Lambda$ and $\chi_A$ is the indicator function for set $A$, as well as the estimated posterior,
	\begin{multline}\label{discdist}
		\hat{P}_\Lambda(A) = \bigg(\frac{1}{4}\chi_{A}(1,2) + \frac{1}{4}\chi_{A}(2,1) + \frac{1}{4}\chi_{A}(2,3)  + \frac{1}{4}\chi_{A}(3,2)\bigg)\,  .34\\
		+ \bigg( \frac{1}{5}\chi_{A}(1,1)  + \frac{1}{5}\chi_{A}(1,3)  + \frac{1}{5}\chi_{A}(2,2) \ + \frac{1}{5}\chi_{A}(3,1) + \frac{1}{5}\chi_{A}(3,3) \bigg) \, .66.
	\end{multline}
	\eqref{discdist} is a discrete version of disintegration. If we have prior information that the points near $(3,1)$ are more likely, we might choose the prior,
	\begin{equation}\label{disprior}
		\begin{pmatrix} 1/20 & 1/20 & 1/20 \\
			3/16 & 3/16 & 1/20\\
			3/16 &3/16 & 1/20 \end{pmatrix},
	\end{equation}
	yielding the estimate,
	\begin{multline*}
		\hat{P}_\Lambda(A) = \bigg(\frac{2}{19}\chi_{A}(1,2) + \frac{15}{38}\chi_{A}(2,1) + \frac{2}{19}\chi_{A}(2,3)  + \frac{15}{38}\chi_{A}(3,2)\bigg)\,  .34\\
		+ \bigg( \frac{2}{21}\chi_{A}(1,1)  + \frac{2}{21}\chi_{A}(1,3)  + \frac{5}{14}\chi_{A}(2,2) \ + \frac{5}{14}\chi_{A}(3,1) + \frac{2}{21}\chi_{A}(3,3) \bigg) \, .66,
	\end{multline*}
	which indeed assigns higher probability to the points near $(3,1)$.
\end{example}

\section{Proofs}\label{sec:proof}
\subsection{Proof of Theorem~\ref{thm:GCisMan}}\label{app:GCISMAN}
The result follows from the Implicit Function Theorem.
\begin{theorem}\label{thm:imp1}
	Let $f$ be a continuously differentiable map of $U\times V\subset \mathbb{R}^{n_1+n_2}\to\mathbb{R}^{n_1}$, where $U\subset \mathbb{R}^{n_1}$ and $V\subset \mathbb{R}^{n_2}$ are  open sets. Assume $f(\bar{x},\bar{y})=0$ for a point $(\bar{x},\bar{y})\in U\times V$. Set $A_x=J_{f,x}(\bar{x},\bar{y})$  and $A_y=J_{f,y}(\bar{x},\bar{y})$ and assume $A_x$ is invertible. 
	
	There exist open sets $\mathring{U}\subset U $ and $\mathring{V}\subset V$, with $(\bar{x},\bar{y})\in \mathring{U}\times \mathring{V}$, such that for every $y\in \mathring{V}$ there is a unique $x\in \mathring{U}$ with $ f(x,y)=0$. This defines a continuously differentiable map $x=g(y)$ of $\mathring{V}$ into $\mathring{U}$ such that $f(g(y),y)=0$  for $y\in \mathring{V}$ satisfying $\bar{x}=g(\bar{y})$ and $J_g(\bar{y})=-(A_x)^{-1}A_y$. 
\end{theorem}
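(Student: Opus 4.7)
The plan is to reduce the Implicit Function Theorem to a fixed-point problem solved by the Banach contraction mapping principle, after which continuity, differentiability, and the Jacobian formula follow by standard estimates. First I would define the auxiliary map $T: U \times V \to \mathbb{R}^{n_1}$ by
\[
T(x,y) = x - A_x^{-1} f(x,y),
\]
noting that since $A_x = J_{f,x}(\bar{x},\bar{y})$ is invertible, $f(x,y) = 0$ if and only if $T(x,y) = x$. The partial Jacobian of $T$ in $x$ at $(\bar{x},\bar{y})$ equals $I - A_x^{-1} A_x = 0$, so by continuity of $J_{T,x}$ there is a neighborhood where $\|J_{T,x}\| \leq 1/2$ in operator norm.

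Next I would pick a closed ball $\overline{B}(\bar{x}, r) \subset U$ on which $\|J_{T,x}\| \leq 1/2$, so that $T(\cdot, y)$ is $\tfrac{1}{2}$-Lipschitz in $x$ there by the mean value inequality. Then I would choose $\mathring{V}$, an open neighborhood of $\bar{y}$, small enough that $\|A_x^{-1} f(\bar{x}, y)\| < r/2$ for all $y \in \mathring{V}$ (possible since $f$ is continuous and $f(\bar{x}, \bar{y}) = 0$). Combined with the contraction estimate, this forces $T(\cdot, y)$ to map $\overline{B}(\bar{x}, r)$ into itself. The Banach fixed-point theorem then yields, for each $y \in \mathring{V}$, a unique $x = g(y) \in \overline{B}(\bar{x}, r)$ with $T(g(y), y) = g(y)$, equivalently $f(g(y), y) = 0$. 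Taking $\mathring{U} = \mathrm{int}\, \overline{B}(\bar{x}, r)$ gives the uniqueness in $\mathring{U}$ claimed by the theorem, with $g(\bar{y}) = \bar{x}$.

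For continuity and differentiability of $g$, I would exploit the identity $g(y_1) - g(y_2) = T(g(y_1), y_1) - T(g(y_2), y_2)$, decomposed as $[T(g(y_1), y_1) - T(g(y_2), y_1)] + [T(g(y_2), y_1) - T(g(y_2), y_2)]$. The contraction estimate bounds the first term by $\tfrac{1}{2}\|g(y_1) - g(y_2)\|$ and continuity of $T$ in $y$ controls the second, yielding a Lipschitz bound on $g$. For differentiability at $y \in \mathring{V}$, I would write
\[
0 = f(g(y+h), y+h) - f(g(y), y) = A_x(y)\bigl(g(y+h) - g(y)\bigr) + A_y(y)\, h + o\bigl(\|g(y+h) - g(y)\| + \|h\|\bigr),
\]
where $A_x(y), A_y(y)$ denote $J_{f,x}, J_{f,y}$ at $(g(y), y)$. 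Combined with the Lipschitz bound on $g$, inverting $A_x(y)$ yields $g(y+h) - g(y) = -A_x(y)^{-1} A_y(y)\, h + o(\|h\|)$, establishing continuous differentiability of $g$ and, specializing at $y = \bar{y}$, the formula $J_g(\bar{y}) = -A_x^{-1} A_y$.

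The main obstacle is coordinating the three smallness parameters — the radius $r$ of the ball in $x$, the neighborhood $\mathring{V}$ in $y$, and the margin by which $\|J_{T,x}\|$ stays below $1$ — so that $T(\cdot, y)$ is simultaneously a strict contraction and a self-map of $\overline{B}(\bar{x}, r)$ on the chosen product neighborhood. A secondary delicate point in the differentiability step is verifying that invertibility of $A_x(y)$ persists for $y$ near $\bar{y}$; this follows by shrinking $\mathring{V}$ once more using continuity of $\det J_{f,x}$, but must be done before inverting to obtain the formula for $J_g(y)$.
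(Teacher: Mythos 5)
Your proposal is correct, but note that the paper does not actually prove this statement: Theorem~\ref{thm:imp1} is quoted as the standard Implicit Function Theorem with a citation to Rudin, and no proof is supplied in the text. Your contraction-mapping argument --- defining $T(x,y)=x-A_x^{-1}f(x,y)$, verifying $J_{T,x}(\bar{x},\bar{y})=0$, arranging the self-map and contraction estimates on a product neighborhood, and then extracting continuity and the formula $J_g = -J_{f,x}^{-1}J_{f,y}$ from the first-order expansion of $f$ along the graph of $g$ --- is a complete and standard proof. The cited source proceeds slightly differently, first establishing the Inverse Function Theorem (itself by contraction mapping) and then applying it to the augmented map $(x,y)\mapsto(f(x,y),y)$; your direct fixed-point formulation avoids that detour at the cost of having to handle the differentiability of $g$ by hand, which you do correctly, including the observation that invertibility of $J_{f,x}(g(y),y)$ must be secured on a possibly smaller neighborhood before the Jacobian formula can be asserted for all $y\in\mathring{V}$.
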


A \textbf{\textit{regular parameterized manifold in $\mathbb{R}^n$}} is a  map $g:U\to \mathbb{R}^n$, where $U\subset \mathbb{R}^k$ is a non-empty open set, such that the $n\times k$ Jacobian matrix $J_{g}(x)$ has rank $k$ at all $x\in U$ \cite{sard1965hausdorff,morris1976differential, lee2012introduction}.  If a map $g:A\to B$, where $A\subset \mathbb{R}^k$ and $B\subset \mathbb{R}^n$,  is continuous, bijective and has a continuous inverse it is called a \textbf{\textit{homeomorphism}}. A regular parameterized manifold $g:U\to \mathbb{R}^n$, where $U\subset \mathbb{R}^k$ is a non-empty open set, that is a	homeomorphism is called an \textbf{\textit{embedded parameterized manifold}}. 

A $k$-dimensional manifold in $\mathbb{R}^n$ is locally a $k$-dimensional embedded parameterized manifold. Specifically, a \textbf{\textit{$k$-dimensional manifold}}   is a non-empty set $\mathcal{C}\subset\mathbb{R}^n$ such that each point $p\in \mathcal{C}$ there is an open neighborhood $N(p)$ of $p$ and an $k$-dimensional embedded parameterized manifold $g:U \to\mathbb{R}^n$ such that $g(U)=\mathcal{C}\cap N(p)$. Equivalently, a non-empty set $\mathcal{C}\subset\mathbb{R}^n$ is an $k$-dimensional manifold if and only there is an open neighborhood $N(p)$ of $p$, such that $\mathcal{C}\cap N(p)$ is the graph of a function $g$, where $n-k$ of the variables $x_1,\cdots,x_n$ are functions of the other $k$ variables.

The application of these concepts is straightforward given Assumptions \ref{Assump:1} and \ref{Assump:2}, and Theorem~\ref{thm:imp1} implies Theorem~\ref{thm:GCisMan}.

We recall that every continuously differentiable manifold is diffeomorphic to a continuously infinitely differentiable manifold, so we simply say that generalized contours are locally smooth $n-d$-dimensional manifolds. It further follows that $Q^{-1}(q)$ is locally approximated by the Jacobian of $Q^{-1}$ at each point, where the Jacobian is surjective a.e. A manifold with these properties is called a \textbf{\textit{submersion}}.

\subsection{Proof of Theorem~\ref{thm:sol}.}\label{sec:solproof}

Under Assumptions \ref{Assump:1} and \ref{Assump:2}, we may write the solution of the aSCP in terms of conditional densities with respect to the underlying Lebesgue measures. The proof of Theorem~\ref{thm:sol} depends on describing how $Q$ transforms the Lebesgue measure. We denote the Lebesgue measure on $(\Lambda,\mathcal{B}_\Lambda)$ by $\mu_{\Lambda}$ and the Lebesgue measure on $(\mathcal{D},\mathcal{B}_{\mathcal{D}})$ by $\mu_{\mathcal{D}}$.  We define the $Q$-induced measure $\tilde{\mu}_{\mathcal{D}}$ on $(\mathcal{D},\mathcal{B}_{\mathcal{D}})$ by $\tilde{\mu}_{\mathcal{D}}(A)=Q\mu_{\Lambda} =\mu_{\Lambda}(Q^{-1}(A))$ for $A\in \mathcal{B}_\mathcal{D}$.

The following result gives conditions that imply that $\tilde{\mu}_{\mathcal{D}}$ is absolutely continuous with respect to $\mu_{\mathcal{D}}$ with  density that involves a surface integral over each contour $Q^{-1}(y)$. 
\begin{theorem} \label{thm:mud}
	Assume Assumptions \ref{Assump:1} and \ref{Assump:2} hold. 
	Let $\nu_\Lambda$ be a measure on $(\Lambda,\mathcal{B}_\Lambda)$ that is absolutely continuous with respect to $\mu_\Lambda$ so $d\nu_\Lambda = f_\Lambda \, d\mu_\Lambda$ for a Lebesgue integrable function $f_\Lambda$.  Let $\tilde \nu_{\mathcal{D}} = Q \nu_\Lambda$. Then, $d\tilde{\nu}_{\mathcal{D}}=\tilde{f}_{\mathcal{D}}(q)\, d\mu_{\mathcal{D}}$, where 
	\begin{equation}\label{COVQgen}
		\tilde{f}_{\mathcal{D}}(q)=\int_{Q^{-1}(q)}f_\Lambda \, \frac{1}{\sqrt{\det (J_{Q}J_{Q}^\top) }}\, ds.
	\end{equation}
	
	In the case that $\nu_\Lambda=\mu_\Lambda$, then
	$d\tilde{\mu}_{\mathcal{D}}=\tilde{\rho}_{\mathcal{D}}(q)\, d\mu_{\mathcal{D}}$, where 
	\begin{equation}\label{COVQ}
		\tilde{\rho}_{\mathcal{D}}(q)=\int_{Q^{-1}(q)}\frac{1}{\sqrt{\det (J_{Q}J_{Q}^\top) }}\, ds.
	\end{equation}
	Moreover, $\tilde{\rho}_{\mathcal{D}}(q)>0$ for almost all $q\in \mathcal{D}$ with $Q^{-1}(q) \subset \mathrm{int}\,{\Lambda}$, where  $\mathrm{int}\,{\Lambda}$ is the interior of $\Lambda$. If $\mu_{\mathcal{D}}\big(Q(\Lambda)\setminus Q(\mathrm{int}\,\Lambda)\big)=0$, then $\mu_{\mathcal{D}}\big(\{ q\in \mathcal{D},\, \tilde{\rho}_{\mathcal{D}}(q)=0\}\big) = 0$. 
\end{theorem}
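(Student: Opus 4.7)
The plan is to identify this as an instance of the coarea formula from geometric measure theory (as suggested by the paper's own remark about parallel development in \cite{evansgar}), then read off the density formula and handle the positivity claims by a separate direct argument.

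First, I would invoke the coarea formula: for a (locally) Lipschitz map $Q: U_\Lambda \to \mathbb{R}^m$ with $m \leq n$ and any nonnegative measurable $g: \Lambda \to \mathbb{R}$, one has
\begin{equation*}
\int_\Lambda g(\lambda)\sqrt{\det\bigl(J_Q(\lambda)\, J_Q(\lambda)^\top\bigr)}\, d\mu_\Lambda(\lambda) = \int_\mathcal{D} \int_{Q^{-1}(q)} g(\lambda)\, ds\, d\mu_\mathcal{D}(q),
\end{equation*}
where $ds$ is the $(n-m)$-dimensional Hausdorff measure on the $(n-m)$-dimensional manifold $Q^{-1}(q)$ identified in Theorem~\ref{thm:GCisMan}. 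Assumption~\ref{Assump:2} provides the regularity needed: $Q$ is $C^1$ a.e.~on an open neighborhood of the compact set $\Lambda$, and the GD condition guarantees that the set where $J_Q$ fails to have rank $m$ is a finite union of manifolds of dimension at most $n-1$, hence $\mu_\Lambda$-null, and by Sard's theorem its image in $\mathcal{D}$ is $\mu_\mathcal{D}$-null. On the complement, $\sqrt{\det(J_Q J_Q^\top)}$ is strictly positive, so division by it is justified almost everywhere.

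Second, to extract the density of $\tilde{\nu}_\mathcal{D}$, I would use the defining identity $\tilde{\nu}_\mathcal{D}(A) = \nu_\Lambda(Q^{-1}(A)) = \int_\Lambda \chi_{Q^{-1}(A)}\, f_\Lambda\, d\mu_\Lambda$ for $A \in \mathcal{B}_\mathcal{D}$ and apply the coarea formula to
\begin{equation*}
g(\lambda) = \frac{\chi_{Q^{-1}(A)}(\lambda)\, f_\Lambda(\lambda)}{\sqrt{\det\bigl(J_Q(\lambda) J_Q(\lambda)^\top\bigr)}},
\end{equation*}
extended by zero on the negligible set where $J_Q$ fails to be full rank. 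Because $\chi_{Q^{-1}(A)}(\lambda) = \chi_A(Q(\lambda))$, the coarea identity factors the outer integral as $\int_A(\cdots)\, d\mu_\mathcal{D}(q)$ and directly yields $d\tilde{\nu}_\mathcal{D} = \tilde{f}_\mathcal{D}\, d\mu_\mathcal{D}$ with $\tilde{f}_\mathcal{D}$ given by \eqref{COVQgen}. Specializing to $f_\Lambda \equiv 1$ produces $\tilde{\rho}_\mathcal{D}$ in \eqref{COVQ}.

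Third, for the positivity assertions: if $q$ is a regular value of $Q$ (i.e.~$J_Q$ has rank $m$ everywhere on $Q^{-1}(q)$) and $Q^{-1}(q) \subset \mathrm{int}\,\Lambda$, then Theorem~\ref{thm:GCisMan} gives that $Q^{-1}(q)$ is a nonempty $(n-m)$-dimensional smooth manifold on which the continuous integrand $1/\sqrt{\det(J_Q J_Q^\top)}$ is strictly positive, so the surface integral in \eqref{COVQ} is strictly positive. By Sard's theorem, the set of critical values has $\mu_\mathcal{D}$-measure zero, so the positivity holds for almost every such $q$. Finally, $\{q \in \mathcal{D} : \tilde{\rho}_\mathcal{D}(q) = 0\}$ must lie (up to a Sard-null set) inside $Q(\Lambda) \setminus Q(\mathrm{int}\,\Lambda)$, which is assumed $\mu_\mathcal{D}$-null, giving the last claim.

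The main obstacle is the rigorous justification of the coarea formula under the precise regularity we have assumed: $Q$ is only $C^1$ almost everywhere on $U_\Lambda$, and the bad set where $J_Q$ drops rank is allowed by the GD condition. One must verify that the bad set intersects almost every generalized contour in $H^{n-m}$-measure zero, which follows from applying the coarea identity itself to the indicator of the bad set (its integral on the left is zero, so the inner surface integral vanishes for $\mu_\mathcal{D}$-a.e.~$q$). With this control in hand, the change-of-variables computation goes through cleanly and yields the formula together with its positivity properties.
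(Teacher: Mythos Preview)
Your proposal is correct and takes a genuinely different route from the paper. The paper does not invoke the coarea formula as a black box; instead it essentially \emph{re-derives} the special case needed. Concretely, the paper first proves \eqref{COVQ} directly: it covers $Q^{-1}([a,b])$ by finitely many open sets on each of which some $m\times m$ minor $J_Q^{(m)}$ is invertible, applies the Implicit Function Theorem (Theorem~\ref{thm:imp1}) to parameterize $Q^{-1}(q)$ locally as a graph over the remaining $n-m$ coordinates, computes the surface element explicitly as $ds = \sqrt{\det(J_QJ_Q^\top)}\,\bigl|\det(J_Q^{(m)})\bigr|^{-1}\,d\hat\lambda^{(m)}$, and then matches this against the change-of-variables factor $\bigl|\det(J_Q^{(m)})\bigr|^{-1}$ coming from $d\lambda^{(m)} = \bigl|\det(J_Q^{(m)})\bigr|^{-1}\,d\mu_{\mathcal{D}}(q)$. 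Only after establishing \eqref{COVQ} does the paper treat general $f_\Lambda$, first for simple functions by linearity and then for nonnegative measurable $f_\Lambda$ via monotone convergence. Your approach is shorter and handles the general $f_\Lambda$ in one stroke, at the cost of importing a substantial theorem; the paper's approach is more self-contained, makes the local geometry of the generalized contours explicit (which ties in with Theorem~\ref{thm:GCisMan} and is reused in the continuity proof of Theorem~\ref{thm:conti}), and avoids the delicate question you flag about the precise regularity hypotheses under which the coarea formula is stated in standard references. The positivity arguments in the two approaches are essentially the same.
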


\begin{proof}
	Without loss of generality, we assume $f_\Lambda$ is nonnegative. 
	We start by showing \eqref{COVQ}. 	We prove that for any generalized rectangle $[a,b] \subset \mathbb{R}^m$ with  $b>a$, 
	\begin{equation*}
		\tilde{\mu}_{\mathcal{D}}([a,b])=\int_{[a,b]} \int_{Q^{-1}(q)}\frac{1}{\sqrt{\det (J_{Q}J_{Q}^\top) }}\, ds\,  d\mu_{\mathcal{D}}(q).
	\end{equation*}
	The  result follows from the Caratheodory Extension Theorem.
	
	Without loss of generality, we assume that $J_Q$ is full rank for every $\lambda \in Q^{-1}([a,b])$. This implies there are $m$ indices $i_1,i_2,\dots,i_m\in\{1,2,\dots,m\}$ such that $J_Q^{(m)}=\big(J_Q^{i_1}\,J_Q^{i_2}\,\cdots$ $J_Q^{i_m}\big)$ is invertible at $\lambda$, where $J_{Q}^i=i^{th}$ column of $J_Q$. Since $J_Q$ is continuous, there is an open ball $B_r(\lambda)$ of radius $r$ centered at $\lambda$, such that $J_Q^{(m)}$ is invertible in $B_r(\lambda)\cap Q^{-1}([a,b])$.
	
	$Q^{-1}([a,b])$ is a compact subset of $\Lambda$. We choose a finite collection $U_j = B_{r_j}(\lambda_j)$, $j = 1, \cdots, J$, such that $\displaystyle Q^{-1}([a,b])\subset \bigcup_{j=1}^J U_j$. We make the replacements $U_1\leftarrow U_1$ and, for $j>1$, $U_j \leftarrow U_j \setminus \overline{(\bigcup_{k=1}^{j-1} U_k)}$. We obtain a disjoint collection of open sets $\{U_j\}_{j=1}^J$  that covers $Q^{-1}([a,b])$ except possibly for a negligible set.
	
	In $U_j\cap Q^{-1}([a,b])$, $1\leq j \leq J$, we may write $(\lambda)\sim(\lambda^{(m_j)};\hat{\lambda}^{(m_j)})$ of which $\lambda^{(m_j)}$ are $m$ coordinates of $\lambda$ for which the Jacobian $J_Q^{(m_j)}$ is invertible, and $\hat{\lambda}^{(m_j)}$ the remaining $n-m$ entries with Jacobian $\hat{J}_Q^{(m_j)}$. We define $\pi_j:\mathbb{R}^n\to \mathbb{R}^{n-m}$ to be $\pi_j(\lambda)=\hat{\lambda}^{(m_j)}$,  $V_j=\{(Q(\lambda),\pi_j(\lambda)); \lambda \in U_j\cap Q^{-1}([a,b]) \}$, and $V_{jq}=\{\pi_j(\lambda); \lambda \in Q^{-1}(q)\cap U_j  \}$ for any $q\in [a,b]$.

	Temporarily,  we fix $j$ and drop the subscript on $M$. 	$V_j$ is diffeomorphic to $U_j \cap Q^{-1}([a,b])$ and 
	\begin{equation*}
		d\lambda^{(m)}\, d\hat{\lambda}^{(m)}= \left| \det \big(J_Q^{(m)}\big)^{-1} \right| \, d\mu_{\mathcal{D}}(q) \, d \hat{\lambda}^{(m)}.
	\end{equation*}
	For fixed $q$,  Theorem~\ref{thm:imp1} implies that for each $ U_j$, there is a continuously differentiable map $f$ such that $\lambda^{(m)}=f(\hat{\lambda}^{(m)})$. The map $f$ has Jacobian $J_f=\big(J_Q^{(m)}\big)^{-1}\hat{J}_Q^{(m)}$ so, under any suitable parameterization, the differential  form on the manifold $Q^{-1}(q)\cap  U_j$ can be written 
	\begin{equation*}
		ds =\sqrt{\det\big(\big(J_Q^{(m)}\big)^{-1}\hat{J}_Q^{(m)},I\big)\big(\big(J_Q^{(m)}\big)^{-1}\hat{J}_Q^{(m)},I\big)^\top}\,d\hat{\lambda}^{(m)}
		=k(\lambda)\, d\hat{\lambda}^{(m)},
	\end{equation*}
	where $
	k(\lambda) =\sqrt{\det(J_QJ_Q^\top)}\left|\, \det \big(J_Q^{(m)}\big)^{-1} \right|$. 
	Therefore
	\begin{align*}
		\tilde{\mu}_{\mathcal{D}}&([a,b]) =\mu_\Lambda(Q^{-1}([a,b])) =\sum_{j=1}^J \mu_\Lambda(U_j\cap Q^{-1}([a,b]))\\
		&=\sum_{j=1}^J \int_{U_j\cap Q^{-1}([a,b])} \, d\lambda_i^{(m_j)}\, d\hat{\lambda}_i^{(m_j)} =\sum_{j=1}^J \int_{V_j}\left| \det \big(J_Q^{(m_j)}\big)^{-1} \right|  \,  d \hat{\lambda}_i^{(m_j)}\, d\mu_{\mathcal{D}}(q)\\
		&=\int_{[a,b]}\, \sum_{j=1}^J\int_{V_{jq}} \left| \det \big(J_Q^{(m_j)}\big)^{-1} \right| \, d \hat{\lambda}_i^{(m)}\, d\mu_{\mathcal{D}}(q)\\
		&=\int_{[a,b]}\, \sum_{j=1}^J \int_{V_{jq}}  \frac{1}{\sqrt{\det(J_QJ_Q^\top)}}\, ds\, d\mu_{\mathcal{D}}(q)=\int_{[a,b]} \int_{Q^{-1}(q)}\frac{1}{\sqrt{\det (J_{Q}J_{Q}^\top) }}\, ds \, d\mu_{\mathcal{D}}(q).
	\end{align*}
	
	If $q\in \mathcal{D}$, and $Q^{-1}(q)\subset \mathrm{int}\,{\Lambda}$, there exists $\lambda\in \mathrm{int}\,{\Lambda}$ such that $Q(\lambda)=q$. So there exists a neighborhood $N(\lambda)$ of $\lambda$ with $N(\lambda)\subset \mathrm{int}\,{\Lambda}$. Since $Q^{-1}(q)\cap N(\lambda)$ is diffeomorphic to an open set in $\mathbb{R}^{n-m}$,
	\begin{equation*}
		\tilde{\rho}_{\mathcal{D}}(q)>\int_{Q^{-1}(y)\cap N(\lambda)}\frac{1}{\sqrt{\det (J_{Q}J_{Q}^\top) }}\, ds>0.
	\end{equation*}		
	This argument also shows that if $q\in Q(\Lambda)$ satisfies $ \tilde{\rho}_{\mathcal{D}}(q)=0$ then $q \in Q(\Lambda)\setminus Q(\mathrm{int}\, \Lambda)$.
	
	Next, we address the general case. First we assume  $f_\Lambda$ is a simple function, i.e., $f_{\Lambda}(\lambda) = \sum_{i=1}^{k} a_{i} \chi_{A_{i}}(\lambda) $ for positive numbers $\{a_i\}_{i=1}^k$ and  partition $\{A_i\}_{i=1}^k\subset\mathcal{B}_\Lambda$ of $\Lambda$. For any $B \in \mathcal{B}_{\mathcal{D}}$, we have
	\begin{eqnarray}
		\nu_{\Lambda}(Q^{-1}(B)) 
		&=& \sum_{i=1}^{k} a_{i} \mu_{\Lambda}(Q^{-1}(B)\cap A_{i})\nonumber \\
		&=&  \sum_{i=1}^{k}a_{i}\int_{B\cap Q(A_{i})} \int_{Q^{-1}(q)\cap A_{i} }\frac{1}{\sqrt{\det (J_{Q}J_{Q}^\top) }}\, ds \, d\mu_{\mathcal{D}}(q)    \nonumber \\
		&=& \sum_{i=1}^{k}a_{i}\int_{B} \int_{Q^{-1}(q)\cap A_{i} }\frac{1}{\sqrt{\det (J_{Q}J_{Q}^\top) }}\, ds \, d\mu_{\mathcal{D}}(q) \nonumber   \\
		&=&\int_{B} \int_{Q^{-1}(q)} \sum_{i=1}^{k} a_{i} \chi_{A_{i}}(\lambda)\frac{1}{\sqrt{\det (J_{Q}J_{Q}^\top) }}\, ds  \, d\mu_{\mathcal{D}}(q) \nonumber \\
		&=&  \int_{B} \int_{Q^{-1}(q)} f_{\Lambda}(\lambda)   \frac{1}{\sqrt{\det(J_{Q}{J_{Q}}^{\top})}} ds\, d\mu_{\mathcal{D}}(q). \nonumber
	\end{eqnarray} 
	
	We approximate a general nonnegative measurable function $f_\Lambda$ by a pointwise convergent monotone sequence of simple functions and use the Monotone Convergence Theorem to pass to a limit and obtain the result.
	
\end{proof}

We now are in position to prove the main result.

\begin{proof}[Proof of Theorem~\ref{thm:sol}]
	
	$P_{\mathrm{p}}$ defines an a.e. unique Ansatz prior $\{P_{\mathrm{p},N}(\cdot|q)\}_{q \in \mathcal{D}}$ via the disintegration,
	\begin{equation}\label{priordis}
		P_{\mathrm{p}}(A)=\int_{Q(A)}P_{\mathrm{p},N}(A|q)\, d \tilde{P}_{\mathrm{p},\mathcal{D}}(q)= \int_{Q(A)}\int_{Q^{-1}(q)\cap A}\, dP_{\mathrm{p},N}(\lambda|q)\, d \tilde{P}_{\mathrm{p},\mathcal{D}}(q)
	\end{equation}
	for all $A\in\mathcal{B}_\Lambda$, where $P_{\mathrm{p},N}(\cdot|q)$ is concentrated on $Q^{-1}(q)$.  We note that Assumption~\ref{Assump:shouldbe4} implies that $\tilde{\rho}_{\mathrm{p},\mathcal{D}}(q)>0$ for all $q \in \mathcal{D}$. Using \eqref{SIPAbsDis}, \eqref{mulamdist}, and Theorem~\ref{thm:mud},  we have
	\begin{align*}
		{P}_\Lambda(A)
		&=\int_{Q(A)}\int_{Q^{-1}(q)\cap A}\, dP_{\mathrm{p},N}(\lambda|q)\, dP_{\mathcal{D}}(q)=\int_{Q(A)}\int_{Q^{-1}(q)\cap A}\, dP_{\mathrm{p},N}(\lambda|q)\, \frac{\rho_{\mathcal{D}}(q)}{\tilde{\rho}_{\mathrm{p},\mathcal{D}}(q)} \, d\tilde{P}_{\mathrm{p},\mathcal{D}}(q)\\
		&=\int_A \frac{\rho_{\mathcal{D}}(Q(\lambda))}{\tilde{\rho}_{\mathrm{p},\mathcal{D}}(Q(\lambda))}\, dP_{\mathrm{p}} = \int_A \frac{\rho_{\mathcal{D}}(Q(\lambda))}{\tilde{\rho}_{\mathrm{p},\mathcal{D}}(Q(\lambda))}\,\rho_{\mathrm{p}} \, d \mu_\Lambda.
	\end{align*}
	The density $\frac{\rho_{\mathcal{D}}(Q(\lambda))\rho_{\mathrm{p}}(\lambda)}{\tilde{\rho}_{\mathrm{p},\mathcal{D}}(Q(\lambda))}$ is unique $\mu_\Lambda$ a.e.
	
\end{proof}

\subsection{Proof  of Theorem~\ref{entropy}}\label{sec:proofsunians}

\begin{proof}
	Let ${P}_\Lambda$ be the posterior corresponding any prior $P_{\mathrm{p}}$ and let $\overline{P}_\Lambda$ denote the posterior corresponding to the uniform prior. By Theorem~\ref{thm:sol}, these have respective densities,
	\begin{equation*}
		{\rho}_\Lambda(\lambda)=\frac{\rho_{\mathcal{D}}(Q(\lambda))}{\tilde{\rho}_{\mathrm{p},\mathcal{D}}(Q(\lambda))}\cdot \rho_{\mathrm{p}}(\lambda) ,
		\qquad \overline{\rho}_\Lambda(\lambda)=\frac{\rho_{\mathcal{D}}(Q(\lambda))}{\tilde{\rho}_{\mathcal{D}}(Q(\lambda))} ,
	\end{equation*}
	since $\overline{\rho}_\Lambda(\lambda)=0$ implies ${\rho}_\Lambda(\lambda)=0$. We compute
	\begin{align*}
		D\big({\rho}_\Lambda \|\overline{\rho}_\Lambda \big)& = 
		\int_{\Lambda} {\rho}_\Lambda(\lambda)
		\log \left(\frac{{\rho}_\Lambda(\lambda)}{\overline{\rho}_\Lambda(\lambda)}  \right) \, d\mu_{\Lambda}(\lambda)\\
		&= 
		\int_{\Lambda} {\rho}_\Lambda(\lambda)
		\log \left({\rho}_\Lambda(\lambda)  \right) \, d\mu_{\Lambda}(\lambda)
		- \int_{\Lambda} {\rho}_\Lambda(\lambda)
		\log \left(\frac{1}{\overline{\rho}_\Lambda(\lambda)}  \right) \, d\mu_{\Lambda}(\lambda)\\
		&= -H({\rho}_\Lambda) + H({\rho}_\Lambda,\overline{\rho}_\Lambda).
	\end{align*}
	
	By \eqref{COVQgen}, 
	
	\begin{align*}
		H({\rho}_\Lambda,\overline{\rho}_\Lambda) &= 
		\int_{\mathcal{D}} \int_{Q^{-1}(q)} {\rho}_\Lambda(\lambda)
		\log \left(\frac{1}{\overline{\rho}_\Lambda(\lambda)}  \right) 
		\frac{1}{\sqrt{\det(J_QJ_Q^\top)}}\, ds\, d \mu_{\mathcal{D}}(q) \\
		&= 
		\int_{\mathcal{D}} \int_{Q^{-1}(q)} \frac{\rho_{\mathcal{D}}(Q(\lambda))}{\tilde{\rho}_{\mathrm{p},\mathcal{D}}(Q(\lambda))} \rho_{\mathrm{p}}(\lambda) \cdot
		\log \left(\frac{\tilde{\rho}_{\mathcal{D}}(Q(\lambda))}{\rho_{\mathcal{D}}(Q(\lambda))} \right) 
		\frac{1}{\sqrt{\det(J_QJ_Q^\top)}}\, ds\, d \mu_{\mathcal{D}}(q) .
	\end{align*}
	Rearranging and using \eqref{COVQgen},
	\begin{align*}
		H({\rho}_\Lambda,\overline{\rho}_\Lambda) 
		&= 
		\int_{\mathcal{D}} \frac{\rho_{\mathcal{D}}(q)}{\tilde{\rho}_{\mathrm{p},\mathcal{D}}(q)}  
		\log \left(\frac{\tilde{\rho}_{\mathcal{D}}(q)}{\rho_{\mathcal{D}}(q)} \right) \left(
		\int_{Q^{-1}(q)}  \rho_{\mathrm{p}}(\lambda)
		\frac{1}{\sqrt{\det(J_QJ_Q^\top)}}\, ds\right)\, d \mu_{\mathcal{D}}(q) \\
		& = \int_{\mathcal{D}}  \rho_{\mathcal{D}}(q)  
		\log \left(\frac{\tilde{\rho}_{\mathcal{D}}(q)}{\rho_{\mathcal{D}}(q)} \right) \, d \mu_{\mathcal{D}}(q)  =
		\int_{\Lambda} \overline{\rho}_\Lambda(\lambda) \log \left( \frac{1}{\overline{\rho}_\Lambda(\lambda)}\right) \, d\mu_\Lambda(\lambda) = H(\overline{\rho}_\Lambda).
	\end{align*}
	We conclude that $
	H(\overline{\rho}_\Lambda) -H({\rho}_\Lambda)=	D\big({\rho}_\Lambda\| \overline{\rho}_\Lambda \big)  \geq 0$.
\end{proof}

\subsection{Proof  of Theorem~\ref{thm:conti}}\label{sec:proofconti}
For simplicity of notation, we first prove the result for the uniform prior. We treat the general case afterwards.

\begin{theorem}\label{lem:cont}
	Assume Assumptions \ref{Assump:1}, \ref{Assump:2}, and \ref{Assump:3} hold.  For $q_0\in\mathcal{D}$, assume that
	\begin{enumerate}
		\item \label{con1} There exists $\{ i_1,\dots, i_m\} \subset \{ 1,\dots, n\}$ such that $J_{Q}^{(m)}=\left(J_{Q}^{(i_1)},J_{Q}^{(i_2)},\dots,J_{Q}^{(i_m)}\right)$ is full rank on $Q^{-1}(q_0)$,
		\item \label{con2}$\big( 
		J_{Q} \;
		J_{B} 
		\big)^\top$ is full rank on $\partial \Lambda \cap Q^{-1}(q_0)$.
	\end{enumerate}
	Then, $\tilde\rho_\mathcal{D}(q)$ is continuous in a neighborhood of $q_0$.
\end{theorem}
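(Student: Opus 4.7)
The plan is to establish continuity of $\tilde{\rho}_{\mathcal{D}}$ at $q_0$ by reducing the surface integral over $Q^{-1}(q)$ to a finite sum of integrals over domains in $\mathbb{R}^{n-m}$ that vary continuously in $q$, then invoking continuity of parameter-dependent Lebesgue integrals. Since $\Lambda$ is compact and $Q$ is continuous, $Q^{-1}(q_0)$ is compact, so a finite-cover strategy is natural.

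First I would construct an interior chart at each $\lambda\in Q^{-1}(q_0)\cap\mathrm{int}(\Lambda)$. Using Condition~\ref{con1} and the Implicit Function Theorem (Theorem~\ref{thm:imp1}) applied to $Q(\lambda)-q=0$, there is a neighborhood $U_\lambda$ and a continuously differentiable map $\lambda^{(m)}=f(\hat\lambda^{(m)},q)$, jointly continuous in $(\hat\lambda^{(m)},q)$ for $q$ near $q_0$, that graphs $Q^{-1}(q)\cap U_\lambda$ over a fixed open set in $\mathbb{R}^{n-m}$. Mirroring the surface-element computation in the proof of Theorem~\ref{thm:mud}, the restriction of $ds/\sqrt{\det(J_Q J_Q^\top)}$ to this graph equals a density $k(\hat\lambda^{(m)},q)\,d\hat\lambda^{(m)}$, which is continuous in both variables because $J_Q$ and the selected submatrix $J_Q^{(m)}$ are continuous and the latter stays invertible in a neighborhood by continuity of the determinant.

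Second I would construct boundary charts at each $\lambda\in Q^{-1}(q_0)\cap\partial\Lambda$. Applying the IFT to the augmented system $(Q(\lambda)-q,B(\lambda))=0$ under Condition~\ref{con2} yields a continuously moving parameterization of $Q^{-1}(q)\cap\partial\Lambda$; combined with the graph parameterization from Condition~\ref{con1} in the same neighborhood, the set $Q^{-1}(q)\cap U_\lambda\cap\Lambda$ becomes a subset of $\mathbb{R}^{n-m}$ cut out by a continuously varying transverse inequality on $\hat\lambda^{(m)}$. Transversality (the full rank of $(J_Q\;J_B)^\top$) ensures that the characteristic function of this cut-out domain converges in $L^1_{\mathrm{loc}}$ as $q\to q_0$, so the domain of integration varies continuously in Lebesgue measure rather than jumping (cf.\ the tangency failure sketched in Figure~\ref{illboundary}).

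Third I would extract a finite subcover of $Q^{-1}(q_0)$ from these interior and boundary neighborhoods, take a smooth partition of unity $\{\phi_\alpha\}$ subordinate to it, and use continuity of $Q$ together with compactness to ensure that $Q^{-1}(q)$ stays inside the subcover for all $q$ in some neighborhood $N(q_0)\subset\mathcal{D}$. Then
\[
\tilde{\rho}_{\mathcal{D}}(q)=\sum_\alpha \int_{Q^{-1}(q)}\phi_\alpha(\lambda)\,\frac{1}{\sqrt{\det(J_Q J_Q^\top)}}\,ds,
\]
and each summand, pulled back via its local parameterization, is an integral over a continuously-varying bounded domain in $\mathbb{R}^{n-m}$ of a continuous integrand supported in a compact set. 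Continuity of each summand follows from dominated convergence; summing the finitely many summands delivers continuity of $\tilde{\rho}_{\mathcal{D}}$ at $q_0$. The main obstacle is the boundary case: one must show that the moving boundary of $Q^{-1}(q)\cap\Lambda$ inside each boundary chart contributes continuously to the surface measure, not merely set-theoretically. Condition~\ref{con2} is precisely what makes the cut-out inequality transverse and hence the domain change continuous in Lebesgue measure; without it, contours tangent to $\partial\Lambda$ could produce jumps in surface length, and continuity would fail.
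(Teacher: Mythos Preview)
Your proposal is correct and shares the same core ingredients as the paper's proof: local graph parameterizations via the Implicit Function Theorem applied to $Q(\lambda)-q=0$, a finite cover by compactness of $Q^{-1}(q_0)$, and the augmented system $(Q-q,\,B)=0$ at boundary points to ensure the integration domain varies continuously. The one technical difference is in how the local pieces are assembled. You use a smooth partition of unity and dominated convergence, whereas the paper exploits Condition~\ref{con1} more directly: since a \emph{single} choice of $m$ columns works on all of $Q^{-1}(q_0)$, the entire contour is globally a graph over the projection $\pi_{\hat m}(\Lambda\cap Q^{-1}(q))\subset\mathbb{R}^{n-m}$, and the paper writes $\tilde\rho_{\mathcal D}(q)$ as a single integral over that projected domain. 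Continuity then reduces to two explicit pieces: uniform continuity of the integrand $|J_Q^{(m)}|^{-1}$ on a compact set, plus the Lebesgue measure of the symmetric difference $C=\pi_{\hat m}(\Lambda\cap Q^{-1}(q_0))\,\triangle\,\pi_{\hat m}(\Lambda\cap Q^{-1}(q))$, which is bounded via the boundary map $g$ produced by the augmented IFT. Your partition-of-unity route is equally valid and is in fact closer in spirit to how the paper later removes the global-column hypothesis in Theorem~\ref{thm:conti}; the paper's direct route here is shorter only because Condition~\ref{con1} already hands you a global chart, making the partition of unity unnecessary.
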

\begin{proof}
	We first show the result for the uniform prior. 
	Without loss of generality, we assume $i_1=1, i_2=2, \dots, i_m = m$. We let $\lambda^{(m)}=(\lambda_{1},\dots, \lambda_{m})$, $\hat\lambda^{(m)}=(\lambda_{m+1},\dots, \lambda_{n})$, so $\lambda$ can be formally written as $\lambda\sim(\lambda^{(m)},\hat\lambda^{(m)})$. We define $\pi_{\hat{m}}:\mathbb{R}^n\to \mathbb{R}^{n-m}$ as $\pi_{\hat{m}}(\lambda)=\hat\lambda^{(m)}$.
	
	%put a picture to illustrate the idea, the global properties of generalized contours arise in determining  %intersection of generalized contour and the boundary of \Lambda.
	
	For a $\lambda_0\in \Lambda \cap Q^{-1}(q_0)$, since $J_{Q}^{(m)}(\lambda_0)$ is full rank, Theorem~\ref{thm:imp1} implies that $Q(\lambda)-q=0$ determines a unique continuous function $h:N(\hat\lambda_{0}^{(m)},r) \times N(q_0,r)\to N(\lambda_{0}^{(m)},r)$\iffalse, where $N(\hat\lambda_{0}^{(m)},r)$ is an open ball in $ P_{\hat\lambda^{(m)}}Q^{-1}(q_0)$, $N(\lambda_{0}^{(m)},r)$ an open ball in  $P_{\lambda^{(m)}}Q^{-1}(q_0)$ and $N(q_0,r)$ an open ball in $\mathcal{D}$, such that $Q(h(\hat\lambda^{(m)},y),\hat\lambda^{(m)})=y$\fi. Without loss of generality we assume $h$ is defined on $\overline{N(\hat\lambda_{0}^{(m)},r)} \times \overline{N(q_0,r)}$. Since the collection $\{N(\hat\lambda^{(m)},r)\times N(\lambda^{(m)},r): \lambda \in \Lambda\cap Q^{-1}(q_0)\}$ covers the compact set $\Lambda\cap Q^{-1}(q_0)$, there is a finite subcover $\{N(\hat\lambda_{k}^{(m)},r_k)\times N(\lambda_{k}^{(m)},r_k)\}, k=1,\dots, K$. Let $r_0=\min_{k} r_k$. There is a unique continuous function $h: \left(\bigcup_{k=1}^K \overline{N(\hat\lambda_{k}^{(m)},r_k)}\right)\times \overline{N(q_0,r_0)}\to \bigcup_{k=1}^K N(\lambda_{k}^{(m)},r_k)$.  We have
	\begin{align*}
		\tilde \rho_\mathcal{D}(q_0)=\int_{\pi_{\hat{m}}(\Lambda\cap Q^{-1}(q_0))}\frac{1}{\left|J_{Q}^{(m)}(h(\hat\lambda^{(m)},q_0),\hat\lambda^{(m)})\right|}\, d\hat\lambda^{(m)},
	\end{align*}
	where we use $|J_Q| $ to denote $|\det J_Q|$ to simplify notation.
	Since $J_{Q}^{(m)}(\lambda)$ is a continuous function, 
	$\left|J_{Q}^{(m)}(h(\hat\lambda^{(m)},q),\hat\lambda^{(m)})\right|^{-1}$
	is continuous on a compact domain $\left(\bigcup_{k=1}^K \overline{N(\hat\lambda_{k}^{(m)},r_k)}\right)\times \overline{N(q_0,r_0)}$, which means it is uniformly continuous and bounded. Let $M>0$ be bound so $\left|J_{Q}^{(m)}(h(\hat\lambda^{(m)},q),\hat\lambda^{(m)})\right|^{-1}< M$.
	
	Since $\partial \Lambda$ is piecewise smooth, for almost all $\lambda_0\in \partial \Lambda$, there exists $r$ such that $B:N(\lambda_0, r) \to \mathbb{R}$ is a continuously differentiable map and determines $\partial \Lambda\cap N(\lambda_0, r)$ by $B(\lambda)=0$.
	
	Take one such point $\lambda_0\in \partial \Lambda \cap Q^{-1}(q_0)$, then there exists $N(\lambda_0, r)$ such that $\partial \Lambda \cap Q^{-1}(q_0)$ is determined by $Q^*(\lambda)=\begin{pmatrix} 
		Q(\lambda)-q_0 \\
		B(\lambda)
	\end{pmatrix}=0$ 
	with
	\begin{align*}
		J_{Q^*}=\begin{pmatrix} 
			J_{Q} \\
			J_{B} 
		\end{pmatrix}=
		\begin{pmatrix} 
			J_{Q}^{(m)}& \hat J_{Q}^{(m)} \\
			J_{B}^{(m)}& \hat J_{B}^{(m)} 
		\end{pmatrix}
	\end{align*}
	having full rank because of Condition 2. 
	Since $J_{Q^*}(\lambda_0)$ and $J_{Q}^{(m)}(\lambda_0)$
	are full rank, there exists $j \in m+1,\dots, n$ such that 
	\begin{align*}
		\begin{pmatrix} 
			J_{Q}^{(m)} & (\hat J_{Q}^{(m)})_j \\
			J_{B}^{(m)} & (\hat J_{B}^{(m)})_j
		\end{pmatrix}(\lambda_0)
	\end{align*} is invertible. 
	
	We define $\lambda^{(m+1)}=(\lambda^{(m)},  \lambda_{j} ), \hat\lambda^{(m+1)}=\hat\lambda^{(m)}\setminus \lambda_{j}$, where we write $\lambda\sim (\lambda^{(m+1)},\hat\lambda^{(m+1)})$. By Theorem~\ref{thm:imp1}, there exists a unique continuous function,
	\[g:N(\hat\lambda_{0}^{(m+1)},r') \times N(q_0,r')\to N(\lambda_{0}^{(m)},r')\times N(\lambda_{0j},r'),
	\] 
	which satisfies  $
	Q^*(g(\hat\lambda^{(m+1)},y),\hat\lambda^{(m+1)})=0$. 
	Without loss of generality, we assume $g$ is defined on $\overline{N(\hat\lambda_{0}^{(m+1)},r')} \times \overline{N(q_0,r')}$. By compactness of the domain, $g$ is uniformly continuous.
	The collection of sets $\{N(\hat\lambda^{(m+1)},r') \times N(\lambda^{(m)},r')\times N(\lambda_{j},r'): \lambda \in \partial \Lambda \cap Q^{-1}(q_0)\}$ covers $\partial \Lambda \cap Q^{-1}(y)$ for $y\in N(q_0,r')$. We assume $r'$ is sufficiently small so that 
	\[N(\hat\lambda^{(m+1)},r') \times N(\lambda^{(m)},r')\times N(\lambda_{j},r')\subset \bigcup_{k=1}^K\left(N(\hat\lambda_{k}^{(m)},r_k)\times N(\lambda_{k}^{(m)},r_k)\right).\]
	
	Since $\partial \Lambda \cap Q^{-1}(q_0)$ is compact, there is a finite subcover  $\{N(\hat\lambda_{\ell}^{(m+1)},r_\ell') \times N(\lambda_{\ell}^{(m)},r_\ell')\times N(\lambda_{\ell j},r_\ell')\}$, $\ell=1,\dots, L$. Let $r_0'=\min\{r'_\ell,\ell=1,\cdots,L; r_0\}$ then for any $q \in N(q_0,r_0')$, the sets also cover $\partial \Lambda \cap Q^{-1}(q)$. So Condition 2 holds for all $q \in N(q_0,r_0')$. Thus,
	\[
	\bigcup_{\ell=1}^L \left(N(\hat\lambda_{\ell}^{(m+1)},r_\ell') \times N(\lambda_{\ell}^{(m)},r_\ell')\times N(\lambda_{\ell j},r_\ell')\times N(q_0,r_0')\}\right)
	\] 
	is covered by $\{ N(\hat\lambda_{k}^{(m)},r_k)\times N(\lambda_{k}^{(m)},r_k)\times N(q_0, r_0): k = 1, \cdots, K\}$.
	
	As a result, $\{ N(\hat\lambda_{k}^{(m)},r_k)\times N(\lambda_{k}^{(m)},r_k)\}$ covers $\Lambda\cap Q^{-1}(q)$ for all $q\in N(q_0,r_0')$. Thus,
	\begin{align*}
		\tilde \rho_\mathcal{D}(q)=\int_{\pi_{\hat{m}}(\Lambda\cap Q^{-1}(q))}\frac{1}{\left|J_{Q}^{(m)}(h(\hat\lambda^{(m)},q),\hat\lambda^{(m)})\right|}\, d\hat\lambda^{(m)}.
	\end{align*}
	\iffalse As a result, Condition 1 holds for all $y\in N(q_0,r')$.\fi
	For $q \in N(q_0,r'_0)$, let $C=\pi_{\hat{m}}(\Lambda\cap Q^{-1}(q_0)) \triangle \pi_{\hat{m}}(\Lambda\cap Q^{-1}(q))$. Thus,
	\begin{multline}\label{refthisone}
		|\tilde \rho_\mathcal{D}(q_0)-\tilde \rho_\mathcal{D}(q)| \\
		\leq \int\limits_{\pi_{\hat{m}}(\Lambda\cap Q^{-1}(q_0))}\left|\frac{1}{\left|J_{Q}^{(m)}(h(\hat\lambda^{(m)},q_0),\hat\lambda^{(m)})\right|}-\frac{1}{\left|J_{Q}^{(m)}(h(\hat\lambda^{(m)},q),\hat\lambda^{(m)})\right|}\right|\, d\hat\lambda^{(m)} +\int_C M \,d\hat\lambda^{(m)}. 
	\end{multline}
	For the first term, uniform continuity implies that given $\epsilon>0$, there exists $r_\epsilon<r_0'$ such that for $q\in N(q_0,r_\epsilon)$, and $\hat \lambda^{(m)}\in \left(\bigcup_{k=1}^K N(\hat\lambda_{k}^{(m)},r_k)\right)$,
	\begin{align*}
		\left|\frac{1}{\left|J_{Q}^{(m)}(h(\hat\lambda^{(m)},q_0),\hat\lambda^{(m)})\right|}-\frac{1}{\left|J_{Q}^{(m)}(h(\hat\lambda^{(m)},q),\hat\lambda^{(m)})\right|}\right|< \epsilon.
	\end{align*}
	Thus, 
	\begin{align*}
		\int_{\pi_{\hat{m}}(\Lambda\cap Q^{-1}(q_0))}\left|\frac{1}{\left|J_{Q}^{(m)}(h(\hat\lambda^{(m)},q_0),\hat\lambda^{(m)})\right|}-\frac{1}{\left|J_{Q}^{(m)}(h(\hat\lambda^{(m)},q),\hat\lambda^{(m)})\right|}\right|\, d\hat\lambda^{(m)}\\
		\leq \hat\mu^{(m)}(\pi_{\hat{m}}(\Lambda\cap Q^{-1}(q_0)))\; \epsilon
	\end{align*}
	For the second term, we need $\hat\mu^{(m)}(C)$ to be small when $q$ is close to $q_0$.  Since $\{\pi_{\hat{m}}\big(N(\hat\lambda_{\ell}^{(m+1)},$ $r_\ell') \times N(\lambda_{\ell}^{(m)},r_\ell')\times N(\lambda_{\ell j},r_\ell')\big)=N(\hat\lambda_{\ell}^{(m+1)},r_\ell') \times N(\lambda_{\ell j},r_\ell')\}$
	covers $\pi_{\hat{m}}(\partial \Lambda\cap Q^{-1}(q))$ when $q\in N(q_0,r_0')$, then
	\begin{align*}
		&\hat\mu^{(m)}(C)=\hat\mu^{(m)}(\pi_{\hat{m}}(\Lambda\cap Q^{-1}(q_0)) \triangle \pi_{\hat{m}}(\Lambda\cap Q^{-1}(q)))\\
		&\quad \leq\sum_{\ell=1}^L \hat\mu^{(m)}\left(\left(\pi_{\hat{m}}\left(\Lambda\cap Q^{-1}(q_0)\right) \triangle \pi_{\hat{m}}\left(\Lambda\cap Q^{-1}(q)\right)\right)\cap \left(N(\hat\lambda_{\ell}^{(m+1)},r_\ell') \times N(\lambda_{\ell j},r_\ell')\right)\right),
	\end{align*}
	and
	\begin{align*}
		\hat\mu^{(m)}\left(\left(\pi_{\hat{m}}\left(\Lambda\cap Q^{-1}(q_0)\right) \triangle \pi_{\hat{m}}\left(\Lambda\cap Q^{-1}(q)\right)\right)\cap \left(N(\hat\lambda_{\ell}^{(m+1)},r_\ell') \times N(\lambda_{\ell j},r_\ell')\right)\right)\\\leq\int_{N(\hat\lambda_{\ell}^{(m+1)},r_\ell')}\left|g(\hat\lambda^{(m+1)},q_0)-g(\hat\lambda^{(m+1)},q)\right|\, d\hat\lambda^{(m+1)}.
	\end{align*}
	By the uniform continuity of $g$, there exists an $r_\epsilon'\leq r_0'$ such that for all $q\in N(q_0,r_\epsilon')$, $\hat\mu^{(m)}(B)<\epsilon$.
	Finally, we have for all $q\in N(q_0,r)$, $r=\min(r_\epsilon,r_\epsilon')$,
	\begin{align*}
		|&\tilde \rho_\mathcal{D}(q_0) -\tilde \rho_\mathcal{D}(q)|\\
		&\leq \int\limits_{\pi_{\hat{m}}(\Lambda\cap Q^{-1}(q_0))}\left|\frac{1}{\left|J_{Q}^{(m)}(h(\hat\lambda^{(m)},q_0),\hat\lambda^{(m)})\right|}-\frac{1}{\left|J_{Q}^{(m)}(h(\hat\lambda^{(m)},q),\hat\lambda^{(m)})\right|}\right|\, d\hat\lambda^{(m)} +\int_C M\, d\hat\lambda^{(m)}\\
		&\leq \hat\mu^{(m)}(\pi_{\hat{m}}(\Lambda\cap Q^{-1}(q_0)))\, \epsilon+ M\epsilon.
	\end{align*}
	So $\tilde \rho_\mathcal{D}(q)$ is continuous at $q_0$. Furthermore, the two conditions hold for all $q$ in $N(q_0,r_0')$, so $\tilde \rho_\mathcal{D}(q)$ is continuous in $N(q_0,r)$ as well.
	
	Next, we treat a general prior. Arguing as for \eqref{refthisone}, we have
	\begin{align*}
		|&\tilde \rho_{p,\mathcal{D}}(q_0)-\tilde \rho_{p,\mathcal{D}}(q)|\\
		&\leq \int\limits_{\pi_{\hat{m}}(\Lambda\cap Q^{-1}(q_0))}\left|\frac{\rho_p(h(\hat\lambda^{(m)},q_0),\hat\lambda^{(m)})}{\left| J_{Q}^{(m)}(h(\hat\lambda^{(m)},q_0),\hat\lambda^{(m)})\right|}-\frac{\rho_p(h(\hat\lambda^{(m)},q),\hat\lambda^{(m)})}{\left| J_{Q}^{(m)}(h(\hat\lambda^{(m)},q),\hat\lambda^{(m)})\right|}\right|\, d\hat\lambda^{(m)} +\int_C M_2 \,d\hat\lambda^{(m)},
	\end{align*}
	where $M_2$ is a constant such that,
	\[
	\left|\frac{\rho_p(h(\hat\lambda^{(m)},q),\hat\lambda^{(m)})}{\left| J_{Q}^{(m)}(h(\hat\lambda^{(m)},q),\hat\lambda^{(m)})\right|}\right| < M_2 
	\textrm{ on }
	\bigcup_{k=1}^K\left(N(\hat\lambda_{k}^{(m)},r_k)\right)\times N(q_0,r_0).
	\]
	Using the same argument as for the case of a uniform prior, there exists a $\delta > 0$ such that $|q-q_0| < \delta$ implies $|\tilde \rho_{p,\mathcal{D}}(q_0)-\tilde \rho_{p,\mathcal{D}}(q)| < \epsilon$.

\end{proof}

We remove the requirement that $Q^{-1}(q_0)$ is uniformly parameterized to prove Theorem~\ref{thm:conti}.

\begin{proof}[Proof of Theorem~\ref{thm:conti}]
	For all $\lambda \in Q^{-1}(q_0)$, there exists a $N(\lambda, r)$ such that there exists $\{ i_1$, $\dots$, $i_m\}$ $\subset$ $\{ 1,\dots, n\}$ with $J_{Q}^{(m)}=\left(J_{Q}^{(i_1)}\, J_{Q}^{(i_2)}\,\cdots J_{Q}^{(i_m)}\right)$ being full rank on $Q^{-1}(q_0)$ in $N(\lambda, r)$. The boundary of $N(\lambda,r)\cap \Lambda$, which is the union of part of $\partial\Lambda$ and part of $\partial N(\lambda,r)$ is piecewise smooth. Therefore, there is a continuously differentiable map $B^\prime$ such that $\partial \left(N(\lambda,r)\cap \Lambda\right)$ is locally determined by $B'(\lambda)=0$. When $r$ is sufficiently small, $\big( J_{Q} $ $J_{B'} 
	\big)^\top$  is full rank on $\partial \left(N(\lambda,r)\cap \Lambda\right) \cap Q^{-1}(q_0)$ by continuity of $J_Q$.  The collection $\{N(\lambda,r)\}$ admits a finite sub cover $\{N(\lambda_k,r_k)\}_{k=1}^K$ of $\Lambda \cap Q^{-1}(q_0)$. Taking $r$ sufficiently small, the collection covers $\Lambda \cap Q^{-1}(q)$, for $q$ in $N(q_0,r)$. Define $N_k=N(\lambda_k,r_k)\setminus \left(\bigcup_{j=1}^{k-1} N(\lambda_j,r_j) \right)$, the boundary of $N_k\cap \Lambda$ is piecewise smooth. We assume there exists $B_k(\lambda)$ such that 
	$\partial \left(N_k\cap \Lambda\right)$ is  determined locally by $B_k(\lambda)=0$, $k = 1, \cdots, K$. Then, $\big(  J_{Q} \;J_{B_k} 	\big)^\top$ is full rank on $\partial \left(N_k\cap \Lambda\right)$ and
	\begin{align*}
		\tilde\rho_{p,\mathcal{D}}(q)&=\int_{Q^{-1}(q)\cap \Lambda}\frac{\rho_p}{\sqrt{\det (J_{Q}J_{Q}^\top) }}\, ds=\sum_{k=1}^K \int_{Q^{-1}(q)\cap \left(N_k\cap \Lambda\right)} \frac{\rho_p}{\sqrt{\det (J_{Q}J_{Q}^\top) }}\, ds.
	\end{align*}
	So for all $q \in N(q_0,r)$,
	\begin{multline*}
		|\tilde \rho_{p,\mathcal{D}}(q)-\tilde \rho_{p,\mathcal{D}}(q_0)|\\
		\leq \sum_{k=1}^K\left|\int_{Q^{-1}(q)\cap \left(N_k\cap \Lambda\right)} \frac{\rho_p}{\sqrt{\det (J_{Q}J_{Q}^\top) }}\, ds-\int_{Q^{-1}(q_0)\cap \left(N_k\cap \Lambda\right)} \frac{\rho_p}{\sqrt{\det (J_{Q}J_{Q}^\top) }}\, ds\right|
	\end{multline*}
	Theorem~\ref{lem:cont} implies that $\tilde \rho_\mathcal{D}(q)$ is continuous at $q_0$. Since $\int_{Q^{-1}(q)\cap \left(N_k\cap \Lambda\right)} \frac{\rho_p}{\sqrt{\det (J_{Q}J_{Q}^\top) }}\, ds$ is continuous in a neighborhood of $q_0$, $\tilde \rho_{p,\mathcal{D}}(q)$ is continuous in a neighborhood of $q_0$.
\end{proof}

\subsection{Background on random samples and approximation of sets}\label{sec:randprop}

The estimation meth\-odology is heavily influenced by the theory of stochastic geometry and the approximation of sets \cite{Stoyan1999,khmaladze2001almost,penrose2007}. We apply this material to both $\Lambda$ and $\mathcal{D}$, so for efficiency of presentation, we consider a measurable set $\Omega\subset\mathbb{R}^k$, $k\geq 1$, with measurable boundary of Lebesgue measure zero $\mu_{\mathcal{L}}(\partial \Omega) = 0$ and Borel $\sigma-$algebra $\mathcal{B}_\Omega$. We let $\mu_\Omega$ denote the Lebesgue measure restricted to $\Omega$. A \textbf{\textit{partition}} of a measurable set $\Omega$ is a finite or countable collection of  non-intersecting measurable subsets $\{B_i\}_{i}$ such that  $\Omega= \bigcup_{i} B_i$, except possibly for set of measure $0$.  A collection points $\{\omega_j\}_{j=1}^M\subset \Omega$ defines a \textbf{\textit{Voronoi tessellation}} $\mathcal{V}_M = \{V_j\}_{j=1}^M$ that partitions $\Omega$, where 
${V}_j =  \{\omega\in {\Omega} \, : \, d_v(\omega_j,\omega) \leq d_v(\omega_i,\omega), \; i=1,\dots,M\}$, 
for a specified   metric $d_v(\cdot,\cdot)$ on ${\Omega}$.  For a set $A \in \mathcal{B}_\Omega$ with $\mu_\Omega(\partial A)=0$, the \textbf{\textit{Voronoi coverage}} of $A$ is defined $A_M=\bigcup_{\omega_j\in A, 1\leq j\leq M} V_j$.

A key property for approximation of sets by sequences of Voronoi tessellations is that the maximum inter-cell distance of cells in the sequence of Voronoi tessellations goes to zero a.s. A rule for defining a sequence of samples $ \big\{\{\omega^{(\ell)}_j\}_{j=1}^{M_\ell}\big\}_{\ell = 1}^\infty$ with associated Voronoi tessellations $\{\mathcal{V}_{M_\ell}\}_{\ell = 1}^\infty$, where $0<M_1 < M_2 < \cdots$, is \textbf{\textit{$\mathcal{B}_{\Omega}$-consistent}} if
\[
r_{M_\ell} = \max_{1\leq j\leq M_\ell} r\big(\omega^{(\ell)}_j\big)=\max_{1\leq j\leq M_\ell} \max_{\omega\in V^{(\ell)}_j} d_v\big(\omega,\omega^{(\ell)}_j\big)\to 0\; \mathrm{ a.s.\   as }\;\ell \to\infty.
\]

The following is a consequence of the results in \cite{penrose2007}.
\begin{theorem}\label{lem:SLLNS}
	Let $ \big\{\{\omega^{(\ell)}_j\}_{j=1}^{M_\ell}\big\}_{\ell = 1}^\infty$ denote a sequence of random samples generated by a Poisson point process corresponding to a probability measure that has an  a.e. positive density function with respect to the Lebesgue measure. If $A\in\mathcal{B}_{{\Omega}}$ satisfies $\mu_{{\Omega}}(\partial A) = 0$, then
	\begin{equation}\label{setappconv}
		\lim_{\ell \to \infty} \mu_{{\Omega}}(A_{M_\ell} \triangle A) = 0\; \mathrm{ a.e.}
	\end{equation}
\end{theorem}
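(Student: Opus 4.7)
The plan is to reduce the set-approximation claim to the fact that the Voronoi radii $r_{M_\ell}$ shrink to zero almost surely, and then control $A_{M_\ell}\triangle A$ by a tubular neighborhood of $\partial A$ whose Lebesgue measure vanishes. The hypothesis $\mu_\Omega(\partial A)=0$ is the essential geometric input; the hypothesis that the Poisson point process has an a.e.\ positive Lebesgue density is the essential probabilistic input.

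First, I would prove the deterministic inclusion
\[
A_{M_\ell}\triangle A \;\subset\; \bigl\{\omega\in\Omega:\ d_v(\omega,\partial A)\le r_{M_\ell}\bigr\}.
\]
This is the key geometric step. Indeed, fix $\omega\in A_{M_\ell}\triangle A$ and let $V_j^{(\ell)}$ be the Voronoi cell containing $\omega$, with generator $\omega_j^{(\ell)}$. Then $\omega$ and $\omega_j^{(\ell)}$ lie on opposite sides of $\partial A$ (one is in $A$ and the other is not, by the definition of the Voronoi coverage). Any continuous path from $\omega$ to $\omega_j^{(\ell)}$ — in particular the straight segment in $\mathbb{R}^k$ — must meet $\partial A$ at some point $\omega^\star$, so
\[
d_v(\omega,\partial A)\le d_v(\omega,\omega^\star)\le d_v(\omega,\omega_j^{(\ell)})\le r\!\bigl(\omega_j^{(\ell)}\bigr)\le r_{M_\ell}.
\]

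Second, I would invoke the results from \cite{penrose2007} (in the form used to define $\mathcal{B}_\Omega$-consistency) to obtain $r_{M_\ell}\to 0$ a.s.\ for a Poisson point process whose intensity has a.e.\ positive Lebesgue density; this is the only place the probabilistic hypothesis is used. It follows that for almost every realization of the sample sequence, $r_{M_\ell}\to 0$.

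Third, I would conclude by a continuity-of-measure argument. The tubular neighborhoods $T_r=\{\omega\in\Omega:d_v(\omega,\partial A)\le r\}$ are nested and, as $r\downarrow 0$, decrease to $\bigcap_{r>0}T_r=\overline{\partial A}=\partial A$. Since $\Omega$ is bounded we have $\mu_\Omega(T_r)<\infty$, so by continuity of measure from above,
\[
\mu_\Omega(T_r)\ \longrightarrow\ \mu_\Omega(\partial A)=0\quad\text{as }r\downarrow 0.
\]
Combining this with the inclusion of Step~1 and the a.s.\ convergence $r_{M_\ell}\to 0$ of Step~2 yields $\mu_\Omega(A_{M_\ell}\triangle A)\to 0$ a.s. I expect the main obstacle to be formalizing the path/connectivity argument in Step~1 cleanly when the boundary of $\Omega$ is involved (for example ensuring the crossing point $\omega^\star$ can always be taken inside $\Omega$), but since $A\subset\Omega\subset\mathbb{R}^k$ and distances are measured in $\mathbb{R}^k$, the straight-segment argument works without modification.
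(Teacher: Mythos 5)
Your argument is correct, and it is worth noting that the paper itself offers no proof of this statement: Theorem~\ref{lem:SLLNS} is simply asserted to be an immediate consequence of the results in \cite{penrose2007}, so your write-up supplies the reduction that the paper leaves implicit. Your route --- the deterministic inclusion $A_{M_\ell}\triangle A\subset\{\omega\in\Omega:\ d_v(\omega,\partial A)\le r_{M_\ell}\}$, the almost-sure vanishing of the maximal Voronoi radius for a Poisson process with a.e.\ positive intensity, and continuity of measure from above applied to the tubular neighborhoods shrinking to $\partial A\cap\Omega$ --- is exactly the natural way to extract \eqref{setappconv} from the $\mathcal{B}_\Omega$-consistency property defined in \S\ref{sec:randprop}, and what it buys is a self-contained argument in which the only probabilistic input is $r_{M_\ell}\to 0$ a.s. Three small points deserve tightening. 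First, treat both halves of the symmetric difference explicitly: for $\omega\in A\setminus A_{M_\ell}$ you must first observe that the Voronoi cells cover $\Omega$, so $\omega$ lies in some $V^{(\ell)}_j$ whose generator is necessarily outside $A$, and only then run the crossing argument. Second, ``opposite sides of $\partial A$'' should be replaced by the connectedness argument: writing $\mathbb{R}^k=\mathrm{int}(A)\sqcup\partial A\sqcup\mathrm{ext}(A)$, a segment avoiding $\partial A$ is a connected set covered by two disjoint open sets and hence lies in one of them, contradicting that exactly one endpoint belongs to $A$ (with the degenerate case of an endpoint already on $\partial A$ handled trivially). Third, the inequality $d_v(\omega,\omega^\star)\le d_v(\omega,\omega^{(\ell)}_j)$ for $\omega^\star$ on the Euclidean segment is automatic for the Euclidean metric but not for an arbitrary $d_v$ as permitted in \S\ref{sec:randprop}; for a $d_v$ equivalent to the Euclidean metric the bound holds up to a constant, which still suffices for the limit. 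Your concern about the crossing point leaving $\Omega$ is immaterial, since $d_v(\cdot,\partial A)$ is a distance to a subset of $\mathbb{R}^k$ and the tubular neighborhood is intersected with $\Omega$ only when its $\mu_\Omega$-measure is taken; that intersection decreases to $\partial A\cap\Omega$, which is $\mu_\Omega$-null by hypothesis.
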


For an integrable non-negative function $\rho_\Omega$ defined on $(\Omega, \mathcal{B}_\Omega, \mu_\Omega)$, the formal approximation tool is the simple function  defined  with respect to a given measurable disjoint partition $\mathcal{I}=\{I_i\}_{i=1}^M\subset \mathcal{B}_\Omega$ of $\Omega$,
\begin{equation}\label{bassimpfun}
	\rho_{\Omega,M}(\omega)=\sum_{i=1}^M p_i \chi_{I_i}(\omega),\quad
	p_i=\frac{\int_{I_i}\rho_\Omega d\mu_\Omega}{\int_{I_i}d\mu_\Omega}.
\end{equation} 
In practice, we employ estimates of the coefficients $p_i$ computed using random sampling. 

We explore important properties of sequences of approximations associated with sequences of partitions that become finer. We assume that any cell in a partition has negligible boundary, i.e., $\mu_\Omega (\partial I_i)=0$ for all $i$. This holds for Voronoi tessellations corresponding to a Poisson process as well as partitions comprising balls and generalized rectangles. We define the \textbf{\textit{diameter}} of a set $A$ to be $\mathrm{diam}\, (A) = \sup_{\omega_1, \omega_2 \in A} \|\omega_1 - \omega_2\|$, with $\|\quad \|$ denoting the Euclidean norm. We consider a sequence of partitions  $\{\mathcal{I}_\ell\}_{\ell=1}^\infty=\big\{\{I^{(\ell)}_i\}_{i=1}^{M_\ell}\big\}_{\ell=1}^\infty$, where $0<M_1 < M_2 < \cdots$ is a sequence of positive integers. We define $\mathrm{diam}\,(\mathcal{I}_\ell)$ $=$ $\max_{1 \leq i \leq M_\ell}$ $\mathrm{diam}\,(I^{(\ell)}_i)$.

We have the following result.
\begin{theorem}\label{thm:funsimpapp}
	Assume the sequence  $\{\mathcal{I}_\ell\}_{\ell=1}^\infty$ of measurable disjoint partitions of $\Omega$ satisfy $\lim_{\ell \to \infty} \mathrm{diam}\,(\mathcal{I}_\ell) = 0$. Then, the family $\{\rho_{\Omega,M_\ell}\}_{\ell=1}^\infty$ is uniformly integrable and $\lim_{\ell \to \infty} \rho_{\Omega,{M_\ell}}$ $=$  $\rho_\Omega$ a.e.
\end{theorem}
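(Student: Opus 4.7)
The plan is to reformulate $\rho_{\Omega,M_\ell}$ as a conditional expectation and then invoke two classical tools: a uniform integrability principle for conditional expectations, and the Lebesgue differentiation theorem applied to shrinking cells. Setting $\mathcal{A}_\ell = \sigma(\mathcal{I}_\ell)$ and viewing $\tilde{\mu}_\Omega = \mu_\Omega / \mu_\Omega(\Omega)$ as a probability measure, the coefficient $p_i$ in \eqref{bassimpfun} is precisely the constant value of $E[\rho_\Omega \mid \mathcal{A}_\ell]$ on the atom $I_i$. So $\rho_{\Omega,M_\ell} = E[\rho_\Omega \mid \mathcal{A}_\ell]$ up to the overall normalization, and both conclusions become statements about conditional expectations along a sequence of $\sigma$-algebras whose generating partitions have diameters going to zero.

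For the uniform integrability claim, I would argue directly in the spirit of the standard Doob argument. Given $\varepsilon > 0$, choose $c > 0$ so large that $\int_{\{\rho_\Omega > c\}} \rho_\Omega\, d\mu_\Omega < \varepsilon/2$. Using the conditional Jensen inequality, $|\rho_{\Omega,M_\ell}|$ is dominated by $E[\rho_\Omega \mid \mathcal{A}_\ell]$, and Markov's inequality gives $\mu_\Omega(\{\rho_{\Omega,M_\ell} > M\}) \leq \|\rho_\Omega\|_1/M$. Combining these with the defining tower property, $\int_A \rho_{\Omega,M_\ell}\, d\mu_\Omega \leq c\, \mu_\Omega(A) + \int_{\{\rho_\Omega > c\}} \rho_\Omega\, d\mu_\Omega$ for any $A \in \mathcal{A}_\ell$, and a similar decomposition extends this to arbitrary $A \in \mathcal{B}_\Omega$, yielding a uniform-in-$\ell$ tail bound.

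For the a.e.\ convergence, at any point $\omega \in \Omega$ the cell $I^{(\ell)}_{i(\omega)}$ containing $\omega$ satisfies $I^{(\ell)}_{i(\omega)} \subset B_{\mathrm{diam}(\mathcal{I}_\ell)}(\omega)$, so the averaging sets shrink to $\omega$. The identity
\begin{equation*}
\rho_{\Omega,M_\ell}(\omega) = \frac{1}{\mu_\Omega(I^{(\ell)}_{i(\omega)})}\int_{I^{(\ell)}_{i(\omega)}} \rho_\Omega\, d\mu_\Omega
\end{equation*}
brings us into the setting of the Lebesgue differentiation theorem: at a.e.\ $\omega$ (the Lebesgue points of $\rho_\Omega$), such averages converge to $\rho_\Omega(\omega)$ provided the shrinking sets shrink \emph{regularly}, i.e., there is a uniform constant $c > 0$ with $\mu_\Omega(I^{(\ell)}_{i(\omega)}) \geq c\, \mu_\Omega(B_{\mathrm{diam}(\mathcal{I}_\ell)}(\omega))$.

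The main obstacle is exactly this regularity, which is not explicit in the hypothesis but is needed to apply the Lebesgue theorem for general integrable $\rho_\Omega$. I would address it in two stages: first, for continuous $\rho_\Omega$ (which is the case that actually gets used in the applications of Theorem~\ref{thm:sampsolprop}, where a.e.\ continuity of $\rho_\Omega$ is assumed), uniform continuity on the compact $\overline{\Omega}$ gives uniform convergence $\rho_{\Omega,M_\ell} \to \rho_\Omega$ directly, bypassing any regularity concern; second, for a general $\rho_\Omega \in L^1$, I would combine density of continuous functions in $L^1(\mu_\Omega)$ with a Hardy--Littlewood style maximal inequality for the maximal operator $\sup_\ell |\rho_{\Omega,M_\ell}|$, which requires a uniform eccentricity bound on the cells—a condition that holds for Voronoi tessellations generated by Poisson point processes as in Assumption~\ref{Assump:4} and can be imported from stochastic geometry.
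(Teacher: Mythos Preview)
Your proposal is largely correct and is in fact cleaner in places than the paper's argument, though it follows a different route.

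For uniform integrability, the paper does not invoke the conditional-expectation viewpoint at all. It gives a direct construction: given $\epsilon>0$, pick $\eta$ with $\int_{\{\rho_\Omega>\eta\}}\rho_\Omega<\epsilon$, set $\delta=\mu_\Omega(\{\rho_\Omega>\eta\})$, and then, for each $\ell$, build a set $A_{M_\ell}$ of measure exactly $\delta$ consisting of the superlevel set $\{\rho_{\Omega,M_\ell}>\eta'\}$ together with part of the level set $\{\rho_{\Omega,M_\ell}=\eta'\}$; a rearrangement argument then shows $\int_C \rho_{\Omega,M_\ell}\le \int_{\{\rho_\Omega>\eta\}}\rho_\Omega<\epsilon$ whenever $\mu_\Omega(C)\le\delta$. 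Your Doob-style argument via $\rho_{\Omega,M_\ell}=E[\rho_\Omega\mid\sigma(\mathcal{I}_\ell)]$ is the standard probabilistic proof that conditional expectations of a fixed $L^1$ function form a uniformly integrable family, and it is both shorter and more transparent. One small point: your extension ``to arbitrary $A\in\mathcal{B}_\Omega$'' is unnecessary, since the UI condition only asks about the superlevel sets $\{\rho_{\Omega,M_\ell}>M\}$, which already lie in $\sigma(\mathcal{I}_\ell)$.

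For the a.e.\ convergence, the paper simply asserts that the Lebesgue Differentiation Theorem gives $\rho_{\Omega,M_\ell}(\omega)\to\rho_\Omega(\omega)$ for $\omega$ off the union of cell boundaries, without addressing the regularity (bounded eccentricity) hypothesis you correctly flag. So the issue you raise is real, and the paper's own proof has the same gap; it is simply silent about it. Your two-stage workaround is reasonable: for continuous $\rho_\Omega$ the averages converge uniformly with no regularity needed, and this already covers the applications downstream (Theorem~\ref{thm:sampsolprop} assumes a.e.\ continuous densities). Your second stage, passing to general $L^1$ via a maximal inequality, does require exactly the eccentricity bound you want to avoid, so it does not close the gap at the stated generality of the theorem; but neither does the paper's proof.
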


\begin{proof}
	We set $\partial \mathcal{I} = \bigcup_{\ell = 1}^\infty \bigcup_{i=1}^{M_\ell} \partial 	I^{(\ell)}_i $. If $\omega \in \Omega\backslash \partial \mathcal{I}$, the Lebesgue Differentiation Theorem implies that $\rho_{\Omega,{M_\ell}}(\omega)\to \rho_\Omega(\omega)$  as $\ell \to \infty$. The approximation result follows immediately since  $\mu_\Omega(\partial \mathcal{I}) = 0$.
	
	We next show that $\{\rho_{\Omega,M_\ell}\}_{\ell=1}^\infty$ is uniformly integrable. For any $\epsilon>0$, there exists an $\eta$ such that $\int_{\{\rho_\Omega>\eta \}}\rho_\Omega\, d\mu_\Omega <\epsilon$. Let $\delta = \int_{\{\rho_\Omega>\eta\}}\, d\mu_\Omega=\mu_\Omega(\{\rho_\Omega>\eta\})$. 
	Since $\rho_{\Omega,M_\ell}$ is a simple function, there is a $\eta^\prime$ such that $\mu_\Omega(\{\rho_{\Omega,M_\ell}\leq \eta^\prime\})\geq \delta$ and $\mu_\Omega(\{\rho_{\Omega,M_\ell}> \eta^\prime\})< \delta$. We define $A_{M_\ell}=\{\rho_{\Omega,M_\ell}> \eta^\prime\}\cup B_{M_\ell}$, where $B_{M_\ell}$ is any measurable set satisfying $B_{M_\ell} \subset \{\rho_{\Omega,M_\ell}= \eta^\prime\}$ and $\mu_\Omega(B_{M_\ell})\leq \delta-\mu_\Omega(\{\rho_{\Omega,M_\ell}> \eta^\prime\})$.
	
	For all $C\in \mathcal{B}_\Omega$ such that $\mu_\Omega(C)\leq\delta$, 
	\begin{equation} \label{eq:ineq1}
		\int_C \rho_{\Omega,M_\ell}\, d\mu_{\Omega} \leq \int_{A_{M_\ell}} \rho_{\Omega,M_\ell}\, d\mu_{\Omega}\leq \int_{\{\rho_{\Omega,M_\ell}> \eta^\prime\}} \rho_{\Omega,M_\ell}\, d\mu_{\Omega}+\mu_\Omega(B_{M_\ell})\, \eta^\prime.
	\end{equation}
	
	By definition, we have
	\[
	\int_{\{\rho_{\Omega,M_\ell}> \eta^\prime\}} \rho_{\Omega,M_\ell}\, d\mu_\Omega=\int_{\{\rho_{\Omega,M_\ell}> \eta^\prime\}} \rho_{\Omega}\, d\mu_{\Omega},
	\;
	\int_{\{\rho_{\Omega,M_\ell}= \eta^\prime\}} \rho_\Omega\, d\mu_\Omega=\mu_\Omega(\{\rho_{\Omega,M_\ell}= \eta^\prime\})\, \eta^\prime.
	\] 
	There is an $B_{M_\ell}^\prime \subset \{\rho_{\Omega,M_\ell}= \eta^\prime\}$, such that $\mu_\Omega( B_{M_\ell}^\prime) = \mu_\Omega( B_{M_\ell})$ and 
	$
	\int_{ B_{M_\ell}^\prime} \rho_\Omega \, d\mu_\Omega\geq\mu_\Omega(B_{M_\ell}) \, \eta^\prime$. So~\eqref{eq:ineq1} implies
	\[
	\int_{\{\rho_{\Omega,M_\ell}> \eta^\prime\}} \rho_{\Omega,M_\ell}\, d\mu_\Omega+\mu_\Omega(B_{M_\ell})\eta^\prime
	\leq 
	\int_{\{\rho_{\Omega,M_\ell}> \eta^\prime\}} \rho_\Omega\, d\mu_\Omega +  \int_{B_{M_\ell}^\prime} \rho_\Omega \, d\mu_\Omega . 
	\]
	
	Since $\mu_\Omega(\{\rho_{\Omega,M_\ell}> \eta^\prime\}\cup B_{M_\ell}^\prime)=\delta=\mu_\Omega(\{\rho_\Omega>\eta\})$, we have
	\[
	\int_{\{\rho_{\Omega,M_\ell}> \eta^\prime\}} \rho_\Omega\, d\mu_\Omega +  \int_{B_{M_\ell}^\prime} \rho_\Omega\, d\mu_\Omega \leq \int_{\{\rho_\Omega>\eta\}}\rho_\Omega\, d\mu_\Omega<\epsilon
	\]
\end{proof}

\subsection{Proof of Theorem~\ref{thm:disap}}\label{sec:disapproof}

\begin{proof}
	To simplify notation, we prove the result in the case of the uniform prior and drop  superscripts $(\ell)$ and $(\jmath)$ and subscripts $\ell$ and $\jmath$ indicating dependency on the index of $\{M_\ell\}$ and $\{N_\jmath\}$ where possible without introducing confusion. We build the approximation in stages. We begin by discussing the approximation of densities. 
	
	Since $\mu_\mathcal{D}$ and $\tilde\mu_{\mathcal{D}}$ are equivalent with $d\tilde\mu_\mathcal{D}=\tilde\rho_\mathcal{D}d\mu_{\mathcal{D}}$, $P_\mathcal{D}$ has a density $\rho_{\mathcal{D}}^\prime=\frac{\rho_\mathcal{D}}{\tilde\rho_\mathcal{D}}$ with respect to $\tilde\mu_\mathcal{D}$. We define the simple function,
	\begin{equation*}
		\rho^\prime_{\mathcal{D},M_\ell}=\sum_{i=1}^{M_\ell} p_i^\prime \chi_{(I_i)}(y),\quad
		p_i^\prime=\frac{P_\mathcal{D}(I_i)}{\tilde\mu_\mathcal{D}(I_i)}=\frac{\int_{I_i}\frac{\rho_\mathcal{D}}{\tilde\rho_\mathcal{D}}d\tilde \mu_\mathcal{D}}{\int_{I_i}d\tilde\mu_\mathcal{D}},
	\end{equation*} 
	as an approximation of $\tilde \rho_{\mathcal{D}}$.
	We also define,
	\begin{equation*}
		{\rho}_{\Lambda,M_\ell}=\sum_{i=1}^{M_\ell} p_i^\prime \chi_{Q^{-1}(I_i)}(\lambda),\quad
		p_i^\prime=\frac{\int_{I_i}\frac{\rho_\mathcal{D}}{\tilde\rho_\mathcal{D}}\,d\tilde \mu_\mathcal{D}}{\int_{I_i}\,d\tilde\mu_\mathcal{D}}=\frac{P_\mathcal{D}(I_i)}{\tilde\mu_\mathcal{D}(I_i)}= \frac{P_\mathcal{D}(I_i)}{\mu_\Lambda(Q^{-1}(I_i))},
	\end{equation*} 
	as an approximation of ${\rho}_\Lambda$. By definition, ${\rho}_{\Lambda,M_\ell}(\lambda)=\rho^\prime_{\mathcal{D},M_\ell}(Q(\lambda))$.   By Theorem~\ref{thm:funsimpapp}, $\rho^\prime_{\mathcal{D},M_\ell}$ converges to $\rho_{\mathcal{D}}^\prime=\frac{\rho_\mathcal{D}}{\tilde\rho_\mathcal{D}}$, $\tilde\mu_{\mathcal{D}}$-a.e., so ${\rho}_{\Lambda,M_\ell}\to {\rho}_\Lambda$, $\mu_{\Lambda}$-a.e., where ${\rho}_\Lambda(\lambda)=\frac{\rho_\mathcal{D}(Q(\lambda))}{\tilde\rho_\mathcal{D}(Q(\lambda))}$ is the density for the uniform prior posterior. 
	
	Next, we construct the simple function approximations for the inner integral in the disintegration. The approximation of ${\rho}_{{\Lambda},M_\ell}$ is defined,
	\begin{equation*}
		{\rho}_{\Lambda,M_\ell,N_\jmath}=\sum_{j=1}^{N_\jmath} p_j \chi_{B_j}(\lambda),\quad
		p_j=\frac{P_\mathcal{D}(I_i)}{\sum_{Q(\lambda_k)\in I_i}\mu_\Lambda(B_k)} \ \text{if} \ \lambda_j\in Q^{-1}(I_i).
	\end{equation*} 
	Theorem~\ref{lem:SLLNS} implies
	$
	\sum_{k:Q(\lambda_k)\in I_i}\mu_\Lambda(B_k) \to \mu_\Lambda(Q^{-1}(I_i))$. 
	Theorem~\ref{thm:funsimpapp} implies  ${\rho}_{\Lambda,M_\ell,N_\jmath}(\lambda)$ $\to$ $ {\rho}_{{\Lambda},M_\ell}(\lambda)$ as $\jmath \to \infty$ for almost all $\lambda \in \textrm{int}\, (Q^{-1}(I_i))$ for some $I_i\in \mathcal{I}_{M_\ell}$ and therefore ${\rho}_{\Lambda,M_\ell,N_\jmath} \to {\rho}_{{\Lambda},M_\ell}$ a.e. as $\jmath\to \infty$.
	
	Actually, the convergence of ${\rho}_{\Lambda,M_\ell,N_\jmath}$ to  ${\rho}_{{\Lambda},M_\ell}$ is  uniform. Given $\epsilon > 0$, for every $I_i \in \mathcal{I}_{M_\ell}$, we choose $M_\jmath$ sufficiently large to guarantee
	\[
	\mu_{\Lambda}\bigg(\bigg(\bigcup_{\lambda_j\in Q^{-1}(I_i)} B_j\bigg) \triangle Q^{-1}(I_i)\bigg)<\epsilon.
	\]
	Thus, we can ensure that  $|p_j-p_i^\prime|<\epsilon$ for any $j$ with $\lambda_j\in Q^{-1}(I_i)$. The assumptions imply that  
	\[
	|{\rho}_{\Lambda,M_\ell,N_\jmath}(\lambda)-{\rho}_{\Lambda,M_\ell}(\lambda)|<\epsilon, \; \textrm{for} \; \lambda\in \Lambda^\prime\subset \Lambda \; \textrm{with} \; \mu_{{\Lambda}}(\Lambda\setminus \Lambda^\prime)<\epsilon.
	\] 
	Consequently, $|\max {\rho}_{\Lambda,M_\ell,N_\jmath}-\max {\rho}_{\Lambda, M_\ell}|<\epsilon$. Once we fix a resolution for the approximation of $\rho_{\mathcal{D}}$ by choosing $M_\ell$, we have to choose a sufficient number $N_\jmath$ of sample points in $\Lambda$ to obtain the maximum possible accuracy. The intuition is that we have to choose  sufficient samples in $\Lambda$ to accurately represent the geometry of the generalized contours, which of course are positioned in the higher dimensional space $\Lambda$.
	
	Next, we turn to the approximation result for ${P}_\Lambda$. Theorem~\ref{thm:funsimpapp} implies that the collection of densities $\{{\rho}_{\Lambda,M_\ell}\}_{\ell=1}^\infty$ is uniformly integrable. For each $M_\ell$, we show $\{{\rho}_{{\Lambda},M_\ell,N_\jmath}\}_{\jmath=1}^\infty$ is uniformly integrable. 	Given $\epsilon>0$, there exists $\delta>0$ such that $\int_C {\rho}_{\Lambda, M_\ell}\, d\mu_\Lambda <\epsilon$ for all $C\in \mathcal{B}_\Lambda$ satisfying $\mu_{\Lambda}(C)\leq\delta$. For $N_\jmath$ sufficiently large,  $|{\rho}_{\Lambda,M_\ell,N_\jmath}(\lambda)-{\rho}_{\Lambda, M_\ell}(\lambda)|<\epsilon/\delta$ for $\lambda\in\Lambda^\prime\subset \Lambda$ where $\mu_\Lambda(\Lambda\setminus\Lambda^\prime)<  \epsilon/(\max {\rho}_{\Lambda, M_\ell}+\epsilon/\delta)$, and $|\max {\rho}_{\Lambda,M_\ell,N_\jmath}-\max {\rho}_{\Lambda, M_\ell}|<\epsilon/\delta$. So, 
	\begin{align*}
		\int_C {\rho}_{\Lambda,M_\ell,N_\jmath}\, d\mu_{\Lambda} &=  \int_{C\cap \Lambda^\prime} {\rho}_{\Lambda,M_\ell,N_\jmath}\, d\mu_{\Lambda}+\int_{C\setminus \Lambda^\prime} {\rho}_{\Lambda,M_\ell,N_\jmath}\, d\mu_{\Lambda}\\
		&\leq\int_{C\cap \Lambda^\prime} ({\rho}_{\Lambda,M_\ell}+\epsilon/\delta)\, d\mu_{\Lambda} +\frac{\epsilon}{\max \overline{\rho}_{\Lambda, M_\ell}+\epsilon/\delta}(\max \overline{\rho}_{\Lambda, M_\ell}+\epsilon/\delta)\\ 
		&\leq \int_{C}\rho_{\Lambda, M_\ell}\, d\mu_{\Lambda}+\epsilon+\epsilon< 3\epsilon.
	\end{align*}
	for all $C\in \mathcal{B}_\Lambda$ satisfying $\mu_{\Lambda}(C)\leq\delta$. This shows the desired result. 
	
	If $A\in \mathcal{B}_\Lambda$ satisfies $\mu_\Lambda(\partial A)=0$, $A$ is approximated by $A_{N_\jmath}=\bigcup_{\lambda_j\in A}B_j$, and ${P}_\Lambda(A)$ is approximated by 
	\[
	{P}_{\Lambda, M_\ell,N_\jmath}(A_{N_\jmath})=\sum_{\lambda_j\in A} {P}_{\Lambda,M_\ell,N_\jmath}(B_j)=\int_{A_{N_\jmath}}{\rho}_{{\Lambda},M_\ell,N_\jmath}\, d\mu_{\Lambda}.
	\] 
	We prove ${P}_{\Lambda, M_\ell,N_\jmath}(A_{N_\jmath})$ converges to ${P}_\Lambda(A)$ in two steps. First, for fixed $M_\ell$, the Vitali Convergence Theorem implies
	\begin{equation} \label{nconst1}
		\int_\Lambda \big|{\rho}_{{\Lambda},M_\ell,N_\jmath}\, \chi_A-{\rho}_{\Lambda, M_\ell}\, \chi_A\big|\, d\mu_\Lambda\to 0 \; \textrm{as}\; \jmath\to \infty,
	\end{equation}
	while the uniform integrability of ${\rho}_{{\Lambda},M_\ell,N_\jmath}$ implies
	\begin{equation}\label{nconst2}
		\int_\Lambda \big|{\rho}_{{\Lambda},M_\ell,N_\jmath}\, \chi_{A_{N_\jmath}}-{\rho}_{{\Lambda},M_\ell,N_\jmath}\, \chi_A\big|\, d\mu_\Lambda=\int_{A_{N_\jmath}\triangle A} {\rho}_{{\Lambda},M_\ell,N_\jmath}\, d\mu_\Lambda\to 0  \; \textrm{as}\; \jmath\to \infty.
	\end{equation}
	Combining \eqref{nconst1} and \eqref{nconst2} shows
	\begin{equation} \label{ncon}
		\int_\Lambda \big|{\rho}_{{\Lambda},M_\ell,N_\jmath}\, \chi_{A_{N_\jmath}}-{\rho}_{\Lambda, M_\ell}\, \chi_A\big|\, d\mu_{\Lambda}\to 0 \ \textrm{as}\ \jmath\to \infty.
	\end{equation}
	Another application of the Vitali Convergence Theorem implies,
	\begin{equation}\label{mcon}
		\int_\Lambda \big|{\rho}_{{\Lambda},M_\ell}\, \chi_A-{\rho}_{\Lambda}\, \chi_A\big|\, d\mu_\Lambda\to 0 \;  \textrm{as}\; \ell\to \infty.
	\end{equation}
	Together \eqref{ncon} and \eqref{mcon} imply,
	\[
	\int_\Lambda \big|{\rho}_{{\Lambda},M_\ell,N_\jmath}\, \chi_{A_{N_\jmath}}-{\rho}_{\Lambda}\, \chi_A\big|\, d\mu_{\Lambda}\to 0 \;  \textrm{as} \; \jmath \to \infty \; \textrm{and then} \; \ell \to \infty,
	\]
	hence
	$
	\lim_{\ell\to \infty}\lim_{\jmath\to \infty}\int_{A_{N_\jmath}}{\rho}_{{\Lambda},M_\ell,N_\jmath}\, d\mu_{\Lambda}= \int_{A}{\rho}_{\Lambda}\, d\mu_{\Lambda} ={P}_\Lambda(A)$.
\end{proof}

\subsection{Proof of Theorem~\ref{thm:sampsolprop}}\label{sec:sampsolproof}
We now turn to the main estimation result.

\begin{proof}[Proof of Theorem~\ref{thm:sampsolprop}]
	$\quad$
	
	\noindent \textit{Theorem~\ref{thm:sampsolprop}: Part \ref{thm:sampsolprop1}}
	%
	%$p_i=p_i^{(\ell)} = \mu_\Lambda\big(Q^{-1}(I_{i}^{(\ell)})\big)$ and $\breve{p}_i=\breve{p}_i^{(\ell)} = \mu_\Lambda\big(A\cap Q^{-1}(I_{i}^{(\ell)})\big)$
	
	We begin by computing,
	\begin{align}
		E\left(\frac{\sum_{k=1}^{N_\jmath} \chi_{\lambda_k}(Q^{-1}(I_{i}^{(K)})\cap A ) }{ \sum_{j=1}^{N_\jmath} \chi_{\lambda_j}(Q^{-1}(I_{i}^{(K)})) }\right) &
		= \sum_{k=1}^{N_\jmath} E\,\left(
		\frac{ \chi_{\lambda_k}(Q^{-1}(I_{i}^{(K)})\cap A )}{\sum_{j=1}^{N_\jmath} \chi_{\lambda_j}(Q^{-1}(I_{i}^{(K)}))}
		\right) \nonumber \\
		&= \sum_{k=1}^{N_\jmath} \breve{p}_i^{(K)} \sum_{j=1}^{N_\jmath} \frac{1}{j} {N_\jmath-1 \choose j-1} (p_i^{(K)})^{j-1}(1-p_i^{(K)})^{N_\jmath-j} \nonumber \\
		%	&= N_\jmath \breve{p}_i^{(K)} \sum_{j=1}^{N_\jmath} \frac{(N_\jmath-1)!}{j!(N_\jmath-j)!}(p_i^{(K)})^{j-1}(1-p_i^{(K)})^{N_\jmath-j}  \nonumber \\
		&= \frac{\breve{p}_i^{(K)}}{p_i^{(K)}} \sum_{j=1}^{N_\jmath} 
		\frac{N_\jmath!}{j!(N_\jmath-j)!}(p_{i}^{(K)})^{j}(1-p_{i}^{(K)})^{N_\jmath-j}  \nonumber \\
		&= \frac{\breve{p}_i^{(K)}}{p_i^{(K)}}(1 - (1-p_i^{(K)})^{N_\jmath} ). \label{l1}
	\end{align}
	Then
	\begin{align}\label{part1Result}
		E(\hat{P}_{\Lambda,K,N_\jmath}(A)) &= \sum_{i=1}^{M_K} E\, \left(\frac{\sum_{j=1}^{N_\jmath} \chi_{\lambda_j}(Q^{-1}(I_{i}^{(K)})\cap A ) }{ \sum_{j=1}^{N_\jmath} \chi_{\lambda_j}(Q^{-1}(I_{i}^{(K)})) }\cdot \frac{1}{K} \sum_{k=1}^{K} \chi_{q_k}(I_{i}^{(K)})\right) \nonumber \\
		&= \sum_{i=1}^{M_K} E\, \left(\frac{\sum_{j=1}^{N_\jmath} \chi_{\lambda_j}(Q^{-1}(I_{i}^{(K)})\cap A ) }{ \sum_{j=1}^{N_\jmath} \chi_{\lambda_j}(Q^{-1}(I_{i}^{(K)})) } \right)\, E\, \left( \frac{1}{K} \sum_{k=1}^{K} \chi_{q_k}(I_{i}^{(K)}) \right) \nonumber \\
		& =\sum_{i=1}^{M_K} \frac{\breve{p}_i^{(K)}}{ p_i^{(K)}}(1 - (1-p_i^{(K)})^{N_\jmath} ) P_{\mathcal{D}}(I_{i}^{(K)}).
	\end{align}

	We define a sequence of measurable functions $\eta_{\mathcal{D},K}(q) = \sum_{i=1}^{M_K} \frac{\breve{p}_i^{(K)}}{ p_i^{(K)}} \chi_{q}(I_{i}^{(K)}) $ on $\mathcal{D}$. Disintegration gives
	\begin{equation*}
		\frac{\breve{p}_i^{(K)}}{p_i^{(K)}} = 
		\frac{\int_{Q\big(Q^{-1}(I_{i}^{(K)})\cap A\big)} {P}_{\mathrm{p},N}(Q^{-1}(I_{i}^{(K)})\cap A|q)\, d\tilde{P}_{\mathrm{p},\mathcal{D}} }{\int_{I_{i}} \, d\tilde{P}_{\mathrm{p},\mathcal{D}} } 
		= \frac{\int_{I_{i}^{(K)}} {P}_{\mathrm{p},N}(A|q)\,  d\tilde{P}_{\mathrm{p},\mathcal{D}}}{\int_{I_{i}^{(K)}} \, d\tilde{P}_{\mathrm{p},\mathcal{D}} }.
	\end{equation*}
	The Lebesgue Differentiation Theorem implies that $\eta_{\mathcal{D},K}(q) \to {P}_{\mathrm{p},N}(A|q)$ $P_\mathcal{D}$-a.e. The collection $\{\eta_{\mathcal{D},M_K} \}$ is uniform integrable with respect to $P_{\mathcal{D}}$ so the Vitali Convergence Theorem implies,
	$$
	\int_{\mathcal{D}} |\eta_{\mathcal{D},K}(q) - {P}_{\mathrm{p},N}(A|q) |\, dP_{\mathcal{D}} \to 0 \; \mathrm{and} \; 
	\lim_{K\to \infty} \sum_{i=1}^{M_K} \frac{\breve{p}_i^{(K)}}{p_i^{(K)}} \, P_{\mathcal{D}}(I_{i}^{(K)}) = {P}_\Lambda(A).
	$$
	Combining this with \eqref{Esampsol} yields the result.
	
	\noindent \textit{Theorem~\ref{thm:sampsolprop}: Part \ref{thm:sampsolprop2}}
	
	We start with an estimate of the second moment,
	\begin{align}
		E & \left(\sum_{i=1}^{M_K}\frac{\sum_{j=1}^{N_\jmath} \chi_{\lambda_j}(Q^{-1}(I_{i}^{(K)})\cap A ) }{ \sum_{j=1}^{N_\jmath} \chi_{\lambda_j}(Q^{-1}(I_{i}^{(K)})) } \chi_{q_k}(I_{i}^{(K)})\right)^2 \nonumber \\
		&= \iint \left(\left(\sum_{i=1}^{M_K}\frac{\sum_{j=1}^{N_\jmath} \chi_{\lambda_j}(Q^{-1}(I_{i}^{(K)})\cap A ) }{ \sum_{j=1}^{N_\jmath} \chi_{\lambda_j}(Q^{-1}(I_{i}^{(K)})) } \chi_{q_k}(I_{i}^{(K)})\right)^2 \,d P_{\mathcal{D}}\right)  \, d P_{\mathrm{p}} \nonumber \\
		&= \int \sum_{i=1}^{M_K} P_{\mathcal{D}}(I_{i}^{(K)}) \left(\frac{\sum_{j=1}^{N_\jmath} \chi_{\lambda_j}(Q^{-1}(I_{i}^{(K)})\cap A ) }{ \sum_{j=1}^{N_\jmath} \chi_{\lambda_j}(Q^{-1}(I_{i}^{(K)})) } \right)^2   \, d P_{\mathrm{p}} \nonumber \\
		&= \sum_{i=1}^{M_K} P_{\mathcal{D}}(I_{i}^{(K)}) \Bigg( \sum_{j=1}^{N_\jmath} \int   \frac{ \chi_{\lambda_j}(Q^{-1}(I_{i}^{(K)})\cap A ) }{ \big(\sum_{j=1}^{N_\jmath} \chi_{\lambda_j}(Q^{-1}(I_{i}^{(K)}))\big)^{2} }    \, d P_{\mathrm{p}} \nonumber \\
		& \qquad \qquad \qquad \qquad +\sum_{j\neq k}^{N_\jmath} \int \frac{\chi_{\lambda_j^{(\jmath)},\lambda_k^{(\jmath)}}(Q^{-1}(I_{i}^{(K)})\cap A ) }{\big(\sum_{j=1}^{N_\jmath} \chi_{\lambda_j}(Q^{-1}(I_{i}^{(K)}))\big)^{2}}  \, d P_{\mathrm{p}}   \Bigg) \nonumber \\
		&= \sum_{i=1}^{M_K} P_{\mathcal{D}}(I_{i}^{(K)}) \big(N_\jmath A_{1} +N_\jmath(N_\jmath-1)A_{2}\big). \label{a1a2}
	\end{align}
	
	Following an argument similar to the computation leading to \eqref{part1Result}, we first estimate,
	\begin{align}
		A_{1} &= \sum_{j=1}^{N_\jmath} \frac{1}{j^2} \breve{p}_i^{(K)} \binom{N_\jmath-1}{j-1} (p_i^{(K)})^{j-1}(1-p_i^{(K)})^{N_\jmath-j} \nonumber \\
		&\leq \sum_{j=1}^{N_\jmath}  \frac{2}{j(j+1)}\breve{p}_i^{(K)} \binom{N_\jmath-1}{j-1} (p_i^{(K)})^{j-1}(1-p_i^{(K)})^{N_\jmath-j} \nonumber \\
		&=\frac{2\breve{p}_i^{(K)}}{(N_\jmath+1)N_\jmath (p_i^{(K)})^{2}} \sum_{j=1}^{N_\jmath} \binom{N_\jmath+1}{j+1} (p_i^{(K)})^{j+1}(1-p_i^{(K)})^{N_\jmath-j} \nonumber \\
		&= \frac{2\breve{p}_i^{(K)}}{(N_\jmath+1)N_\jmath (p_i^{(K)})^{2}} \big(1 - (1-p_i^{(K)})^{N_\jmath}(1+N_\jmath p_i^{(K)}) 
		\big).\label{a1a2:a1}
	\end{align}
	
	For $A_{2}$,
	\begin{align}
		A_2 &=  \sum_{j=2}^{N_\jmath} \frac{1}{j^2} (\breve{p}_i^{(K)})^{2} \binom{N_\jmath-2}{j-2} (p_i^{(K)})^{j-2}(1-p_i^{(K)})^{N_\jmath-j} \nonumber \\
		&\leq (\breve{p}_i^{(K)})^2 \sum_{j=2}^{N_\jmath} \frac{1}{j(j-1)} \binom{N_\jmath-2}{j-2}(p_i^{(K)})^{j-2}(1-p_i^{(K)})^{N_\jmath-j} \nonumber \\
		&= \frac{(\breve{p}_i^{(K)})^2}{N_\jmath(N_\jmath-1)(p_i^{(K)})^{2}}\big(
		1- (1-p_i^{(K)})^{N_\jmath-1}(1+(N_\jmath-1) p_i^{(K)}) \big). \label{a1a2:a2}
	\end{align}
	
	Combining \eqref{Esampsol}, \eqref{a1a2}, \eqref{a1a2:a1}, and \eqref{a1a2:a2} yields,
	\begin{align}
		\mathrm{Var}\,\big(\hat{P}_{\Lambda,M_K,N_\jmath}(A) \big) &= \frac{1}{K^2} \sum_{k=1}^{K} \mathrm{Var}\, \Big(\sum_{i=1}^{M_K}\frac{\sum_{j=1}^{N_\jmath} \chi_{\lambda_j}(I_{i}^{(K)}\cap A ) }{ \sum_{j=1}^{N_\jmath} \chi_{\lambda_j}(Q^{-1}(I_{i}^{(K)})) } \chi_{q_k}(I_{i}^{(K)})\Big) \nonumber \\
		& \leq  \frac{1}{K} \Bigg( \sum_{i=1}^{M_K}  P_{\mathcal{D}}(I_{i}^{(K)}) \Bigg(\frac{2\breve{p}_i^{(K)}}{(N_\jmath+1)(p_i^{(K)})^{2}} \big(1 - (1-p_i^{(K)})^{N_\jmath}(1+N_\jmath p_i^{(K)})) \nonumber \\
		&\qquad \qquad + \frac{(\breve{p}_i^{(K)})^2}{(p_i^{(K)})^{2}}(
		1- (1-p_i^{(K)})^{N_\jmath-1}(1+(N_\jmath-1)p_i^{(K)}) ) \Bigg) \nonumber \\
		&\qquad \qquad \qquad- \Bigg(\sum_{i=1}^{M_K} \frac{\breve{p}_i^{(K)}}{ p_i^{(K)}}(1 - (1-p_i^{(K)})^{N_\jmath} ) P_{\mathcal{D}}(I_{i}^{(K)}) \Bigg)^2
		\Bigg). \label{varub}
	\end{align}
	This implies \eqref{Varsampsol} and the convergence result.
	
	\noindent \textit{Theorem~\ref{thm:sampsolprop}: Part \ref{thm:sampsolprop3}}
	
	By the Law of Large Numbers, 
	$$
	\lim_{\jmath \to \infty}\frac{\sum_{j=1}^{N_\jmath} \chi_{\lambda_j}(Q^{-1}(I_{i}^{(K)})\cap A ) }{ \sum_{j=1}^{N_\jmath} \chi_{\lambda_j}(Q^{-1}(I_{i}^{(K)})) } = \frac{P_{\mathrm{p}}(Q^{-1}(I_{i}^{(K)})\cap A)}{P_{\mathrm{p}}(Q^{-1}(I_{i}^{(K)}))} \quad 
	\mathrm{a.e.}
	$$
	We write,
	\begin{align}
		\lim_{\jmath\to \infty}& \hat{P}_{\Lambda,M_K,N_\jmath}(A) = \sum_{i=1}^{M_K} \frac{\breve{p}_i^{(K)}}{p_i^{(K)}} \cdot \frac{1}{K} \sum_{k=1}^{K}\chi_{q_k}(I_{i}^{(K)}) \nonumber \\
		=& \sum_{i=1}^{M_K} \frac{\breve{p}_i^{(K)}}{p_i^{(K)}}\left( 
		\frac{1}{K} \sum_{k=1}^{K} \chi_{q_k}(I_{i}^{(K)}) -P_{\mathcal{D}}(I_{i}^{(K)}) \right) 
		+ \sum_{i=1}^{M_K} \frac{\breve{p}_i^{(K)}}{p_i^{(K)}} P_{\mathcal{D}}(I_{i}^{(K)}). \label{cons2}
	\end{align}
	\eqref{Esampsol} implies that
	$
	\sum_{i=1}^{M_K} \frac{\breve{p}_i^{(K)}}{p_i^{(K)}} P_{\mathcal{D}}(I_{i}^{(K)}) \to {P}_\Lambda(A)$  a.e. We set,
	\begin{align}
		\sum_{i=1}^{M_K} & \frac{\breve{p}_i^{(K)}}{p_i^{(K)}}\left( 
		\frac{1}{K} \sum_{k=1}^{K} \chi_{q_k}(I_{i}^{(K)}) -P_{\mathcal{D}}(I_{i}^{(K)}) \right)\nonumber\\
		& =
		\frac{1}{K}\sum_{k=1}^{K} \left( \sum_{i=1}^{M_K}	\frac{\breve{p}_i^{(K)}}{p_i^{(K)}}(\chi_{q_k}(I_{i}^{(K)}) - P_{\mathcal{D}}(I_{i}^{(K)})) \right)  = \frac{1}{K}\sum_{k=1}^{K} X^{(K)}_k .\nonumber
	\end{align}
	Since $\{X^{(K)}_k\}$ is bounded, Hoeffding's inequality implies that for $\epsilon > 0$,
	$$
	\sum_{K=1}^{\infty} P_{\mathcal{D}}\Big(\Big\vert \frac{1}{K} \sum_{k=1}^{K} X^{(K)}_k  \Big\vert \geq \epsilon\Big) < \infty .
	$$
	The First Borel-Cantelli Lemma implies $\lim_{K\to \infty} \frac{1}{K}\sum_{k=1}^{K} X^{(K)}_k= 0$ a.s.
	
\end{proof}

\subsection{Proof of  results in \S~\ref{sec:asymperror}}\label{sec:asymperrorproof}

\begin{proof}[Proof of Theorem~\ref{CLTerror2}]
	
	We define the triangular array,
	\begin{equation*}
		X_k^{(K)} = 
		\sum_{i=1}^{M_K} \frac{\breve{p}_i^{(K)}}{p_i^{(K)}} \Big(  \chi_{q_k}(I_{i}^{(K)})-	P_{\mathcal{D}}(I_{i}^{(K)})\Big).
	\end{equation*}
	We set $S_{K} = \sum_{k=1}^{K} X_k^{(K)}$ and $s_{K}^2  = \mathrm{Var}\,(S_{K}) = K \mathrm{Var}\,(X_k^{(K)})$. We have
	\[
	s_{K}^2  = K\, \left(
	\sum_{i=1}^{M_K} \left(\frac{\breve{p}_i^{(K)}}{p_i^{(K)}}\right)^2   P_{\mathcal{D}}(I_{i}^{(K)})\; -	\;
	\sum_{i=1}^{M_K} \left(\frac{\breve{p}_i^{(K)}}{p_i^{(K)}}   P_{\mathcal{D}}(I_{i}^{(K)})\right)^2
	\right)
	\]
	so
	$
	\lim_{K \to \infty} \frac{s_{K}^2}{K} = \sigma_{\mathrm{p}}^2 $. Markov's inequality implies that for $\epsilon > 0$, $P_{\mathcal{D}}\big(|X_k^{(K)} - E(X_k^{(K)})| > \epsilon s_{K} \big) < \frac{1}{\epsilon K}$. So, Lindeberg's condition,
	$$
	\lim_{K\to \infty} \frac{1}{s_{K}^2} \, \sum_{k=1}^{K} E \Big(
	\big(X_k^{(K)} - E(X_k^{(K)})\big)^2 \chi_{\{ \vert X_k^{(K)} - E(X_k^{(K)})\vert > \epsilon s_{K} \}}
	\Big) = 0,
	$$
	is satisfied. The Central Limit Theorem implies
	$
	\frac{1}{s_K} \,\sum_{k=1}^K X_k^{(K)}  \stackrel{d}{\to } \mathcal{N}(0,1)$.
\end{proof}

\begin{proof}[Proof of Theorem~\ref{CLTerror3}]
	
	From the proof of Theorem~\ref{thm:sampsolprop}, we know that
	\[
	E(T_{3,i,\jmath}) = -\frac{\breve{p}_i^{(K)}}{ p_i^{(K)}}\big( (1-p_i^{(K)})^{N_\jmath} \big) ,
	\]
	and
	\begin{align*}
		\mathrm{Var}\,(T_{3,i,\jmath}) & =   \frac{2\breve{p}_i^{(K)}}{(N_\jmath+1)(p_i^{(K)})^{2}} \big(1 - (1-p_i^{(K)})^{N_\jmath}(1+N_\jmath p_i^{(K)})) \\
		& \qquad + \left(\frac{(\breve{p}_i^{(K)})}{(p_i^{(K)})}\right)^2 \, (
		1- (1-p_i^{(K)})^{N_\jmath-1}(1+(N_\jmath-1)p_i^{(K)}) )\\
		&\qquad \qquad - \left(\frac{(\breve{p}_i^{(K)})}{(p_i^{(K)})}\right)^2 \big(1 - (1-p_i^{(K)})^{N_\jmath} \big)^2.
	\end{align*}
	The result follows.
\end{proof}
Both $E(T_{3,i,\jmath})$ and $\mathrm{Var}\,(T_{3,i,\jmath})$ become small as $\jmath \to \infty$ because $(1-p_i^{(K)})^{N_\jmath}$ becomes small.  However, $p_i^{(K)}$ approaches $0$.  To bound $(1-p_i^{(K)})^{N_\jmath} \leq \epsilon$ for $\epsilon > 0$,  we need to choose $N_\jmath$ so
\[
N_\jmath \geq \frac{\log(\epsilon)}{\log(1-p_i^{(K)})} \approx -\frac{\log(\epsilon)}{p_i^{(K)}}.
\]

\subsection{Proof of Theorem~\ref{thm:gensampsolprop}}\label{sec:gensampsolproof}

\begin{proof}
	As shown in the proof of \textit{Theorem~\ref{thm:sampsolprop}: Part \ref{thm:sampsolprop3}}, it suffices to show that
	\begin{equation}\label{gensampconverg}
		\sum_{i=1}^{M_K} \frac{\breve{p}_i^{(K)}}{p_i^{(K)}}\left( 
		\hat{P}_{\mathcal{D}}(I_{i}^{(K)})	  -P_{\mathcal{D}}(I_{i}^{(K)}) \right) \to 0 \quad \textrm{a.s.}
	\end{equation}
	We estimate,
	\begin{align*}
		\left|	\sum_{i=1}^{M_K} \frac{\breve{p}_i^{(K)}}{p_i^{(K)}}\left( 
		\hat{P}_{\mathcal{D}}(I_{i}^{(K)})	  -P_{\mathcal{D}}(I_{i}^{(K)}) \right) \right|
		& = \left| \sum_{i=1}^{M_K} \frac{\breve{p}_i^{(K)}}{p_i^{(K)}}\left( 
		\int_{I_{i}^{(K)}} \big(\hat{\rho}_{\mathcal{D}}(q)-	{\rho}_{\mathcal{D}}(q)\big) \, d\mu_{\mathcal{D}}(q)\right)\right|\\
		& \leq  \sum_{i=1}^{M_K} 
		\int_{I_{i}^{(K)}} \big| \hat{\rho}_{\mathcal{D}}(q)-	{\rho}_{\mathcal{D}}(q)\big| \, d\mu_{\mathcal{D}}(q)\\
		& = \int_{\mathcal{D}} \big| \hat{\rho}_{\mathcal{D}}(q)-	{\rho}_{\mathcal{D}}(q)\big| \, d\mu_{\mathcal{D}}(q).
	\end{align*}
	The result follows by assumption.
	
\end{proof}

\section{Details for the dropping ball experiment}\label{droppingballsextra}

\subsection{Experimental details}
Relevant physical parameters of the balls obtained from \cite{hyperphysics} are listed in Table~\ref{tab:ballchar}. Additionally, the temperature was about $65^\circ F$. The data from the experiment are listed in Table~\ref{tab:balldata}.
\begin{table}[h]
	\centering
	\begin{tabular}{|l|c|c|c|c|}
		\hline
		type & mass & diameter &Coefficient of Drag ($\mathrm{C}_{\mathrm{d}}$)& Terminal Velocity \\
		\hline
		golf ball & 46g & 42.7mm& .45 & 25 m/s \\
		\hline
		tennis ball	& 57g & 65mm & .6 & 20 m/s\\
		\hline 
		baseball & 142g & 70.8mm & .3 & 40 m/s \\
		\hline
		volleyball & 260g & 210mm & .12-.54 & 15.9 m/s\\
		\hline
		basketball & 591g & 240mm & .54 & 20.2 m/s\\
		\hline
		bowling ball & 7.26kg & 218mm & .47 & 83.1 m/s \\
		\hline
	\end{tabular}
	\caption{Physical parameters of the balls in the experiment \cite{hyperphysics}.}
	\label{tab:ballchar}
\end{table}
\begin{table}[h]
	\centering
	\begin{tabular}{|r|l|c|c|c|c|c|}
		\hline
		&baseball & basketball & volleyball & bowling ball & golf ball & tennis ball \\
		\hline
		&2.8367 & 2.9033 & 2.6033 & 2.7383 & 2.7700 & 3.0367 \\
		&2.8383 & 3.0050 & 3.0700 & 2.7717 & 2.8367 & 3.0717 \\
		&       & 2.8383 & 3.1383 & 2.7367 & &  \\
		&       & 2.9033 & & & &  \\
		&       & 2.8700 & & & &  \\
		\hline
		\hline
		average	&2.8375 & 2.9040 & 2.9372 & 2.7489 & 2.8034 &  3.0542\\
		\hline
	\end{tabular}
	\caption{Flight times in seconds of the balls in the experiment. Last row shows averages.}
	\label{tab:balldata}
\end{table}

\section{The algorithm for the accept-reject estimator}\label{sec:rejectalg}

The accept-reject estimator algorithm is given in Algorithm~\ref{alg:accept-reject}.

\normalem
\begin{algorithm}[htb]
	\caption{Accept-Reject Algorithm}\label{alg:accept-reject}
	\begin{algorithmic}[1]
		\STATE \textbf{Initialization:} \;
		
		\STATE $\qquad$ Observed data $\{q_k\}_{k=1}^K$ \;
		
		\STATE $\qquad$ Partition $\{I_i\}_{i=1}^M$ of $\mathcal{D}$\;
		
		\STATE \textbf{Estimation:} \;
		
		\STATE $\qquad$ Generate independent samples $\{\lambda_i\}_{i=1}^N \in \Lambda$ from $P_{\mathrm{p}}$ and compute $\{Q(\lambda_i)\}_{i=1}^N$\;
		
		\STATE $\qquad$ Construct $\hat{\rho}_{\mathcal{D}}$ and $\widehat{\tilde{\rho}}_{\mathrm{p},\mathcal{D}}$ and set  $C = \max\Lambda \frac{\hat{\rho}_{\mathcal{D}}}{\widehat{\tilde{\rho}}_{\mathrm{p},\mathcal{D}}}$\;

		\FOR{$j = 1$ to $N$}
		
		\STATE	Generate a random sample $\xi$ from the uniform distribution on $[0,1]$\;
		
		\STATE	\IF{$\xi > \frac{\hat{\rho}_{\mathcal{D}}(Q(\lambda_j))}{
				\widehat{\tilde{\rho}}_{\mathrm{p},\mathcal{D}}(Q(\lambda_j))C}$}
		
		\STATE remove $\lambda_j$ from the collection	
		\ENDIF 
		\ENDFOR
		
		\STATE Return independent samples $\{\breve{\lambda}_\ell\}_{\ell=1}^{\breve{N}} \subset \{\lambda_j\}_{j=1}^N$\;
	\end{algorithmic}% end for begin
	
\end{algorithm}

\end{document}